\theoremstyle{plain}
\newtheorem{theorem}{Theorem}
\newtheorem{lemma}[theorem]{Lemma}
\newtheorem{corollary}[theorem]{Corollary}
\newtheorem{proposition}[theorem]{Proposition}
\newtheorem{claim}{Claim}
\theoremstyle{definition}
\newtheorem{definition}{Definition}
\newtheorem{example}{Example}
\theoremstyle{remark}
\newtheorem*{remark}{Remark}
\theoremstyle{remark}
\newcommand{\stat}{z}
\newcommand{\states}{Z}
\DeclarePairedDelimiter{\abs}{\lvert}{\rvert}
\newcommand{\integers}{\mathbb{Z}}
\newcommand{\Rats}{\mathbb{Q}}
\newcommand{\Reals}{\mathbb{R}}
\newcommand{\Nats}{\mathbb{N}}
 \newcommand{\sdom}{\mathsf{DOM}}
 \newcommand{\ptransf}{\Delta}
\newcommand{\powerset}[1]{\mathfrak{P}{(#1)}}
\newcommand{\Lap}[1]{\mathsf{Lap}{(#1)}}
\newcommand{\DLap}[1]{\mathsf{DLap}{(#1)}}
\newcommand{\pexp}[1]{\mathsf{Exp}{(#1)}}
\newcommand{\euler}{e}
\newcommand{\eulerv}[1]{\ensuremath{\euler^{#1}}}
\newcommand{\cU}{\mathcal{U}}
\newcommand{\cV}{\mathcal{V}}
\newcommand{\cQ}{\mathcal{Q}}
\newcommand{\Prob}{\mathsf{Prob}}
\newcommand{\st}  {\mathbin{|}}
\newcommand{\set}[1]{\{#1\}}
\newcommand{\norm}[1]{\mathbin{||}#1 \mathbin{||}}
\newcommand{\true}{\mathsf{true}}
\newcommand{\false}{\mathsf{false}}
\newcommand{\cA}{\mathcal{A}}
\newcommand{\cD}{\mathcal{D}}
\newcommand{\cL}{\mathcal{L}}
\newcommand{\cX}{\mathcal{X}}
\newcommand{\cB}{\mathcal{B}}
\newcommand{\cZ}{\mathcal{Z}}
\newcommand{\cR}{\mathcal{R}}
\newcommand{\sF}{\mathcal{F}}
\newcommand{\mx}{\mathsf{max}}
\newcommand{\bv}{\mathsf{b}}
\newcommand{\dv}{\mathsf{x}}
\newcommand{\iv}{\mathsf{z}}
\newcommand{\rv}{\mathsf{r}}
\newcommand{\mbar} {\mathbin{|}}
\newcommand{\Ifs}{\textsf{if}}
\newcommand{\Thens}{\textsf{then}}
\newcommand{\eds}{\textsf{end}}
\newcommand{\Elses}{\textsf{else}}
\newcommand{\Whiles}{\textsf{while}}
\newcommand{\Dos}{\textsf{do}}
\newcommand{\ifstatement}[3]{\Ifs \, #1\, \Thens\, #2 \, \Elses\, #3\, \eds}
\newcommand{\whilestatement}[2]{\Whiles \, #1\, \Dos\, #2\, \eds }
\newcommand{\ext}{\textsf{exit}}
\newcommand{\sLabels}{\mathsf{Labels}}
\newcommand{\chose}{\textsf{choose}}
\newcommand{\exit}{\textsf{exit}}
\newcommand{\distance}{distance to disagreement}
\DeclareMathOperator*{\argmax1}{arg\,max}
\newcommand{\rlin} {{\mathfrak R}_{+}}
\newcommand{\rpol} {{\mathfrak R}_{+,\times}}
\newcommand{\rexp} {{\mathfrak R}_{\mathsf{exp}}}
\newcommand{\lngreal} {\cL_{\mathsf{exp}}}
\newcommand{\threal} {\mathsf{Th}_{\mathsf{exp}}}
\newcommand{\fthreal} {\mathsf{Th}_{\mathsf{exp}}^{\mathsf{full}}}
\newcommand{\folreal} {\mathsf{Th}_{+}}
\newcommand{\foreal} {\mathsf{Th}_{+,\times}}
\newcommand{\tuple}[1] {\langle #1 \rangle}
\newcommand{\s}[1] {\mathsf{#1}}
\newcommand{\ourlang} {{\sf DiPWhile}}
\newcommand{\newlang} {{\sf DiPWhile+}}
\newcommand{\tool} {{\sf DiPC}}
\newcommand{\sem}[1] {[\![#1]\!]}
\newcommand{\rmv}[1] {}
\newcommand{\dom} {\mathsf{dom}}
\newcommand{\cmark}{\ding{51}}
\newcommand{\xmark}{\ding{55}}
\newcommand{\toolplus} {{\sf DiPC+}}
\newcommand{\removelatexerror}{\let\@latex@error\@gobble}
\newcommand{\dd}{\mathsf{dd}}
\newcommand{\ddf}[2]{\dd_{#1,#2}}
\newcommand{\din}{d}
\newcommand{\dout}{d'}
 \newcommand{\psin}{\psi_{in}}
 \newcommand{\psiout}{\psi_{out}}
  \newcommand{\psipr}{\psi_{pr}}
  \newcommand{\phin}{\phi_{in}}
 \newcommand{\phiout}{\phi_{out}}
  \newcommand{\phipr}{\phi_{pr}}
\newcommand{\Iag} {I_{\alpha,\gamma}}
\newcommand{\Tag} {\theta_{\alpha,\gamma}}
\newcommand{\vrin}{\overline{\mathsf{vr}_{in}}}
\newcommand{\vrout}{\overline{\mathsf{vr}_{out}}}
\newcommand{\ovr}{\overline{\mathsf{vr}}}
\newcommand{\rvin}{\overline{\rv_{in}}}
\newcommand{\rvout}{\overline{\rv_{out}}}
\newcommand{\orv}{\overline{\rv}}
\newcommand{\reg}{\eta}
\newcommand{\regf}[2]{\eta (#1, #2)}
 \renewcommand{\det}{\mathsf{det}}
\newcommand{\svtone}{\textsf{SparseVariant}}
\newcommand{\svttwo}{\textsf{Sparse}}
\newcommand{\nsvt}{\textsf{NumericSparse}}
\newcommand{\athres}{\textsf{AboveThreshold}}
\newcommand{\nmax}{\textsf{NoisyMax}}
\newcommand{\laplace}{\textsf{Laplace}}
\newcommand{\exponential}{\textsf{Exponential}}
\g@addto@macro\bfseries{\boldmath}
\newcommand*{\AppendixTrue}{}%
\begin{document}

	\title[Deciding Accuracy]{Deciding Accuracy of Differential Privacy Schemes}         


	\author{Gilles Barthe}
	\affiliation{
		\institution{Max Planck Institute for Security and Privacy, Bochum}            
		\country{Germany}                    
	}
	\email{gjbarthe@gmail.com}          

	\author{Rohit Chadha}
	\affiliation{
		\institution{University of Missouri}            
		\country{USA}                    
	}
	\email{chadhar@missouri.edu}          

	\author{Paul Krogmeier}
	\affiliation{
		\institution{University of Illinois, Urbana-Champaign}            
		\country{USA}                    
	}
	\email{paulmk2@illinois.edu}          

	\author{A.~Prasad Sistla}
	\affiliation{
		\institution{University of Illinois, Chicago}            
		\country{USA}                    
	}
	\email{sistla@uic.edu}          

	\author{Mahesh Viswanathan}
	\affiliation{
		\institution{University of Illinois, Urbana Champaign}            
		\country{USA}                    
	}
	\email{vmahesh@illinois.edu}          


\begin{abstract}
  Differential privacy is a mathematical framework for developing
  statistical computations with provable guarantees of privacy and
  accuracy. In contrast to the privacy component of differential
  privacy, which has a clear mathematical and intuitive meaning, the
  accuracy component of differential privacy does not have a general
  accepted definition; accuracy claims of differential privacy
  algorithms vary from algorithm to algorithm and are not
  instantiations of a general definition. We identify program
  discontinuity as a common theme in existing \emph{ad hoc} definitions and
  introduce an alternative notion of accuracy parametrized by, what we
  call, {\distance} --- the {\distance} of an input $x$ w.r.t.\, a
  deterministic computation $f$ and a distance $d$, is the minimal
  distance $d(x,y)$ over all $y$ such that $f(y)\neq f(x)$.  We show
  that our notion of accuracy subsumes the definition used in
  theoretical computer science, and captures known accuracy claims
  for differential privacy algorithms. In fact, our general notion of
  accuracy helps us prove better claims in some cases. Next, we study
  the decidability of accuracy. We first show that accuracy is in general
  undecidable. Then, we define a non-trivial class of probabilistic
  computations for which accuracy is decidable (unconditionally, or
  assuming Schanuel's conjecture).  We implement our decision
  procedure and experimentally evaluate the effectiveness of our
  approach for generating proofs or counterexamples of accuracy for
  common algorithms from the literature.
\end{abstract}



\begin{CCSXML}
<ccs2012>
<concept>
<concept_id>10002978.10002986.10002990</concept_id>
<concept_desc>Security and privacy~Logic and verification</concept_desc>
<concept_significance>500</concept_significance>
</concept>
<concept>
<concept_id>10011007.10011074.10011099.10011692</concept_id>
<concept_desc>Software and its engineering~Formal software verification</concept_desc>
<concept_significance>500</concept_significance>
</concept>
</ccs2012>
\end{CCSXML}

\ccsdesc[500]{Security and privacy~Logic and verification}
\ccsdesc[500]{Software and its engineering~Formal software verification}


\keywords{accuracy, differential privacy, decidability}  

	\maketitle

	\section{Introduction}
\label{sec:intro}

Differential privacy~\cite{DMNS06,DR14} is a mathematical framework
for performing privacy-preserving computations over sensitive
data. One important feature of differential privacy algorithms is their ability to achieve
provable individual privacy guarantees and at the same time ensure that the outputs are reasonably accurate.
In the case of privacy, these guarantees relate executions of the
differentially private algorithm on adjacent databases. These privacy
guarantees are an instance of relational properties, and they have
been extensively studied in the context of program
verification~\cite{RP10,BKOZ13,GHHNP13,ZK17,AH18,BartheCJS020} and program
testing~\cite{BichselGDTV18,DingWWZK18}. In the case of accuracy,
these guarantees relate the execution of the differentially private
algorithm to that of an \lq\lq ideal\rq\rq\ algorithm, which can be
assumed to be deterministic. Typically, \lq\lq ideal\rq\rq\ algorithms
compute the true value of a statistical computation, while
differentially private algorithms compute noisy versions of the right
answer. The precise relationship between the output of the ``ideal''
algorithm and the differentially private one varies from algorithm to
algorithm, and no general definition of accuracy has been proposed in
this context. In some cases, the definition of accuracy is similar to
the one used in the context of randomized algorithm
design~\cite{motwani-raghavan}, where we require that the output of
the differentially private algorithm be close to the true output (say
within distance $\gamma$) with high probability (say at least
$1-\beta$). Such a notion of accuracy is similar to computing error
bounds, and there has been work on formally verifying such a
definition of accuracy on some
examples~\cite{BGGHS16-icalp,SmithHA19,LoboRG20}.  Unfortunately,
prior work on formal verification of accuracy suffers from two
shortcomings:
\begin{itemize}
\item the lack of a general definition for accuracy: as pointed out,
  each differential privacy algorithm in the literature has its own
  specific accuracy claim that is not an instantiation of a
  general definition. The lack of a general definition for accuracy
  means that the computational problem of ``verifying accuracy'' has not
  been defined, which prevents its systematic study.

\item imprecise bounds: accuracy bounds in theoretical papers are often established using
  concentration bounds, e.g.\, Chernoff bound, by hand. Existing verification frameworks, with the exception
  of \cite{LoboRG20}, cannot be used  to verify accuracy
  claims that are established using concentration bounds. This is due to the fact that applying concentration bounds generally requires proving independence, which
  is challenging. Thus, with the exception
  of \cite{LoboRG20}, usually only weaker versions of accuracy claims can be verified automatically by the existing techniques.
\end{itemize}
This paper overcomes both shortcomings by proposing a general notion of
accuracy and by proving decidability of accuracy for a large class of
algorithms that includes many differentially private algorithms from
the literature.

\subsubsection*{Technical contributions}
Our focus is the verification of accuracy claims for differential privacy algorithms that aim to bound the error of getting the \emph{correct} answer. 
Other notions of utility or accuracy, such as those that depend on variance and other moments are out of the scope of this paper. Our contributions in this space are three-fold: (a) we give a definition of accuracy that captures accuracy claims known in the literature, (b) study the computational problem of checking accuracy, as identified by our definition of accuracy, and (c) perform an experimental evaluation of our decision procedure.

\paragraph{\textbf{General definition of accuracy.}} The starting point of our work is the (well-known) observation that the usual
definition used in theoretical computer science to measure the utility
of a randomized algorithm (informally discussed above) fails to
adequately capture the accuracy claims made for many differential
privacy algorithms. To see why this is the case, consider {\svttwo}
(also called Sparse Vector Mechanism or SVT). The problem solved by
{\svttwo} is the following: given a threshold $T$, a database $x$, and
a list of queries of length $m$, output the list of \emph{indices} of
the first $c$ queries whose output on $x$ is greater than or equal to
$T$. {\svttwo} solves this problem while maintaining the privacy of
the database $x$ by introducing noise to the query answers as well as
to $T$ when comparing them. Because of this, {\svttwo}'s answers
cannot be accurate with high probability, if the answers to the query
are very close to $T$ --- the correct answer is ``discontinuous''
near $T$ while the probability distributions of the introduced noise are continuous functions. Thus, we can expect {\svttwo} to be accurate only when all
query answers are bounded away from $T$. This is what the known
accuracy claim in the literature proves.

The first contribution of
this paper is a more general notion of accuracy that takes into  account  the
\emph{\distance} w.r.t.\, the \lq\lq ideal\rq\rq\ algorithm that is
necessary in accuracy claims for differential privacy
algorithms. Informally, an input $u$ of a deterministic computation
$f$ has {\distance} $\alpha$ (with respect to a distance function $d$
on the input space) if $\alpha$ is the largest number such that whenever $f(u)\ne f(v)$ then $d(u,v)\geq \alpha.$ 
This notion of {\distance} is inspired from the
Propose-Test-Release mechanism~\cite{DR14}, but we use it for the
purpose of accuracy rather than privacy.  More precisely, our notion
of accuracy requires that for every input $u$ whose {\distance} is
greater than
 $\alpha$, the output of the differentially private algorithm be
within a distance of $\gamma$ from the correct output, with probability
at least $1-\beta$. The traditional accuracy definition, used in the literature
on randomized algorithms, is obtained by setting $\alpha=0$. Our definition
captures most \emph{ad hoc} accuracy claims known in the literature
for different differential privacy algorithms. Our definition is reminiscent of
accuracy definitions that use three parameters in \citet{BlumLR13} and \citet{BhaskarLST10}. However, neither 
of these definitions have a notion like {\distance}~\footnote{A more detailed comparison with these definitions can be found in Section~\ref{sec:accuracy-def}.}.

We show that the additional degree of flexibility in our definition of accuracy with the introduction of parameter $\alpha$, can be exploited to improve known accuracy bounds for {\nsvt}, a
variant of {\svttwo} that returns the noised query answers when they are above a threshold. Specifically, the accuracy bound
from~\citet{DR14} translates in our framework to
$(\alpha, \beta,\alpha)$-accuracy for all $\alpha$, and for
$\beta=\beta_0(\alpha)$ for some function $\beta_0$. In contrast, we
can prove $(\alpha,\beta,\gamma)$-accuracy for all $\alpha,\gamma$ and
$\beta=\beta_1(\alpha,\gamma)$. Our result is more general and more
precise, since $\beta_1(\alpha,\alpha)$ is approximately
$\frac{1}{2}\beta_0 (\alpha)$ for all values of $\alpha$.

\paragraph{\textbf{Deciding Accuracy.}}
Establishing a general definition of accuracy allows us to study the  
decidability of the problem of checking
accuracy. Differential privacy
algorithms are typically parametrized by the privacy budget
$\epsilon$, where program variables are typically sampled from
distributions whose parameters depend on $\epsilon$. Thus, verifying a
property for a differential privacy algorithm is to verify an
\emph{infinite family} of programs, obtained by instantiating the
privacy budget $\epsilon$ to different values. We, therefore, have two
parametrized verification problems, which we respectively call the
\emph{single-input} and \emph{all-inputs} problems. These problems state: given a
parametrized program $P_\epsilon$, an interval $I$ and accuracy bounds
$(\alpha,\beta,\gamma)$ that may depend on $\epsilon$, is $P_\epsilon$
$(\alpha,\beta, \gamma)$-accurate at input $u$ (resp. at all inputs)
for all possible values of $\epsilon\in I$?

We first show that accuracy is in general undecidable, both for the
single-input and all-inputs variants. Therefore, we focus on
decidability for some specific class of programs. We follow the
approach from~\citet{BartheCJS020}, where the authors propose a
decision procedure for $(\epsilon,\delta)$-differential privacy of a
non-trivial class of (parametric) programs, {\ourlang}, with a finite number of inputs and output variables taking values in a finite domain. Specifically, we carve out
a class of programs, called {\newlang}, whose operational semantics
have a clever encoding as a finite state discrete-time Markov
chain. Then, we use this encoding to reduce the problem of accuracy to
the theory of reals with exponentials. Our class of programs is larger
than the class of programs {\ourlang}; it
supports the use of a finite number of real input and real output variables and permits the
use of real variables as means of Laplace distributions when sampling
values. 

We show that checking accuracy for both \emph{single-input} and \emph{all-inputs} is decidable for {\newlang} programs including those with
real input and output variables, assuming Schanuel's
conjecture. Schanuel's conjecture is a long-standing open problem in
transcendental number theory, with deep applications in several areas
of mathematics. In our proof, we use a celebrated result of~\citet{mw96}, which shows that Schanuel's conjecture entails
decidability of the theory of reals with exponentials. Our decidability proof essentially encodes the \emph{exact} probability of the algorithm yielding an
output that is $\gamma$ away from the correct answer. As we calculate exact probabilities, we do not have to resort to concentration bounds for verifying accuracy.

We also identify sufficient conditions under which the \emph{single-input} problem is decidable for {\newlang} programs \emph{unconditionally}, i.e., without assuming Schanuel's conjecture. These unconditional decidability results rely on two crucial observations. 
First, to check accuracy at specified input $u$, it suffices to set $\alpha$ to the distance to disagreement for $u$. 
This is because $\beta$ decreases when $\alpha$ increases. 
Secondly, given a {\newlang} program $P$ and real number $\gamma,$ 
we can often construct a new {\newlang} program $P^{\mathsf{new}}$ such that  $P^{\mathsf{new}}$ outputs true on input $u$ if and only if the output produced by $P$
on $u$ is at most $\gamma$ away from the correct answer. This allows us only to consider the programs that produce outputs from a finite domain. The single-input accuracy problem can then be 
expressed in \citet{mccallum2012deciding}'s decidable
fragment of the theory of reals with
exponentials.

An immediate consequence of our unconditional decidability results is that the \emph{all-inputs} accuracy problem is decidable for programs with inputs and
outputs that come from a finite domain. This is essentially the class of programs {\ourlang}, for which differential privacy was shown to be decidable in \citet{BartheCJS020}. Our decidability results are summarized in Table~\ref{tab:test} on Page~\pageref{tab:test}.

\paragraph{\textbf{Experimental Evaluation.}} 
We adapt the {\tool} tool
from \citet{BartheCJS020} to verify accuracy bounds at given inputs and evaluate it on
many examples from the literature. Our adaptation takes as input a
program in {\newlang}, constructs a sentence in the decidable
\citet{mccallum2012deciding} fragment of the theory of reals with
exponentials and calls Mathematica\textregistered~to see if the
sentence is valid. Using the tool, we verified the accuracy of $\svttwo$, $\nmax$,
Laplace Mechanism, and $\nsvt$ at specified inputs. Our tool also found counter-examples for ${\svttwo}$ and ${\svtone}$, when accuracy claims do not hold.
In addition, our tool is able to verify
improvements of accuracy bounds for $\nmax$ over known accuracy bounds in the literature  given in this paper. 
Finally, we experimentally found better potential accuracy bounds for our examples by running our tool on progressively smaller $\beta$ values.   
  
\ifdefined\AppendixTrue
This is the author's version of the paper. It is posted here for your personal use. Not for redistribution. The definitive version was published in the Proceedings of the ACM on Programming Languages (POPL' 2021), available at https://doi.org/10.1145/3434289.
\else 
Due to lack of space, some proofs and other materials have
been omitted. The omitted material can be located in the
arXiv repository \cite{POPLarxiv}.
\fi


        \section{Definition of Accuracy}
\label{sec:accuracy-def}

In the differential privacy model~\cite{DMNS06}, a trusted curator
with access to a database returns answers to queries made by possibly
dishonest data analysts that do not have access to the database. The
task of the curator is to return probabilistically noised answers so
that data analysts cannot distinguish between two databases that are
adjacent, i.e.\, only differ in the value of a single
individual. However, an overriding concern is that, in spite of the
noise, responses should still be sufficiently close to the actual
answers to ensure the usefulness of any statistics computed on the
basis of those responses. This concern suggests a requirement for
accuracy, which is the focus of this paper. The definition of
accuracy, one of the main contributions of this paper, is presented
here.

We start by considering the usual definition used in theoretical
computer science~\cite{motwani-raghavan} to characterize the quality
of a randomized algorithm $P$ that approximately computes a function
$f$. Informally, such a definition demands that, for any input $x$,
the output $P(x)$ be ``close'' to function value $f(x)$ with ``high
probability''. In these cases, ``close'' and ``high probability'' are
characterized by parameters (say) $\gamma$ and $\beta$. Unfortunately,
such a definition is too demanding, and is typically not satisfied by
differential privacy algorithms. We illustrate this with the following
example.
\begin{example}
	\label{ex:accuracy-def}
	Consider {\svttwo} (also called Sparse Vector Technique or SVT). The
  problem solved by {\svttwo} is the following: given a threshold $T$,
  a database $x$ and a list of queries of length $m$, output the list
  of \emph{indices} of the first $c$ queries whose output on $x$ is
  above $T$. Since the goal of {\svttwo} is to maintain privacy of the
  database $x$, {\svttwo} introduces some small noise to the query
  answers as well as to $T$ before comparing them, and outputs the
  result of the ``noisy'' comparison; the exact pseudocode is given in
  Figure~\ref{algo-svt2}. Observe that if the answers to queries are
  very close to $T$, then {\svttwo}'s answer will not be ``close'' to
  the right answer (for non-noisy comparison) since the noisy
  comparison could give any result. On the other hand, if the query
  answer is far from $T$, then the addition of noise is unlikely to
  change the result of the comparison, and {\svttwo} is likely to give
  the right answer with high probability (despite the noise). Thus,
  {\svttwo}'s accuracy claims in the literature do not apply to all
  inputs, but only to those databases and queries whose answers are
  far away from the threshold $T$.
\end{example}
This example illustrates that the accuracy claims in differential
privacy can only be expected to hold for inputs that are far away from
other inputs that \emph{disagree} --- in Example~\ref{ex:accuracy-def}
above, accuracy claims don't hold when query answers are close to the
threshold because these inputs are close to other inputs in which the
comparison with the threshold will give the opposite result. This
problem motivates our general definition of accuracy, which captures
all accuracy claims for several differential privacy
algorithms. Before presenting the formalization, we introduce some
notation and preliminaries that will be useful.

\paragraph{Notation.}
We denote the set of real numbers, rational numbers, natural numbers,
and integers by $\Reals,\Rats,\Nats$, and $\integers$, respectively.
We also use
$\Reals^\infty\:= \Reals \cup \set{\infty},
\Reals^{>0}=\set{x\in\Reals\st x>0},\Reals^{\geq 0}\:=\set{x\in
  \Reals\st x\geq 0}$. The Euler constant is denoted by $\euler$. For
any $m>0$ and any vector $a=(a_1,...,a_m)\in\integers^m$ (or,
$a\in \Reals^m$, $a\in\Rats^m$, $a\in \Nats^m$), recall that
$\norm{a}_1\:=\sum_{1\leq i\leq m}\mathbin{|}a_i\mathbin{|}$ and
$\norm{a}_\infty\:=\max\{\mathbin{|}a_i\mathbin{|}\st 1\leq i\leq
m\}$.

A differential privacy algorithm is typically a probabilistic program
whose behavior depends on the privacy budget $\epsilon$. When we
choose to highlight this dependency we denote such programs as
$P_\epsilon$, and when we choose to ignore it, e.g. when $\epsilon$
has been fixed to a particular value, we denote them as just
$P$. Since $P$ is a probabilistic program, it defines a randomized
function $\sem{P}$, i.e., on an input $u \in \cU$ (where $\cU$ is the
set of inputs), $\sem{P}(u)$ is a distribution on the set of outputs
$\cV$. We will often abuse notation and use $P(u)$ when we mean
$\sem{P}(u)$, to reduce notational overhead. For a measurable set
$S \subseteq \cV$, the probability that $P$ outputs a value in $S$ on
input $u$ will be denoted as $\Prob(P(u) \in S)$; when $S$ is a
singleton set $\set{v}$ we write $\Prob(P(u) = v)$ instead of
$\Prob(P(u) \in \set{v})$.

The accuracy of a differential privacy algorithm $P$ is defined with
respect to an ``ideal'' algorithm that defines the function that $P$
is attempting to compute while maintaining privacy. We denote this
``ideal'' function as $\det(P)$. It is a deterministic function, and
so $\det(P) : \cU \to \cV$.

To define accuracy, we need a measure for when inputs/outputs are
close. On inputs, the function measuring closeness does not need to be
a metric in the formal sense. We assume
$d:\cU\times \cU\rightarrow \Reals^\infty$ is a ``distance'' function
defined on $\cU$, satisfying the following properties: for all
$u,u'\in \cU$, $d(u,u')=d(u',u)\geq 0$, $d(u,u)\:=0.$ For any
$X\subseteq \cU$ and any $u\in \cU$, we let
$d(u,X)= \inf_{u'\in X} d(u,u')$. On outputs, we will need the
distance function to be dependent on input values. Thus, for every
$u\in \cU$, we also assume that there is a distance function $d'_u$
defined on $\cV$. For any $v\in \cV,u\in \cU$ and for any
$\gamma\in \Reals^{\geq 0}$, let
$B(v,u,\gamma)=\set{v'\in \cV\st d'_u(v,v')\leq \gamma}$. In other
words, $B(v,u,\gamma)$ is a ball of radius $\gamma$ around $v$ defined
by the function $d'_u$. We next introduce a key notion that we call
\emph{\distance}.
\begin{definition}
	\label{def:disagreement}
	For a randomized algorithm $P$ and input $u \in \cU$, the
  \emph{\distance} of $P$ and $u$ with respect to $\det(P)$ is the
  minimum of the distance between $u$ and another input $u'$ such that
  the outputs of $\det(P)$ on $u$ and $u'$ differ. This can defined
  precisely as
	\[
	\dd(P,u) = d(u, \{\cU - \det(P)^{-1}(\det(P)(u))\}).
	\]
\end{definition}

We now have all the components we need to present our definition of
accuracy. Intuitively, the definition says that a differential privacy
algorithm $P$ is accurate with respect to $\det(P)$ if on all inputs
$u$ that have a large {\distance} (as measured by parameter $\alpha$),
$P$'s output on $u$ is close (as measured by parameter $\gamma$) to
$\det(P)(u)$ with high probability (as measured by parameter
$\beta$). This is formalized below.
\begin{definition}
\label{def:accuracy}
Let $\alpha,\beta,\gamma\in \Reals^{\geq 0}$ such that
$\beta\in [0,1]$. Let $P$ be a differential privacy algorithm on
inputs $\cU$ and outputs $\cV$. $P$ is said to be
$(\alpha,\beta,\gamma)$-accurate at input $u\in \cU$ if the following
condition holds: if $\dd(P,u)>\alpha$ then
$\Prob(P(u)\in B(\det(P)(u),u,\gamma))\geq 1-\beta.$

We say that $P$ is $(\alpha,\beta,\gamma)$-accurate if for all
$u\in \cU$, $P$ is $(\alpha,\beta,\gamma)$-accurate at $u.$
\end{definition}

Observe that when $\alpha = 0$, $(\alpha,\beta,\gamma)$-accuracy
reduces to the standard definition used to measure the precision of a
randomized algorithm that approximately computes a
function~\cite{motwani-raghavan}. All three parameters $\alpha,\beta$,
and $\gamma$ play a critical role in capturing the accuracy claims
known in the literature. As we will see in
Section~\ref{sec:accuracy-ex}, we show that the \textsf{Laplace} and
\textsf{Exponential} Mechanisms are $(0,\beta,\gamma)$-accurate,
{\athres} and {\svttwo} are $(\alpha,\beta,0)$-accurate, and {\nsvt}
is $(\alpha,\beta,\gamma)$-accurate. Finally, observe that as $\alpha, \gamma$ increase the error probability $\beta$ decreases.

\paragraph*{Comparison with alternative definitions} As previously noted, it is well-known that the usual definition of accuracy from randomized algorithms does not capture desirable notions of accuracy for differentially private computations, and a number of classic papers from the differential privacy literature have proposed generalizations of the usual notion of accuracy with a third parameter. For instance, \citet{BlumLR13} introduce a relaxed notion of accuracy in order to study lower bounds; their definition is specialized to mechanisms on databases, and given with respect to a class $\mathcal{C}$ of (numerical) queries. Informally, a mechanism $A$ is accurate if for every query $Q$ in the class $\mathcal{C}$ and database $D$, there exists a nearby query $Q'$ such that with high probability the output $Q(D)$ is close to $Q'(D)$. Their definition is of a very different flavour, and is not comparable to ours. Another generalization is given in \citet{BhaskarLST10}, for algorithms that compute frequent items. These algorithms take as input a list of items and return a list of most frequent items and frequencies. Their notion of usefulness requires that with high probability the frequencies are close to the true frequency globally, and for each possible list of items. None of these definitions involve a notion like {\distance}. 
        \section{Examples}
\label{sec:accuracy-ex}

The definition of accuracy (Definition~\ref{def:accuracy}) is general
enough to capture all accuracy claims we know of in the
literature. It's full generality seems to be needed in order to
capture known results. In this section, we illustrate this by looking
at various differential privacy mechanisms and their accuracy
claims. As a byproduct of this investigation, we also obtain tighter
and better bounds for the accuracy of {\nsvt}.

\subsection{{\laplace} Mechanism}
\label{sec:laplace}
The {\laplace} mechanism~\cite{DMNS06} is the simplest differential
privacy algorithm that tries to compute, in a privacy preserving
manner, a numerical function $f: \cU \to \cV$, where $\cU\:=\Nats^n$
and $\cV\:=\Reals^k$, where $k > 0$. The algorithm adds noise sampled
from the Laplace distribution. Let us begin by defining this
distribution.
\begin{definition}[Laplace Distribution]
	\label{def:laplace}
	Given $\epsilon>0$ and mean $\mu$, let $\Lap{\epsilon,\mu}$ be the continuous distribution whose probability density function (p.d.f.) is given by
	\[
	f_{\epsilon,\mu}(x) = \frac {\epsilon} 2 \ \euler^{-\epsilon \abs {x -\mu}}.
	\]
	$\Lap{\epsilon,\mu}$ is said to be the \emph{Laplace distribution} with mean $\mu$ and scale parameter $\frac{1}{\epsilon}$.

	It is sometimes useful to also look at the discrete version of the above distribution. Given $\epsilon>0$ and mean $\mu,$ let $\DLap{\epsilon,\mu}$ be the discrete distribution on $\integers$, whose probability mass function (p.m.f.) is
	\[
	f_{\epsilon,\mu} (i) = \frac {1-\euler^{-\epsilon}} {1+\euler^{-\epsilon}} \ \euler^{-\epsilon \abs {i -\mu}}.
	\]
	$\DLap{\epsilon,\mu}$ is said to be the \emph{discrete Laplace distribution} with mean $\mu$ and scale parameter $\frac 1 \epsilon$.
\end{definition}
On an input $u \in \cU$, instead of outputting $f(u)$, the {\laplace}
mechanism ($P^{\s{Lap}}_\epsilon$) outputs the value
$f(x) + (Y_1,\ldots Y_k)$, where each $Y_i$ is an independent,
identically distributed random variable from
$\Lap{(\frac{\epsilon}{\Delta f},0)}$; here $\Delta f$ is the
\emph{sensitivity} of $f$, which measures how $f$'s output changes as
the input changes~\cite{DR14}.

Theorem 3.8 of \citet{DR14} establishes the following accuracy claim for {\laplace}.
\begin{theorem}[Theorem 3.8 of \citet{DR14}]
	\label{thm:laplace}
	For any $u\in \cU$, and $\delta\in (0,1]$
\[
Pr \left[\norm{f(u)-P^{\s{Lap}}_\epsilon(u)}_\infty \geq \ln
\left(\frac{k}{\delta}\right)\left(\frac{\Delta f}{\epsilon}\right)\right] \leq \delta
\]
\end{theorem}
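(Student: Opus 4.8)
The plan is to reduce the statement to a one-dimensional Laplace tail estimate followed by a union bound. First I would unfold the definition of the mechanism: since $P^{\s{Lap}}_\epsilon(u) = f(u) + (Y_1,\ldots,Y_k)$ with the $Y_i$ independent and each distributed as $\Lap{(\frac{\epsilon}{\Delta f},0)}$, the quantity inside the probability simplifies to
\[
\norm{f(u) - P^{\s{Lap}}_\epsilon(u)}_\infty = \norm{(Y_1,\ldots,Y_k)}_\infty = \max_{1\leq i\leq k}\abs{Y_i}.
\]
Thus the event to be bounded is $\set{\max_i \abs{Y_i}\geq t}$ where $t = \ln(\frac{k}{\delta})\frac{\Delta f}{\epsilon}$, and crucially this event no longer depends on $u$, so the bound will hold uniformly over all inputs.

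Next I would establish the single-coordinate tail bound. For a fixed $i$, using the p.d.f.\ $f_{\epsilon/\Delta f,0}(x) = \frac{\epsilon}{2\Delta f}\,\eulerv{-(\epsilon/\Delta f)\abs{x}}$ from Definition~\ref{def:laplace}, a direct integration over both symmetric tails gives
\[
\Prob(\abs{Y_i}\geq t) = 2\int_t^\infty \frac{\epsilon}{2\Delta f}\,\eulerv{-(\epsilon/\Delta f)x}\,dx = \eulerv{-(\epsilon/\Delta f)\,t}.
\]
Substituting $t = \ln(\frac{k}{\delta})\frac{\Delta f}{\epsilon}$ makes the exponent collapse to $-\ln(\frac{k}{\delta})$, so that $\Prob(\abs{Y_i}\geq t) = \frac{\delta}{k}$.

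Finally I would combine the coordinates by subadditivity (independence is not even needed here):
\[
\Prob\left(\max_{1\leq i\leq k}\abs{Y_i}\geq t\right) \leq \sum_{i=1}^k \Prob(\abs{Y_i}\geq t) = k\cdot\frac{\delta}{k} = \delta,
\]
which is precisely the claimed inequality. There is essentially no hard step: the whole content is the symmetric exponential tail of the Laplace law, and the only place the dimension $k$ enters is the union bound, exactly compensated by the $k$ sitting inside the logarithm in the threshold $t$. The single point requiring care is the scale convention — the mechanism draws from $\Lap{(\frac{\epsilon}{\Delta f},0)}$, so in the notation of Definition~\ref{def:laplace} the rate parameter is $\frac{\epsilon}{\Delta f}$ and the scale is $\frac{\Delta f}{\epsilon}$; tracking this consistently is what makes the exponent cancel cleanly to yield $\frac{\delta}{k}$.
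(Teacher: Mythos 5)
Your proof is correct, and it is the standard argument (per-coordinate Laplace tail bound $\Prob(\abs{Y_i}\geq t)=\eulerv{-(\epsilon/\Delta f)t}$ followed by a union bound over the $k$ coordinates) — the paper itself gives no proof, since it simply quotes this as Theorem 3.8 of Dwork and Roth, whose proof proceeds exactly as you describe. You also correctly handle the one point where errors typically creep in, namely that in the paper's parametrization of $\Lap{\epsilon,\mu}$ the first argument is the rate, so the scale is $\frac{\Delta f}{\epsilon}$ and the exponent cancels to give $\frac{\delta}{k}$ per coordinate.
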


We can see that Theorem~\ref{thm:laplace} can be rephrased as an
accuracy claim using our definition. Observe that here
$\det(P^{\s{Lap}}_\epsilon)\:=f$. Let the distance function $d$ on
$\cU$ be defined by $d(u,u')=0$ if $u=u'$ and $d(u,u')=1$
otherwise. For $u\in \cU$, let the distance function $d'_u$ on $\cV$
be defined by $d'_u(v,v')\:=\norm{v-v'}_\infty$. Now, it is easily
seen that the above theorem is equivalent to stating that the
{\laplace} mechanism is $(0,\delta,\gamma)$-accurate for all
$\epsilon,\gamma$, where
$\delta=k \euler^{-\frac{\gamma \epsilon}{\Delta f}}$.

\subsection{{\exponential} Mechanism}
\label{sec:exponential}

Consider the input space $\cU = \Nats^n$. Suppose for an input
$u \in \cU$, our goal is to output a value in a finite set $\cV$ that
is the ``best'' output. Of course for this to be a well-defined
problem, we need to define what we mean by the ``best'' output. Let us
assume that we are given a utility function
$F: \cU \times \cV \to \Reals$ that measures the quality of the
output. Thus, our goal on input $u$ is to output
$\argmax1_{v \in \cV} F(u,v)$~\footnote{If there are multiple $v$ that
  maximize the utility, there is a deterministic criterion that
  disambiguates.}.

The {\exponential} mechanism~\cite{McSherryT07}
($P^{\s{Exp}}_\epsilon$) solves this problem while guaranteeing
privacy by sampling a value in $\cV$ based on
the \emph{exponential distribution}. This distribution depends on the utility function $F$ and is defined below.
\begin{definition}[Exponential Distribution]
	\label{def:exponential}
	Given $\epsilon>0$ and $u\in \cU$, the discrete distribution $\pexp {\epsilon, F,u}$ on $\cV$ is given by the probability mass function:
	\[
	h_{\epsilon, F,u} (v)=\frac{\euler^{\epsilon F(u,v)}}{ \sum_{v\in \cV} \euler^{\epsilon F(u,v)}}.
	\]
\end{definition}
On an input $u \in \cU$, the {\exponential} mechanism outputs
$v \in \cV$ according to distribution
$\pexp{\frac{\epsilon}{\Delta F},F,u}$, where $\Delta F$ is the
sensitivity of $F$. Taking $\det(P^{\s{Exp}}_\epsilon)$ to be the
function such that $\det(P^{\s{Exp}}_\epsilon)(u)= \argmax1_{v \in \cV} F(u,v)$, the following claim about the {\exponential}
mechanism is proved in Corollary 3.12 of \citet{DR14}.
\begin{theorem}[Corollary 3.12 of \citet{DR14}]
	\label{thm:exponential}
	For any $u$ and any $t>0$,
\[
Pr \left[ F(u,P^{\s{Exp}}_\epsilon(u))\leq F(u,\det(P^{\s{Exp}}_\epsilon)(u))- \frac{2\Delta F}{\epsilon} (\ln (\mathbin{|}\cV
\mathbin{|}) + t) \right] \leq \euler^{-t}
\]
\end{theorem}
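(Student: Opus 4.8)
The plan is to prove the bound directly from the closed form of the exponential distribution, via the classic argument that compares the total mass on ``bad'' outputs against a single optimal term in the normalizing constant; no concentration inequalities or independence arguments are needed. Write $v^\star = \det(P^{\s{Exp}}_\epsilon)(u)$ and $M = F(u,v^\star) = \max_{v\in\cV} F(u,v)$, and abbreviate the effective scale $\epsilon' = \frac{\epsilon}{\Delta F}$, so that $P^{\s{Exp}}_\epsilon(u)$ is distributed according to the p.m.f.\ $h_{\epsilon',F,u}(v) = \euler^{\epsilon' F(u,v)}\big/\sum_{w\in\cV}\euler^{\epsilon' F(u,w)}$.

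First I would fix $t>0$ and isolate the bad set of outputs $S = \{\, v\in\cV : F(u,v)\leq M - \tfrac{2\Delta F}{\epsilon}(\ln\abs{\cV} + t)\,\}$, so that the event appearing in the statement is exactly $\{\,P^{\s{Exp}}_\epsilon(u)\in S\,\}$, and hence $\Prob(P^{\s{Exp}}_\epsilon(u)\in S) = \big(\sum_{v\in S}\euler^{\epsilon' F(u,v)}\big)\big/\big(\sum_{w\in\cV}\euler^{\epsilon' F(u,w)}\big)$. Next I would bound numerator and denominator separately. Every $v\in S$ contributes at most $\euler^{\epsilon'(M - \frac{2\Delta F}{\epsilon}(\ln\abs{\cV}+t))}$, and $\abs{S}\leq\abs{\cV}$, so the numerator is at most $\abs{\cV}\,\euler^{\epsilon' M}\,\euler^{-2(\ln\abs{\cV}+t)}$, where I have used the identity $\epsilon'\cdot\frac{2\Delta F}{\epsilon} = 2$. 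The denominator is bounded below by retaining only the optimal term, giving $\geq \euler^{\epsilon' M}$. Dividing, the factor $\euler^{\epsilon' M}$ cancels and one obtains $\Prob(P^{\s{Exp}}_\epsilon(u)\in S) \leq \abs{\cV}\,\euler^{-2(\ln\abs{\cV}+t)} = \abs{\cV}^{-1}\euler^{-2t} \leq \euler^{-t}$, which is the claim.

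The argument is elementary, and that cleanliness is exactly the selling point of the exponential mechanism's tail bound. The only two places requiring care are: (i) lower-bounding the normalizing constant by its single largest term, which is where the tie-breaking $\argmax$ defining $\det(P^{\s{Exp}}_\epsilon)$ enters and guarantees the optimal term is actually present; and (ii) tracking the scale factor, since the mechanism samples at scale $\epsilon' = \epsilon/\Delta F$, so that the deviation $\frac{2\Delta F}{\epsilon}$ is precisely $\frac{2}{\epsilon'}$ and the product $\epsilon'\cdot\frac{2\Delta F}{\epsilon}$ equals $2$. I would flag that this factor of two is exactly what makes $\abs{\cV}\,\euler^{-2\ln\abs{\cV}}$ collapse to $\abs{\cV}^{-1}$, absorbing the cardinality of the output space; indeed the stated $\euler^{-t}$ bound then holds with room to spare, the genuine bound being $\abs{\cV}^{-1}\euler^{-2t}$. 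No step here is a real obstacle, so the main risk is purely bookkeeping around the scale parameter and the direction of the two inequalities.
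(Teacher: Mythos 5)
Your proof is correct, and it is the standard argument for the exponential mechanism's utility bound (the one behind Corollary 3.12 of Dwork--Roth, which the paper simply cites rather than reproves): bound the numerator by $\abs{\cV}$ times the largest ``bad'' term and the denominator from below by the single optimal term. You also correctly handled the one genuinely delicate point, namely that this paper's mechanism samples at scale $\epsilon/\Delta F$ rather than Dwork--Roth's $\epsilon/(2\Delta F)$, so the exponent picks up a factor of $2$ and the bound comes out as $\abs{\cV}^{-1}\euler^{-2t}\leq\euler^{-t}$ with slack rather than exactly $\euler^{-t}$.
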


Again we can see Theorem~\ref{thm:exponential} as an accuracy claim by
our definition. Let the distance function $d$ on $\cU$ be defined by
$d(u,u')=0$ if $u=u'$ and $d(u,u')=1$ otherwise. For any $u \in \cU$,
take the distance metric $d'_u$ to be
$d'_u(v,v') = \mathbin{|}F(u,v) - F(u,v')\mathbin{|}$, for
$v,v'\in \cV$. Theorem~\ref{thm:exponential} can be seen as saying
that, for all $\epsilon$ and $t$, the {\exponential} mechanism is
$(0,\beta,\gamma)$-accurate where $\beta = \euler^{-t}$,
$\gamma=\frac{2\Delta F}{\epsilon} (\ln \mathbin{|}\cV \mathbin{|} +
\ln (\frac{1}{\beta}))$.

\subsection{{\nmax}}
\label{sec:nmax}

Consider the following problem. Given a sequence
$(q_1,q_2,\ldots, q_m)$ of elements (with each $q_i \in \Reals$),
output the smallest index of an element whose value is the maximum in
the sequence. The algorithm {\nmax} is a differentially private
way to solve this problem. It is shown in
Figure~\ref{fig:nmax}.
\begin{wrapfigure}{r}{0.27\textwidth}
	\RestyleAlgo{boxed}
	\removelatexerror
	\begin{algorithm}[H]
		\DontPrintSemicolon
		\KwIn{$q[1:m]$}
		\KwOut{$out$}
		\;
		NoisyVector $\gets []$\;
		\For{$i\gets 1$ \KwTo $m$}{
			NoisyVector[i] $\gets$ $\Lap{\frac{\epsilon}{2}, q[i]}$
		}
		out $\gets$ argmax(NoisyVector)\;
	\end{algorithm}
	\caption{\footnotesize{Algorithm {\nmax}}}
	\label{fig:nmax}
\end{wrapfigure} Based on the privacy budget $\epsilon$, it
independently adds noise distributed according to
$\Lap{\frac{\epsilon}{2},0}$ and then outputs the index with the
maximum value after adding the noise.

Let us denote the deterministic function that outputs the index of the
maximum value in the sequence $(q_1,\ldots, q_m)$ by
$\det(\mbox{\nmax})$. On a given input sequence of length $m$, suppose
$i$ is the index output by {\nmax} and $j$ is the index output by
$\det(\mbox{\nmax})$. Theorem 6 of \citet{BGGHS16-icalp} proves that,
for any $\beta\in (0,1]$,
$Pr (q_j-q_i < \frac{4}{\epsilon}\ln\frac{m}{\beta})\geq 1-\beta.$

There are two different ways we can formulate NoisyMax in our
framework. In both approaches $\cU\:=\Reals^m.$ In the first
approach, the distance function $d$ on $\cU$ is the same as the one
given for the {\laplace} mechanism, i.e., for any $u,u'\in \cU$,
$d(u,u')=0$ if $u = u'$ and $d(u,u')=1$ otherwise. The set
$\cV\:=\set{i\::1\leq i\leq m}.$ For any $u=(q_1,...,q_m)\in \cU$, the
distance function $d'_u$ on $\cV$ is defined by
$d'_u(i,j)\:=|q_i-q_j|.$ Now, it is easy to see that the above
mentioned result of \citet{BGGHS16-icalp} is equivalent to the
statement that {\nmax} is $(0,\beta,\gamma)$-accurate where
$\beta\:=m\euler^{-\frac{\gamma\epsilon}{4}}.$

In the second approach, we use the distance functions $d$ on $\cU$ and
$d'_u$ on $\cV$ (for $u\in \cU$) defined as follows: for
$u,u'\in \cU$, $d(u,u')=\norm{u-u'}_\infty$, and for $u\in \cU$,
$i,j\in \cV$, if $q_i=q_j$ then $d'_u(i,j)=0$, otherwise
$d'_u(i,j)=1.$ We have the following lemma for the accuracy of
{\nmax}. 
\ifdefined\AppendixTrue
(See Appendix~\ref{sec:acc-ex-proofs} for the proof.)
\fi
\begin{lemma}
	\label{lem:nmax}
{\nmax} is $(\alpha,\beta,0)$-accurate for
$\beta\:=m\euler^{-\frac{\alpha\epsilon}{2}}$ and for all $\alpha\geq 0.$
\end{lemma}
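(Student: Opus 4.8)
The plan is to fix an input $u=(q_1,\ldots,q_m)\in\cU=\Reals^m$ and first unwind what $(\alpha,\beta,0)$-accuracy at $u$ demands in the second metric setup. Since $\gamma=0$ and $d'_u(i,j)=0$ exactly when $q_i=q_j$, the ball $B(\det(\nmax)(u),u,0)$ is precisely the set of indices whose query value equals the maximum $M:=\max_i q_i$. Thus the conclusion to establish, under the hypothesis $\dd(\nmax,u)>\alpha$, is that with probability at least $1-\beta$ the index output by $\nmax$ attains the true maximum $M$. I would first dispose of the degenerate case: if the maximum is attained at two or more indices, then perturbing $q_{j^*}$ (the smallest maximizing index) by an arbitrarily small amount changes the smallest-max index, so $\dd(\nmax,u)=0$ and the hypothesis $\dd(\nmax,u)>\alpha\ge 0$ fails, making accuracy vacuous. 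Hence I may assume the maximum is attained at a unique index $j^*=\det(\nmax)(u)$, and correctness then means exactly that $j^*$ is the noisy argmax.

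The crux of the argument, and the step I expect to be the main obstacle, is the exact computation of $\dd(\nmax,u)$ under the $\norm{\cdot}_\infty$ metric. Let $M_2:=\max_{i\neq j^*}q_i$ and set $g:=M-M_2>0$. I claim $\dd(\nmax,u)=g/2$. For the upper bound, take a second-best index $i_2$ (with $q_{i_2}=M_2$) and, for any $\delta>g/2$, lower $q_{j^*}$ by $\delta$ and raise $q_{i_2}$ by $\delta$; then $q_{i_2}+\delta>q_{j^*}-\delta$, so $i_2$ overtakes $j^*$ and the output index changes, while the $\norm{\cdot}_\infty$ cost is only $\delta$. For the matching lower bound, if $\norm{u-u'}_\infty<g/2$ then every coordinate moves by less than $g/2$, so the perturbed value at $j^*$ exceeds $M-g/2$ while every other perturbed coordinate stays below $M_2+g/2=M-g/2$; hence $j^*$ remains the strict unique maximizer and the output is unchanged. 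The factor $1/2$ is essential: it arises precisely because the $\norm{\cdot}_\infty$ distance lets one move two coordinates simultaneously, and a naive single-coordinate analysis would wrongly yield $\dd(\nmax,u)=g$ and only the weaker bound below.

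With the distance pinned down, the probability bound follows from a clean concentration-plus-union-bound argument. If $|Y_k|<g/2$ for every noise sample $Y_k\sim\Lap{\frac{\epsilon}{2},0}$, then $q_{j^*}+Y_{j^*}>M-g/2>M_2+g/2>q_i+Y_i$ for all $i\neq j^*$, so $j^*$ wins and the output is correct (ties occur with probability zero for continuous noise). Therefore the failure probability is at most $\Prob(\exists k:\,|Y_k|\ge g/2)$. Since each $Y_k$ has $\Prob(|Y_k|\ge t)=\euler^{-\epsilon t/2}$, a union bound over the $m$ coordinates gives a failure probability of at most $m\,\euler^{-\epsilon g/4}$.

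Finally I would combine the two computations. The hypothesis $\dd(\nmax,u)>\alpha$ is exactly $g/2>\alpha$, i.e. $g>2\alpha$, so the failure probability is strictly less than $m\,\euler^{-\epsilon(2\alpha)/4}=m\,\euler^{-\alpha\epsilon/2}=\beta$. Hence $\Prob(\nmax(u)\in B(\det(\nmax)(u),u,0))\ge 1-\beta$, which is $(\alpha,\beta,0)$-accuracy at $u$ for $\beta=m\,\euler^{-\alpha\epsilon/2}$; as $u$ was arbitrary, this holds at all inputs and for all $\alpha\ge 0$. The only delicate point throughout is the $\ell_\infty$ distance-to-disagreement computation yielding $g/2$ rather than $g$, since this is exactly what converts the crude exponent $\epsilon\alpha/4$ into the claimed $\epsilon\alpha/2$.
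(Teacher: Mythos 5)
Your proof is correct and follows essentially the same route as the paper's: both arguments reduce to computing the distance to disagreement as half the gap $g=\s{max}-\s{smax}$ and then converting a tail bound of the form $m\euler^{-\epsilon g/4}$ into $\beta=m\euler^{-\alpha\epsilon/2}$ via $g>2\alpha$. The only difference is one of self-containment: the paper cites Theorem~6 of \citet{BGGHS16-icalp} as a black box for the probability bound and asserts the $\dd=g/2$ computation as ``easy to show,'' whereas you derive both directly (union bound over the $m$ Laplace tails, and an explicit two-sided argument for $g/2$), and your dismissal of the tied-maximum case as vacuous is in fact slightly more careful than the paper's case split.
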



\subsection{{\athres}}
\label{sec:at}

Given a sequence of queries $(q_1,\ldots, q_m)$ ($q_i \in \Reals$) and
parameter $T \in \Reals$, consider the problem of determining the
first query in the sequence which is \emph{above the threshold}
$T$. The goal is not to output the index of the query, but instead to
output a sequence of $\bot$ as long as the queries are below $T$, and
to terminate when either all queries have been read, or when the first
query $\geq T$ is read; if such a query is found, the algorithm outputs
$\top$ and stops.

{\athres} is a differentially private algorithm that solves the above
problem. The algorithm is a special case of {\svttwo} shown in
Figure~\ref{algo-svt2} when $c=1$. {\athres} works by adding noise to
$T$ and to each query, and comparing if the noised queries are
below the noised threshold. The noise added is sampled from the
Laplace distribution with scale parameter $\frac{2}{\epsilon}$ (for the threshold) and $\frac{4}{\epsilon}$ (for queries). The set of
inputs for {\athres} is $\cU = \Reals^m$ and the outputs are
$\cV = \set{\bot^m} \cup \set{\bot^k\top\: |\: k < m}$. The accuracy claims for {\athres}
in \citet{DR14} are given in terms of a notion of
\emph{$(\alpha,\beta)$-correctness}. Let $\cA$ be a randomized
algorithm with inputs in $\cU$ and outputs in $\cV$, and let
$\alpha \in \Reals^{\geq 0}$ and $\beta \in [0,1]$. We say that $\cA$
is $(\alpha,\beta)$-correct if for every $k,\:1\leq k\leq m$, and for
every input $u=(q_1,...,q_m)\in \cU$ such that $q_k\geq T+\alpha$ and
$q_i<T-\alpha$ for $1\leq i<k$, $\cA$ outputs $\bot^{k-1}\top$ with
probability $\geq 1-\beta.$ Using the results in \citet{DR14}, one can
prove the following lemma.
\begin{lemma}[\citet{DR14}]
\label{lem:AT}
{\athres} is $(\alpha,\beta)$-correct for  $\beta=
2m\euler^{-\frac{\alpha\epsilon}{8}}$ and for all $\alpha\geq 0.$
\end{lemma}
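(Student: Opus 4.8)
The plan is to analyze the two independent sources of noise in \athres{} separately and to control, via a union bound, the two ways in which the algorithm can err. Recall that \athres{} is the $c=1$ instance of \svttwo{}: it first draws a threshold noise $\rho\sim\Lap{\epsilon/2,0}$ (scale $2/\epsilon$), and then, reading the queries in order, draws independent query noises $\nu_i\sim\Lap{\epsilon/4,0}$ (scale $4/\epsilon$), emitting $\bot$ while $q_i+\nu_i < T+\rho$ and halting with $\top$ at the first index where $q_i+\nu_i\geq T+\rho$. Fix an index $k$ with $1\le k\le m$ and an input $u=(q_1,\ldots,q_m)$ satisfying $q_k\geq T+\alpha$ and $q_i<T-\alpha$ for $i<k$; by the definition of $(\alpha,\beta)$-correctness it suffices to show that the output is $\bot^{k-1}\top$ with probability at least $1-2m\,\euler^{-\alpha\epsilon/8}$, uniformly in $k$ and in $u$.

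First I would introduce the ``good'' event that splits the slack $\alpha$ evenly between the threshold noise and each query noise,
\[
E \;=\; \{\,|\rho| < \alpha/2\,\}\;\cap\;\bigcap_{1\le i\le k}\{\,|\nu_i| < \alpha/2\,\},
\]
and argue that on $E$ the algorithm behaves exactly as intended. For each $i<k$ we have $q_i+\nu_i < (T-\alpha)+\alpha/2 = T-\alpha/2 < T+\rho$, the last inequality holding because $\rho>-\alpha/2$ on $E$, so query $i$ does not trigger and the algorithm emits $\bot$. For $i=k$ we have $q_k+\nu_k > (T+\alpha)-\alpha/2 = T+\alpha/2 > T+\rho$, since $\rho<\alpha/2$ on $E$, so query $k$ triggers and the algorithm halts with $\top$. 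Hence on $E$ the output is precisely $\bot^{k-1}\top$, and the problem reduces to bounding $\Prob(\overline{E})$.

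The remaining step is a standard Laplace tail computation together with a union bound. For $X\sim\Lap{\lambda,0}$ one has $\Prob(|X|\geq t)=\euler^{-\lambda t}$, so $\Prob(|\rho|\geq\alpha/2)=\euler^{-\epsilon\alpha/4}$ and $\Prob(|\nu_i|\geq\alpha/2)=\euler^{-\epsilon\alpha/8}$ for each $i$. Taking a union bound over the threshold event and the at most $m$ query events yields
\[
\Prob(\overline{E}) \;\leq\; \euler^{-\epsilon\alpha/4} + m\,\euler^{-\epsilon\alpha/8} \;\leq\; (m+1)\,\euler^{-\epsilon\alpha/8} \;\leq\; 2m\,\euler^{-\alpha\epsilon/8},
\]
where the middle inequality uses $\alpha\geq 0$ and the last uses $m\geq 1$. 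Since this bound does not depend on $k$ or on the particular input $u$, the lemma follows.

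The calculation is routine; the only point requiring care is the asymmetry between the two noise scales. The query noise, having the larger scale $4/\epsilon$, decays more slowly (as $\euler^{-\epsilon\alpha/8}$) than the threshold noise (scale $2/\epsilon$, decaying as $\euler^{-\epsilon\alpha/4}$), so it is the query tail that fixes the exponent $\alpha\epsilon/8$ in $\beta$. Splitting the slack as $\alpha/2+\alpha/2$ is exactly what makes both the ``false negative at $k$'' and the ``false positive before $k$'' absorbable by this slower-decaying tail; a less balanced split would spoil the stated exponent. The factor $2m$, rather than the sharper $m+1$ one actually obtains, is merely a convenient simplification of the union bound.
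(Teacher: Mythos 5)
Your proof is correct, and it is essentially the standard argument from \citet{DR14} (Theorem 3.24) that the paper invokes without reproducing: split the slack as $\alpha/2$ for the threshold noise and $\alpha/2$ for each query noise, observe that on the resulting good event the output is exactly $\bot^{k-1}\top$, and union-bound the Laplace tails, with the slower-decaying query tail $\euler^{-\alpha\epsilon/8}$ fixing the exponent and $(m+1)\le 2m$ giving the stated constant. No discrepancies with the paper's setup (noise scales $2/\epsilon$ and $4/\epsilon$, comparison $\rv\ge\rv_T$, definition of $(\alpha,\beta)$-correctness) — the argument goes through as written.
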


We formulate {\athres} in our framework and relate
$(\alpha,\beta)$-correctness to $(\alpha,\beta,0)$-accuracy as
follows.  Let the distance function $d$ on $\cU$ be given by
$d(u,u')\:=\norm{(u-u')}_\infty.$ The distance function $d'_u$ on
$\cV$ is given by $d'_u(v,v')=0$ if $v=v'$, otherwise $d'_u(v,v')=1.$
Now, we have the following lemma.

\begin{lemma}
	\label{lem:correct-acc}
  For any randomized algorithm $\cA$ as specified above, for any
  $\alpha,\beta$ such that $\alpha\geq 0$ and $\beta\in [0,1]$, $\cA$
  is $(\alpha,\beta)$-correct iff $\cA$ is
  $(\alpha,\beta,0)$-accurate.
\end{lemma}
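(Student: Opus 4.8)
The plan is to collapse both definitions onto the single event $\{\cA(u)=\det(\cA)(u)\}$ and reduce the equivalence to one geometric fact about $\dd(\cA,u)$. First I would unfold $(\alpha,\beta,0)$-accuracy: since the output ``distance'' $d'_u$ is the $0/1$ metric, the ball $B(\det(\cA)(u),u,0)$ is the singleton $\{\det(\cA)(u)\}$, so accuracy at $u$ says exactly that $\dd(\cA,u)>\alpha$ implies $\Prob(\cA(u)=\det(\cA)(u))\geq 1-\beta$. Here $\det(\cA)$ is the ideal first-above-threshold map: on $u=(q_1,\dots,q_m)$ it returns $\bot^{k-1}\top$ where $k$ is the least index with $q_k\geq T$, and $\bot^m$ when all $q_i<T$. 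Thus the two statements differ only in the hypothesis that guards the probability bound, and the whole lemma hinges on matching ``$\dd(\cA,u)>\alpha$'' with the threshold condition used in $(\alpha,\beta)$-correctness.

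The core step is to compute $\dd(\cA,u)$ under $d=\norm{\cdot}_\infty$. Fixing $u$ with $\det(\cA)(u)=\bot^{k-1}\top$, I would argue that the cheapest way to change the ideal answer is either to push $q_k$ below $T$ (infimal cost $q_k-T$) or to raise some earlier $q_i$, $i<k$, up to $T$ (cost $T-q_i$); changing a later coordinate cannot move the first-above index, and in $\infty$-norm single-coordinate moves are optimal. This gives $\dd(\cA,u)=\min(q_k-T,\ \min_{i<k}(T-q_i))$, and $\dd(\cA,u)=\min_i(T-q_i)$ in the all-below case. Hence $\dd(\cA,u)>\alpha$ should hold precisely when every relevant query is more than $\alpha$ from $T$ on the correct side, i.e. $q_k\geq T+\alpha$ and $q_i<T-\alpha$ for $i<k$ — exactly the precondition of correctness. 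The hard part will be the bookkeeping at the threshold: the disagreement set is \emph{open} in the $q_k$-direction (the input with $q_k'=T$ still agrees) but \emph{closed} in each $q_i$-direction (the input with $q_i'=T$ already disagrees), so I must check that no disagreeing input lies in the closed $\alpha$-ball around $u$ in order to justify that the non-strict ``$q_k\geq T+\alpha$'' and the strict ``$q_i<T-\alpha$'' together correspond to $\dd(\cA,u)>\alpha$. This strict/non-strict reconciliation, coordinate by coordinate, is where all the real care is needed.

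With this correspondence in hand both directions are short. For correctness $\Rightarrow$ accuracy, any $u$ with $\dd(\cA,u)>\alpha$ satisfies the correctness hypothesis for its index $k$, and since $\det(\cA)(u)=\bot^{k-1}\top$, correctness gives $\Prob(\cA(u)=\det(\cA)(u))\geq 1-\beta$, which is accuracy at $u$; the converse is symmetric. The one case not literally named in the correctness definition is the all-below output $\bot^m$, which I would fold in by reading correctness with the degenerate halting index ``$k=m+1$'': there the condition $q_i<T-\alpha$ for all $i$ is exactly what $\dd(\cA,u)>\alpha$ requires and forces the ideal output $\bot^m$, so the same argument applies. Quantifying over all $u$ and all $k$ then yields the full $(\alpha,\beta)$-correctness $\Leftrightarrow$ $(\alpha,\beta,0)$-accuracy equivalence.
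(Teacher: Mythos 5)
Your route is the paper's route in all essentials: you use exactly the same perturbations (set $w_i=T$ for $i<k$ to witness the closed direction, push $w_k$ to $T-\delta$ to witness the open one), the same observation that the disagreement set is attained in the $q_i$-directions but only approached in the $q_k$-direction, and the same reduction of $(\alpha,\beta,0)$-accuracy to the event $\{\cA(u)=\det(\cA)(u)\}$ via the $0/1$ output metric. You are in fact more complete than the paper in one respect: the paper's proof of correctness $\Rightarrow$ accuracy silently skips inputs with $\det(\cA)(u)=\bot^m$, which the literal definition of $(\alpha,\beta)$-correctness says nothing about; your reading with the degenerate index $k=m+1$ is the repair the lemma needs.

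However, the step you flag as ``where all the real care is needed'' is not a bookkeeping nuisance --- it is a genuine gap, and your own formula exposes it. Since the infimum in the $q_k$-direction is not attained, $\dd(\cA,u)=\min\bigl(q_k-T,\ \min_{i<k}(T-q_i)\bigr)$, and therefore $\dd(\cA,u)>\alpha$ is equivalent to the \emph{strict} inequality $q_k>T+\alpha$ together with $q_i<T-\alpha$ for $i<k$. The correctness precondition only assumes $q_k\geq T+\alpha$. So at the boundary $q_k=T+\alpha$ one has $\dd(\cA,u)=\alpha$ exactly, the accuracy hypothesis $\dd(\cA,u)>\alpha$ is vacuously unsatisfied, and accuracy gives no bound on $\Prob(\cA(u)=\bot^{k-1}\top)$ even though correctness demands one; the direction accuracy $\Rightarrow$ correctness fails there. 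Your proposal does not resolve this (checking that ``no disagreeing input lies in the closed $\alpha$-ball'' only yields $\dd\geq\alpha$, not $\dd>\alpha$), and it cannot be resolved without adjusting a strict/non-strict inequality somewhere in the statement. Note that the paper's own proof commits the same error: it passes from ``$d(u,w)>\alpha$ for every disagreeing $w$'' to ``$d(u,\cU-\det(\cA)^{-1}(v))>\alpha$'', which is an invalid inference from a pointwise strict bound to a strict bound on an infimum. So your proof is exactly as (in)correct as the paper's on this point; the honest fix is either to weaken correctness to $q_k>T+\alpha$ or to state accuracy with $\dd(\cA,u)\geq\alpha$.
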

\begin{proof}
  Let $\alpha,\beta$ be as given in the statement of the lemma. Now,
  assume $\cA$ is $(\alpha,\beta)$-correct. We show that it is
  $(\alpha,\beta,0)$-accurate. Let $v= \bot^{k-1}\top$ such that
  $1\leq k\leq m$. Now consider any
  $u=(u_1,...,u_m) \in \det(\cA)^{-1}(v)$ such that
  $d(u, \cU-(\det(\cA))^{-1}(v)))>\alpha.$ Now, if $k>1$ then fix any
  $i, i<k$ and consider $w=(w_1,...,w_m) \in \cU$ such that $w_i=T$
  and $w_j =u_j$ for all $j\leq m$ and $j\neq i.$ Clearly
  $\det(\cA)(w)\neq v$ and hence $w\in \cU-(\det(\cA))^{-1}(v).$ Since
  $d(u,w)>\alpha$, it is the case that $u_i< T-\alpha.$ Now, let
  $\delta\in \Reals$ such that $\delta>0$. Consider
  $w=(w_1,...,w_m)\in \cU$ such that $w_k= T-\delta$ and $w_i=u_i$ for
  $i\neq k, 1\leq i \leq m.$ Clearly,
  $w\in \cU - (\det(\cA))^{-1}(v))$, and hence $d(u,w) >\alpha$. Now,
  we have $u_k-w_k>\alpha$ and hence $u_k>T-\delta+\alpha.$ Since the
  last inequality holds for any $\delta>0$, we see that
  $u_k\geq T+\alpha.$ Thus, we see that $u_i<T-\alpha$ for $i<k$ and
  $u_k\geq T+\alpha.$ Since, $\cA$ is $(\alpha,\beta)$-correct, we see
  that it outputs $v$ with probability $\geq 1-\beta.$ Hence $\cA$ is
  $(\alpha,\beta,0)$-accurate.

  Now, assume that $\cA$ is $(\alpha,\beta,0)$-accurate. We show that
  it is $(\alpha,\beta)$-correct. Consider any
  $u=(u_1,...,u_m)\in \cU$ such that, for some $k\leq m$,
  $u_k\geq T+\alpha$ and for all $i<k$, $u_i<T-\alpha.$ As before let
  $v=\bot^{k-1}\top.$ Clearly $\det(\cA)(u)=v.$ Now consider any
  $w\in \cU- (\det(\cA))^{-1}(v)$, i.e., $\cA(w)\neq v.$ It has to be
  the case that, either for some $i<k$, $w_i\geq T$, or $w_k<T.$ In
  the former case $w_i-u_i>\alpha$ and in the later case,
  $u_k-w_k>\alpha.$ Thus, $d(u,w)>\alpha$ for every $w\in \cU$ such
  that $\det(\cA)(w)\neq v.$ Hence
  $d(u,\cU- (\det(\cA))^{-1}(v))>\alpha.$ Since $\cA$ is
  $(\alpha,\beta,0)$-accurate, it outputs $v$ with probability
  $\geq 1-\beta.$ Hence $\cA$ is $(\alpha,\beta)$-correct.
\end{proof}

Using Lemma~\ref{lem:correct-acc} and Lemma~\ref{lem:AT}, we can
conclude that {\athres} is $(\alpha,\beta,0)$-accurate for
$\beta= 2m\euler^{-\frac{\alpha\epsilon}{8}}$ and for all
$\alpha\geq 0.$

\begin{figure*}[!htb]
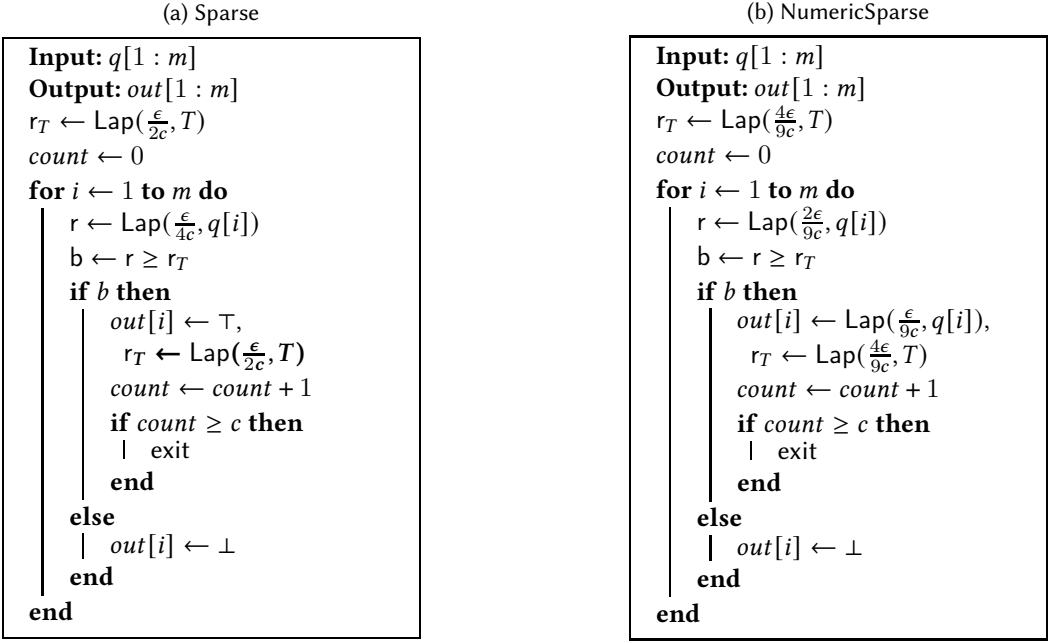

	\RestyleAlgo{boxed}
	\begin{subfigure}{0.4\textwidth}
		\caption{{\svttwo}}\label{algo-svt2}
		\removelatexerror
		\begin{algorithm}[H]
			\DontPrintSemicolon
			\KwIn{$q[1:m]$}
			\KwOut{$out[1:m]$}
			$\rv_T \gets \Lap{ \frac {\epsilon} {2c 
				} , T}$\;
			$count \gets 0$\;
			\For{$i\gets 1$ \KwTo $m$}
			{
				$\rv\gets \Lap{\frac \epsilon {4c
					} , q[i]}$\;
				$\bv\gets \rv \geq \rv_T$\;
				\uIf{$b$}{
					$out[i] \gets \top$, {\bf $\rv_T\gets \Lap{\frac \epsilon {2c
						} , T}$}\;
					$count \gets count + 1$\;
					\If{$count\geq c$} {\exit}
					}
				\Else{
					$out[i] \gets \bot$}
				}
		\end{algorithm}
	\end{subfigure}
	\hfill%
	\begin{subfigure}{0.4\textwidth}
		\caption{{\nsvt}}
		\label{algo-numeric}
		\removelatexerror
		\begin{algorithm}[H]
			\DontPrintSemicolon
			\KwIn{$q[1:m]$}
			\KwOut{$out[1:m]$}
			$\rv_T \gets \Lap{ \frac {4\epsilon} {9c 
				} , T}$\;
			$count \gets 0$\;
			\For{$i\gets 1$ \KwTo $m$}
			{
				$\rv\gets \Lap{\frac {2\epsilon} {9c 
					} , q[i]}$\;
				$\bv\gets \rv \geq \rv_T$\;
				\uIf{$b$}{
					$out[i] \gets  \Lap{\frac {\epsilon} {9c 
					} , q[i]}$, $\rv_T\gets \Lap{\frac {4\epsilon} {9c 
						} , T}$\;
					$count \gets count + 1$\;
					\If{$count\geq c$} {\exit}
					}
				\Else{
					$out[i] \gets \bot$}
			}
		\end{algorithm}
	\end{subfigure}
	\caption{\footnotesize{Algorithms {\svttwo} and {\nsvt}. {\svtone} (discussed in Section~\ref{sec:scaling}) is a variant of {\svttwo} where ${\rv_T}$ is not re-sampled in the for-loop (shown in bold here).}}
	\label{fig:two-algos}
\end{figure*}

\subsection{{\svttwo}}
\label{sec:sparse}

The {\svttwo} algorithm is a generalization of {\athres}. As in
{\athres}, we get a sequence of queries $(q_1,\ldots, q_m)$ and a
threshold $T$, and we output $\bot$ whenever the query is below $T$,
and $\top$ when it is above $T$. In {\athres} the algorithm stops when
either the first $\top$ is output or the entire sequence of queries is
processed without outputting $\top$. Now, we want to terminate when
either $c$ $\top$s are output, or the entire sequence of queries is
processed without $c$ $\top$s being output. We will call this the
deterministic function $\det(\mbox{\svttwo})$, and {\svttwo} is the
randomized version of it that preserves privacy. The algorithm
{\svttwo} is shown in Figure~\ref{algo-svt2}. The set of inputs $\cU$,
outputs $\cV$, distance metrics $d$ and $d'_u$ are the same as for
{\athres} (Section~\ref{sec:at}).

Suppose the input sequence of queries $(q_1,...,q_m)$ satisfies the
following property: for all $j$, $1\leq j\leq m$, either
$q_j < T-\alpha$ or $q_j\geq T+\alpha$, and furthermore, for at most
$c$ values of $j$, $q_j\geq T+\alpha.$ A sequence
$v=(v_1,...,v_k)\in \set{\bot,\top}^*$ is a valid output sequence for
the above input sequence of queries if $k\leq m$ and the following
conditions hold: (i) $\forall j\leq k$, $v_j=\bot$ if
$q_j(D)<T-\alpha$, otherwise $v_j=\top$; (ii) if $k<m$ then $v_k=\top$
and there are $c$ occurrences of $\top$ in $v$; (iii) if $k=m$ and
$v_k=\bot$ then there are fewer than $c$ occurrences of $\top$ in $v$.
The following accuracy claim for {\svttwo} can be easily shown by
using Theorem 3.26 of \citet{DR14}. Let $\alpha,\beta\in\Reals$ be such
that $\alpha\geq 0, \beta\in [0,1]$ and
$\alpha = \frac{8c}{\epsilon}(\ln m + \ln (\frac{2c}{\beta})).$ On an
input sequence of queries satisfying the above specified property,
with probability $\geq (1-\beta)$, {\svttwo} terminates after
outputting a valid output sequence.  Now, for $\alpha,\beta$ as given
above, using the same reasoning as in the case of the algorithm
{\athres}, it is easily shown that {\svttwo} is
$(\alpha,\beta,0)$-accurate for
$\beta\:= 2mc \euler^{-\frac{\alpha\epsilon}{8c}}$ and for all
$\alpha\geq 0.$

\subsection{{\nsvt}}
\label{sec:nsvt}

Consider a problem very similar to the one that {\svttwo} tries to
solve, where we again have a sequence of queries $(q_1,\ldots, q_m)$
and threshold $T$, but now instead of outputting $\top$ when
$q_i \geq T$, we want to output $q_i$ itself. {\nsvt} solves this
problem while maintaining differential privacy. It is very similar to
{\svttwo} and is shown in Figure~\ref{algo-numeric}. The only
difference between {\svttwo} and {\nsvt} is that instead of outputting
$\top$, {\nsvt} outputs a $q_i$ with added noise. We now show how
accuracy claims about {\nsvt} are not only captured by our general
definition, but in fact can be improved.

As before, let $\cU = \Reals^m.$ The distance function $d$ on $\cU$ is
defined as follows. For $u,u'\in \cU$, where $u=(u_1,...,u_m)$ and
$u'=(u'_1,...,u'_m)$, if for all $i$ such that both $u_i,u'_i\geq T$,
it is the case that $u_i=u'_i$, then $d(u,u')\:=\norm{(u-u')}_\infty$,
otherwise $d(u,u')=\infty.$ We define
$\cV = (\set{\bot}\cup \Reals)^m.$ The distance function $d'_u$ (for
any $u$) on $\cV$ is defined as follows: for $v,v'\in \cV$, where
$v=(v_1,...,v_m)$ and $v'=(v'_1,...,v'_m)$, if for all
$j,1\leq j\leq m,$ either $v_j=v'_j=\bot$ or $v_j,v'_j\in \Reals$ then
$d'_u(v,v') = \max \{\mathbin{|}v_j-v'_j\mathbin{|}\st v_j,v'_j\in
\Reals\}$, otherwise $d'_u(v,v')=\infty.$ Let $\det(\mbox{\nsvt})$ be
the function given by the deterministic algorithm
described earlier.

The accuracy for {\nsvt} is given by Theorem 3.28 of \citet{DR14},
which can be used to show that it is $(\alpha,\beta,\alpha)$-accurate
with $\beta= 4mc\euler^{-\frac{\alpha\epsilon}{9c}}.$ The following
theorem gives a better accuracy result in which $\beta$ is specified
as a function of both $\alpha$ and $\gamma.$ 
\ifdefined\AppendixTrue
(See  Appendix~\ref{sec:acc-ex-proofs} for a proof.) 
\fi 
In the special case, when
$\gamma=\alpha$, we get a value of $\beta$ (given in
Corollary~\ref{cor:NumSparse}) that is smaller than the above value.


\begin{theorem}
	\label{thm:nsvt}
{\nsvt} is $(\alpha,\beta,\gamma)$-accurate where
$\beta = 2mc\euler^{-\frac{\alpha\epsilon}{9c}}+c\euler^{-\frac{\gamma \epsilon}{9c}}.$
\end{theorem}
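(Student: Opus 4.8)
The plan is to bound the probability that {\nsvt} fails to produce an output within distance $\gamma$ of $\det(\mbox{\nsvt})(u)$, for any input $u$ whose {\distance} exceeds $\alpha$, by separating two independent sources of error. The first source is a \emph{comparison error}: the noisy threshold test $\rv \geq \rv_T$ returns the wrong verdict for some query whose true value $q_i$ is bounded away from $T$ (by more than $\alpha$, which is guaranteed since $\dd(\mbox{\nsvt},u)>\alpha$ forces $|q_i - T| > \alpha$ for all active queries). The second source is a \emph{magnitude error}: even when the comparison is resolved correctly, the noised output $q_i + \Lap{\frac{\epsilon}{9c},0}$ may deviate from $q_i$ by more than $\gamma$. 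The structure of $\beta$ in the statement --- a sum of two exponential terms --- strongly signals that these two error events should be bounded separately and combined by a union bound, with the $2mc\,\euler^{-\alpha\epsilon/9c}$ term accounting for comparison failures and the $c\,\euler^{-\gamma\epsilon/9c}$ term for magnitude failures.

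First I would reuse the threshold analysis already developed for {\svttwo} and {\athres}. The comparison step in {\nsvt} (Figure~\ref{algo-numeric}) uses noise $\rv_T \gets \Lap{\frac{4\epsilon}{9c},T}$ for the threshold and $\rv \gets \Lap{\frac{2\epsilon}{9c},q[i]}$ for each query, which are exactly the scale parameters one needs so that the argument from Lemma~\ref{lem:AT} goes through with the $9c$ denominator. Concretely, for a query with $|q_i - T| > \alpha$, the probability that the noisy comparison $\rv \geq \rv_T$ gives the wrong sign is bounded by a tail of the difference of two Laplace variables; summing the per-query failure probabilities over the (at most $m$) queries and the (at most $c$) threshold resamplings yields a bound of the form $2mc\,\euler^{-\alpha\epsilon/9c}$. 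Second, conditioned on all comparisons being correct, each of the at most $c$ emitted numeric answers is $q_i$ corrupted by a single $\Lap{\frac{\epsilon}{9c},0}$ sample; the probability that such a sample exceeds $\gamma$ in absolute value is $\euler^{-\gamma\epsilon/9c}$, so a union bound over the $\leq c$ outputs contributes $c\,\euler^{-\gamma\epsilon/9c}$.

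The remaining step is to verify that a correct comparison profile together with all numeric perturbations below $\gamma$ actually places the whole output vector $v$ inside $B(\det(\mbox{\nsvt})(u),u,\gamma)$ under the distance $d'_u$ defined for {\nsvt}. This requires checking that the positions where the deterministic algorithm emits $\bot$ coincide exactly with the positions where {\nsvt} emits $\bot$ (guaranteed by correct comparisons), and that at the real-valued positions the coordinatewise $\max$ of $|v_j - v'_j|$ stays $\leq \gamma$ (guaranteed by the magnitude bound); if the $\bot$/$\Reals$ pattern ever mismatches, $d'_u$ jumps to $\infty$, so it is essential that the comparison-error event already subsumes every such mismatch. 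A final union bound over the two complementary failure events gives total failure probability at most $2mc\,\euler^{-\alpha\epsilon/9c} + c\,\euler^{-\gamma\epsilon/9c} = \beta$.

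The main obstacle I anticipate is the conditioning and independence bookkeeping in the adaptive loop: because $\rv_T$ is resampled after each $\top$-event and the loop's control flow (which queries are even examined, and when the $count \geq c$ exit fires) depends on earlier random outcomes, the emitted numeric noises are not drawn unconditionally but along a data-dependent execution path. Making the union bound rigorous therefore requires arguing that, on the event that all comparisons match the deterministic verdict, the loop follows exactly the deterministic control flow and the $\leq c$ numeric samples are fresh and independent of the comparison noise --- so that the magnitude tail bound applies conditionally and the two exponential contributions genuinely add. This is the same subtlety that makes the {\svttwo} accuracy proof delicate, and carefully isolating the correct-comparison event before invoking independence of the output-noise samples is where the real work lies.
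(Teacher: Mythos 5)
Your plan matches the paper's proof essentially step for step: the paper likewise uses the distance-to-disagreement hypothesis to show every active query satisfies $|q_i - T| \geq \alpha$, invokes the {\svttwo}/{\athres} accuracy analysis (with the $9c$ scale parameters) to bound the comparison-failure probability by $2mc\,\euler^{-\alpha\epsilon/9c}$, bounds the magnitude error of the $\leq c$ emitted Laplace-noised answers by $c\,\euler^{-\gamma\epsilon/9c}$, and combines the two via $(1-\beta_1)(1-\beta_2) \geq 1-(\beta_1+\beta_2)$. Your closing remark about conditioning along the data-dependent execution path is the right thing to worry about and is in fact treated more tersely in the paper's own argument than in your plan.
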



\begin{corollary} \label{cor:NumSparse} For any $\alpha>0$, {\nsvt} is
  $(\alpha,\beta,\alpha)$-accurate where $\beta = (2m+1)c\euler^{-\frac{\alpha\epsilon}{9c}}$
\end{corollary}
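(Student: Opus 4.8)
The plan is to reduce the accuracy claim to two independent sources of error---\emph{misclassification} of queries relative to the threshold, and \emph{deviation} of the reported noisy values---and to bound each by one of the two summands in $\beta$. First I would unpack what $\dd(\nsvt,u) > \alpha$ buys us. Under the input distance $d$ (which is $\infty$ whenever two inputs differ on the value of some above-threshold query), the only finite-distance disagreements arise from moving a single query across $T$: promoting a below-threshold query $q_i < T$ up to $T$, or demoting an above-threshold query $q_i \geq T$ below $T$. Exactly as in the {\athres} reasoning of Lemma~\ref{lem:correct-acc}, $\dd(\nsvt,u) > \alpha$ therefore forces every query to be bounded away from the threshold: $q_i < T-\alpha$ for each below-threshold query and $q_i \geq T+\alpha$ for each above-threshold query. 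Let $v = \det(\nsvt)(u)$ denote the correct output.

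Next I would decompose the failure event $P(u) \notin B(v,u,\gamma)$. By the definition of $d'_u$, the distance is $\infty$ as soon as some coordinate is $\bot$ in one output and real in the other, so any run in which the pattern of $\bot$'s and real values produced by $\nsvt$ differs from that of $v$ is already a failure; call this the \emph{pattern error} event $\cE_A$. The only remaining way to fail is for some reported value $\Lap{\frac{\epsilon}{9c}, q_j}$ to land more than $\gamma$ from the true $q_j$; call this $\cE_B$. Since the failure event is contained in $\cE_A \cup \cE_B$, a union bound gives $\Prob(P(u)\notin B(v,u,\gamma)) \leq \Prob(\cE_A) + \Prob(\cE_B)$, and it remains to bound each term.

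For $\cE_A$ I would follow the {\svttwo} classification analysis (Theorem 3.26 of~\citet{DR14}, used in Section~\ref{sec:sparse}). Writing the comparison noise as $\nu_i \sim \Lap{\frac{2\epsilon}{9c}, 0}$ and each threshold (re)sample as $\rho \sim \Lap{\frac{4\epsilon}{9c}, 0}$, define the good event in which $|\nu_i| < \frac{\alpha}{2}$ for all $m$ queries and $|\rho| < \frac{\alpha}{2}$ for all at most $c$ threshold samples. Under this event, no below-threshold query (with $q_i < T-\alpha$) can be promoted and no above-threshold query (with $q_i \geq T + \alpha$) can be demoted, so the classification at every step agrees with $\det(\nsvt)$, the count increments at the same positions, early termination occurs at the same place, and the emitted pattern is exactly that of $v$; hence $\cE_A$ lies in the complement of the good event. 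Since $\Prob(|\nu_i| \geq \frac{\alpha}{2}) = \euler^{-\frac{\alpha\epsilon}{9c}}$ and $\Prob(|\rho| \geq \frac{\alpha}{2}) = \euler^{-\frac{2\alpha\epsilon}{9c}} \leq \euler^{-\frac{\alpha\epsilon}{9c}}$, a union bound over the $\leq m$ query samples and $\leq c$ threshold samples gives $\Prob(\cE_A) \leq (m+c)\,\euler^{-\frac{\alpha\epsilon}{9c}} \leq 2mc\,\euler^{-\frac{\alpha\epsilon}{9c}}$. For $\cE_B$, note the algorithm exits after at most $c$ above-threshold queries, so at most $c$ coordinates report a real value, each of the form $q_j + \xi_j$ with $\xi_j \sim \Lap{\frac{\epsilon}{9c}, 0}$ sampled \emph{independently} of the comparison noise; since $\Prob(|\xi_j| > \gamma) = \euler^{-\frac{\gamma\epsilon}{9c}}$, a union bound yields $\Prob(\cE_B) \leq c\,\euler^{-\frac{\gamma\epsilon}{9c}}$. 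Adding the two bounds gives $\beta = 2mc\,\euler^{-\frac{\alpha\epsilon}{9c}} + c\,\euler^{-\frac{\gamma\epsilon}{9c}}$, as claimed.

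The main obstacle I anticipate is making the pattern-error argument for $\cE_A$ fully rigorous in the presence of the algorithm's adaptive, sequential control flow: threshold resampling after each reported value and early termination after the $c$-th above-threshold query mean the good event must be stated so that it genuinely dominates the realized trajectory, and one must check that the fresh, independent samples keep the union bound valid across the (at most $c$) threshold re-samples. A secondary point requiring care is the precise boundary behavior at $q_i = T$ when translating $\dd(\nsvt,u) > \alpha$ into the strict separation $q_i < T-\alpha$ / $q_i \geq T+\alpha$, which I would handle exactly as in the proof of Lemma~\ref{lem:correct-acc}. The clean factorization into $\cE_A$ and $\cE_B$ crucially relies on the output-value noise being drawn independently of the comparison noise, which is what lets the two summands of $\beta$ be bounded separately.
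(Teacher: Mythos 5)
Your proposal is correct and follows essentially the same route as the paper: the corollary is just Theorem~\ref{thm:nsvt} with $\gamma=\alpha$, and your argument re-derives that theorem by the same decomposition---translating $\dd(\nsvt,u)>\alpha$ into threshold separation, bounding the pattern error via the {\svttwo} analysis (which you inline as a union bound rather than citing Theorem~3.26 of \citet{DR14}), and bounding the reported-value error via the Laplace tail over at most $c$ coordinates. Substituting $\gamma=\alpha$ into your final expression gives $(2m+1)c\euler^{-\frac{\alpha\epsilon}{9c}}$ exactly as claimed.
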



\subsection{\textsf{SmallDB}}
The algorithm {\textsf{SmallDB}} is given in \citet{DR14}.  Using the
notation of \citet{DR14}, we let $\cX$ denote a finite set of database
records. We let $n= \mbar\cX\mbar$ and
$\cX\:=\set{\cX_i\st 1\leq i\leq n}.$ A database $x=(x_1,...,x_n)$ is
a $n$-vector of natural numbers, where $x_i$ denotes the number of
occurrences of $\cX_i$ in the database. Thus, $\Nats^n$ is the set of
possible databases. The size of the database $x$ is simply
$\norm{x}_1.$ A query $f$ is a function $f\::\cX\rightarrow [0,1].$
The query $f$ is extended to the set of databases, by defining
$f(x)\:=\sum_{1\leq i\leq n}x_if(\cX_i).$

The algorithm {\textsf{SmallDB}} takes as input a database $x$, a
finite set of queries $\cQ$ and two real parameters $\epsilon,a>0$ and
outputs a small database from a set $\cR$, which is the set of small
databases of size $\frac{\log \mathbin{|}\cQ \mathbin{|}}{\alpha^2}.$
Our parameter $a$ is the parameter $\alpha$ of \citet{DR14}. The
algorithm employs the exponential mechanism using the utility function
$u\::\Nats^n\times \cV\rightarrow\Reals$, defined by
$u(x,v)\:= - \max \set{\mathbin{|}f(x)-f(v)\mathbin{|}\st f\in \cQ}.$
As in the {\exponential} mechanism, we take $\cU=\Nats^n$ and the
distance $d$ to be as in the {\laplace} mechanism. The distance $d'_x$
is as given in the {\exponential} mechanism. The function
$\det(\mbox{\textsf{SmallDB}})$, is defined exactly as
$\det(P^{\s{Exp}}_\epsilon)$, i.e.,
$\det(\mbox{\textsf{SmallDB}})(x)=\argmax1_v u(x,v)$. For any
$x\in \cU$, let $v_x= \det(\mbox{\textsf{SmallDB}})(x).$ Using
Proposition 4.4 of \citet{DR14} and its proof, it can easily be shown
that \textsf{SmallDB} is $(0,\beta,\gamma)$-accurate where
$$\gamma = a -
\mathbin{|}u(x,v_x)\mathbin{|}  + \frac{2}{\epsilon
  \norm{x}_1}\left(\frac{\log(\mathbin{|}X\mathbin{|})\log(\mathbin{|}Q\mathbin{|})}{a^2}
+ \log \left(\frac{1}{\beta}\right)\right).$$
        \section{The Accuracy Problem and its Undecidability}
\label{sec:undecidability}

Armed with a definition of what it means for a differential privacy
algorithm to be accurate,  we are ready to define the computational
problem(s) associated with checking accuracy claims. Let us fix a
differential privacy algorithm $P_\epsilon$ and the deterministic
algorithm $\det(P_\epsilon)$ against which it will be measured. Let us
also fix the set of inputs $\cU$ and outputs $\cV$ for
$P_\epsilon$. Informally, we would like to check if $P_\epsilon$ is
$(\alpha,\beta,\gamma)$-accurate. Typically, $\alpha,\beta,\gamma$
depend on both $\epsilon$ and the input $u$. Furthermore, the program
$P_\epsilon$ may only be well defined for $\epsilon$ belonging to some
interval $I$. Therefore, we introduce some additional parameters to
define the problem of checking accuracy.

Let $I \subseteq \Reals^{\geq 0}$ be an interval with rational end-points. Let
$\Sigma\:= \Reals^{\geq 0}\times [0,1]\times \Reals^{\geq
  0}$. Consider $\reg: I \times \cU \to \powerset{\Sigma}$, where
$\powerset{\Sigma}$ denotes the powerset of $\Sigma$. Here
$\reg(\epsilon,u)$ is the set of all valid $(\alpha,\beta,\gamma)$
triples for privacy budget $\epsilon$ and input $u$. $\reg$ shall henceforth be referred to 
as the \emph{admissible region}. 



Further, let $\din:\cU\times \cU\to \Reals^{\infty}$
be a distance function on $\cU.$ Let $\dout: \cU\times \cV\times \cV\to \Reals^\infty$ be such that 
for each $u \in \cU,$ the function $d'_u:\cV\times \cV\to \Reals^\infty$ defined as $d'_u(v_1,v_2)=\dout(u,v_1,v_2)$ is 
 a distance function on $\cV$. We will call $\din$ the \emph{input distance function} and $\dout$ the \emph{output distance function}. 
 The following two problems will be of
interest.
\begin{description}
\item[{\bf Accuracy-at-an-input:}] Given input $u$, determine if $P_\epsilon$ is $(\alpha,\beta,\gamma)$-accurate for all $\epsilon \in I$ and $(\alpha,\beta,\gamma) \in \reg(\epsilon,u)$.
\item [{\bf Accuracy-at-all-inputs:}] For all inputs $u$, determine if $P_\epsilon$ is $(\alpha,\beta,\gamma)$-accurate for all $\epsilon \in I$ and $(\alpha,\beta,\gamma) \in \reg(\epsilon,u)$.
\end{description}

\paragraph{Counterexamples.}
A counterexample for the accuracy-at-an-input decision problem, with
program $P_\epsilon$ and $u \in \cU$, is a quadruple
$(\epsilon_0,\alpha,\beta,\gamma)$ such that
$(\alpha,\beta,\gamma)\in \reg(\epsilon_0,u)$ and the program
$P_{\epsilon_0}$ is not $(\alpha,\beta,\gamma)$-accurate at input $u$,
where $\reg$ is the function specifying valid sets of parameters as
described above. Along the same lines, a counterexample for the
accuracy-at-all-inputs decision problem is a quintuple
$(u,\epsilon_0,\alpha,\beta,\gamma)$ such that
$(\epsilon_0,\alpha,\beta,\gamma)$ is a counterexample for the
accuracy-at-an-input decision problem at input $u$ for $P_\epsilon.$

The following theorem shows that the two decision problems above are
undecidable for general randomized programs $P_{\epsilon}.$ 
\ifdefined\AppendixTrue
(See Appendix~\ref{sec:undec-proof} for the proof).
\fi
\begin{theorem}
	\label{thm:undec}
Both the problems Accuracy-at-an-input and Accuracy-at-all-inputs are
undecidable for the general class of randomized programs.
\end{theorem}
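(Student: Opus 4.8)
The plan is to establish undecidability by reduction from a known undecidable problem about probabilistic machines. The natural source is the emptiness/threshold problem for probabilistic automata (or equivalently the value-$1$ problem or the halting behaviour of probabilistic Turing machines), since these are the standard tools for proving undecidability of quantitative questions about randomized computation. Specifically, I would reduce from the problem of deciding, given a probabilistic Turing machine $M$ and input $w$, whether $\Prob(M \text{ halts on } w) \geq \tfrac{1}{2}$ (or some fixed threshold), which is well known to be undecidable.

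First I would fix the peripheral data of the accuracy problem so that the distance functions and the admissible region $\reg$ play no interesting role: take $\cU$ to be a trivial singleton-like input space (or a two-point space used only to encode the ``disagreement'' structure), let the input distance $d$ be the discrete metric so that $\dd(P,u)$ is a fixed positive constant exceeding $\alpha$, and choose $\reg$ to return a single triple $(\alpha,\beta,\gamma)$ with, say, $\gamma=0$ and a fixed $\beta$. With these choices the accuracy condition $\Prob(P(u)\in B(\det(P)(u),u,\gamma)) \geq 1-\beta$ collapses to a single statement of the form $\Prob(P(u)=v^\ast)\geq 1-\beta$, i.e. a pure threshold question on the output distribution of a randomized program. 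Then I would build a randomized program $P_\epsilon$ that, on its (essentially fixed) input, simulates $M$ on $w$ step by step, outputting the designated ``correct'' value $v^\ast=\det(P)(u)$ exactly when the simulation halts and some other value otherwise, so that $\Prob(P(u)=v^\ast)$ equals the halting probability $\Prob(M \text{ halts on } w)$.

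The main obstacle I expect is twofold. The first issue is the role of the parameter $\epsilon$ and the interval $I$: since the problems quantify over all $\epsilon \in I$, I must make the construction uniform in $\epsilon$, either by having the simulation ignore $\epsilon$ entirely (using, e.g., fair coin flips that are independent of the privacy budget) or by ensuring the halting probability is the same for every $\epsilon\in I$; the cleanest route is to design $P_\epsilon$ so that its output distribution does not depend on $\epsilon$ at all, reducing the ``for all $\epsilon$'' quantifier to a single instance. The second issue is simulating an arbitrary probabilistic machine within whatever syntactic class ``general randomized programs'' denotes in the theorem; since the theorem is explicitly about the \emph{general} class (not the restricted {\newlang}), I expect this class to be Turing-complete with a sampling primitive, so faithful simulation is unproblematic, but I must verify that the sampling primitives available suffice to realize the coin flips driving $M$'s transitions.

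Finally, the reduction yields undecidability of \emph{Accuracy-at-an-input} directly, since the single fixed input $u$ witnesses it. For \emph{Accuracy-at-all-inputs}, the same construction applies because the undecidable instance already forces the all-inputs quantifier to reduce to the single relevant input (all other inputs, if any, can be made trivially accurate by the choice of $\det(P)$ and distances), so a decision procedure for either problem would decide the halting-probability threshold problem, a contradiction. I would conclude that both problems are undecidable.
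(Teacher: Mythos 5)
Your reduction is sound for the theorem as stated, but it follows a genuinely different route from the paper's. You reduce from the threshold problem for probabilistic Turing machines ($\Prob(M \text{ halts on } w) \geq \tfrac{1}{2}$), which is indeed undecidable but is itself usually established by a further reduction from the halting problem; the paper instead reduces \emph{directly} from the halting problem for deterministic $2$-counter machines started at zero. The paper's key device is to set $\beta=0$, so that accuracy at the input collapses to ``the correct output is produced with probability exactly $1$,'' and to interleave the simulation of $M$ with a coin flip that, with probability $\tfrac{1}{2}$ per step, aborts the simulation and emits the correct output (while an actual halt of $M$ emits the wrong one); a halting computation of length $k$ then contributes $2^{-k}>0$ to the wrong-output event, so the probability-$1$ condition holds iff $M$ does not halt. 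This buys two things your construction does not: the source of the reduction is the plain halting problem (no appeal to quantitative undecidability results), and the constructed program terminates with probability $1$ on every input. The latter is where your proposal needs care: your phrase ``outputting some other value otherwise'' is not realizable, since if the simulation of $M$ diverges then $P$ diverges, so $\sem{P}(u)$ is only a subdistribution on $\cV$; the paper's semantics treats $\sem{P}(u)$ as a distribution on outputs, and almost-sure termination is assumed elsewhere in the paper. You should either state explicitly that the ``general class'' admits divergent programs (in which case your argument goes through, since divergence simply does not count toward the good event), or adopt the escape-coin trick --- but the latter distorts the halting probability and breaks your exact equality $\Prob(P(u)=v^\ast)=\Prob(M\text{ halts})$, which would push you back toward the paper's $\beta=0$ formulation. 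Your handling of the $\epsilon$-independence and of the all-inputs variant matches the paper's.
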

%

\begin{remark}
For simplicity of presentation, we will assume that the interval $I$ is always the set $\Reals^{>0}.$
\end{remark}

\section{A Decidable class of Programs}
\label{sec:decide-pgm}

In this section we will identify a class of programs, called
{\newlang}, for which we will prove decidability results in
Section~\ref{sec:decidability}. {\newlang} programs are probabilistic
while programs that are an extension of the language {\ourlang}
introduced by~\citet{BartheCJS020}, for which checking differential
privacy was shown to be decidable. Our decidability results rely
crucially on the observation that the semantics of {\newlang} programs
can be defined using finite state parametric DTMCs, whose transition
probabilities are definable in first order logic over
reals. Therefore, we begin by identifying fragments of first order
logic over reals that are relevant for this paper
(Section~\ref{sec:rexp}) before presenting the syntax
(Section~\ref{sec:dipwhile}) followed by the DTMC semantics
(Section~\ref{sec:semantics}) for {\newlang} programs.

\subsection{Theory of Reals}
\label{sec:rexp}

Our approach to deciding accuracy relies on reducing the problem to
that of checking if a first order sentence holds on the reals. The use
of distributions like Laplace and Exponential in
algorithms, ensure that the sentences constructed by our reduction
involve exponentials. Therefore, we need to consider the full first
order theory of reals with exponentials and its sub-fragments.

Recall that $\rlin = \tuple{\Reals,0,1,+,<}$ is the first order
structure of reals, with constants $0,1$, addition, and the usual
ordering $<$ on reals. The set of first order sentences that hold in
this structure, denoted $\folreal$, is sometimes called the first
order theory of linear arithmetic. The structure
$\rpol = \tuple{\Reals,0,1,+,\times,<}$ also has multiplication, and
we will denote its first order theory,  the theory of real closed
fields, as $\foreal$. The celebrated result due to
~\citet{tarski-real} is that $\folreal$ and $\foreal$ admit
quantifier elimination and are decidable. 

\begin{definition}
	\label{def:definable-func-simple}
A partial function $f : \Reals^n  \hookrightarrow \Reals^k$ is said to be \emph{definable} in $\folreal$/$\foreal$ respectively, if there are  formulas  $\psi_{f}(\bar{x})$ and  $\varphi_f(\bar{x},\bar{y})$ over the signature of $\rlin$/ $\rpol$ respectively with $n$ and $n+k$ free variables respectively ($\bar{x}$ and $\bar{y}$ are vectors of $n$ and $k$ variables respectively) such that
	\begin{enumerate}
	\item for all  $\bar{a} \in \Reals^n, \ \bar{a} \in \dom(f) $ iff $\mathfrak{R} \models \psi_f[\bar{x} \mapsto \bar{a}]$, and
	\item for all  $\bar{a} \in \Reals^n$ such that $\bar{a} \in \dom(f)$, $f(\bar{a}) = \bar{b}$  iff  $\mathfrak{R}\models \varphi_f[\bar{x} \mapsto \bar{a}, \bar{y} \mapsto \bar{b}]$ 	\end{enumerate}
	where
	   $\mathfrak{R}$ is $\rlin$/$\rpol$ respectively. 

	%
\end{definition}

Finally, the real exponential field
$\rexp = \tuple{\Reals, 0, 1, \eulerv{(\cdot)}, +, \times}$ is the
structure that additionally has the unary exponential function
$\eulerv{(\cdot)}$ which maps $x \mapsto \eulerv{x}$. A long-standing
open problem in mathematics is whether its first-order theory (denoted
here by $\fthreal$) is decidable. However, it was shown by~\citet{mw96} that $\fthreal$ is decidable provided
Schanuel's conjecture (see ~\citet{lang}) holds for the set of reals\footnote{Schanuel's conjecture for reals states that if the real numbers $r_1,\ldots,r_n$ are linearly independent over $\Rats$ (the rationals) then the transcendence degree of the  field extension 
$\Rats(r_1,\ldots, r_n, \eulerv{r_1},\ldots, \eulerv{r_n})$  is $\geq n$ (over $\Rats$).}.

Some fragments of $\fthreal$ with the exponential function are known
to be decidable. In particular, there is a fragment identified by
~\citet{mccallum2012deciding} that we exploit
in our results. Let the language $\lngreal$ be the first-order
formulas over a restricted vocabulary and syntax defined as
follows. Formulas in $\lngreal$ are over the signature of $\rexp$,
built using variables
$\set{\epsilon} \cup \set{x_i\: |\: i \in \Nats}$. In $\lngreal$, the
unary function $\eulerv{(\cdot)}$ shall only be applied to the
variable $\epsilon$ and rational multiples of $\epsilon$. Thus, terms in $\lngreal$ are 
polynomials with rational coefficients over the variables
$\set{\epsilon} \cup \set{x_i\: |\: i \in \Nats} \cup
\set{\eulerv{\epsilon}} \cup \set{\eulerv{q\epsilon}\st q\in \Rats}$. Atomic formulas in the language are of the
form $t = 0$, $t < 0$, or $0 < t$, where $t$ is a term. Quantifier
free formulas are Boolean combinations of atomic formulas. Sentences
in $\lngreal$ are formulas of the form
\[
Q\epsilon Q_1x_1\cdots Q_nx_n \psi(\epsilon,x_1,\ldots, x_n)
\]
where $\psi$ is a quantifier free formula, and $Q$, $Q_i$s are
quantifiers~\footnote{Strictly speaking, \citet{mccallum2012deciding} allow $\eulerv{(\cdot)}$ to be applied only to $\epsilon.$ However, any sentence in $\lngreal$ with  terms of the form $\eulerv{q\epsilon}$ with $q\in \Rats$ can be easily shown to be equivalent to a formula where $\eulerv{(\cdot)}$ is applied only to $\epsilon.$ See \citet{BartheCJSVtech} for examples.}. In other words, sentences are formulas in prenex form,
where all variables are quantified, and the outermost quantifier is
for the special variable $\epsilon$. The theory $\threal$ is the
collection of all sentences in $\lngreal$ that are valid in the
structure $\rexp$. The crucial property for this theory is that it is
decidable.
\begin{theorem}[\citet{mccallum2012deciding}]
	$\threal$ is decidable.
\end{theorem}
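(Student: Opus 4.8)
The plan is to reduce an arbitrary sentence of $\threal$ to a decision problem in a single real variable carrying one exponential, and then to solve that one-variable problem by an effective analysis of exponential polynomials. The decisive structural feature of $\lngreal$ is that $\eulerv{(\cdot)}$ is applied only to rational multiples of the distinguished variable $\epsilon$. So first I would collect the finitely many rationals $q$ for which $\eulerv{q\epsilon}$ occurs in the given sentence (including $q=1$), write them over a common denominator $N$, and introduce a fresh variable $s$ standing for $\eulerv{\epsilon/N}$. Then every exponential term $\eulerv{q\epsilon}$ becomes a power $s^{Nq}$ with $Nq \in \integers$ (negative powers cleared using $s>0$), so the quantifier-free matrix $\psi(\epsilon,x_1,\ldots,x_n)$ turns into a quantifier-free formula $\widehat{\psi}(\epsilon,s,x_1,\ldots,x_n)$ over the purely polynomial signature of $\rpol$, with the semantic side-condition $s=\eulerv{\epsilon/N}$ held in reserve.

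Second, treating $\epsilon$ and $s$ as free parameters, I would eliminate the inner quantifiers $Q_1x_1\cdots Q_nx_n$ using Tarski's quantifier-elimination procedure for the theory $\foreal$ of real closed fields~(\citet{tarski-real}). This yields an equivalent quantifier-free formula $\theta(\epsilon,s)$ in just the two variables $\epsilon,s$, i.e.\ a Boolean combination of atoms $P(\epsilon,s)\bowtie 0$ with $P\in\Rats[\epsilon,s]$ and $\bowtie\,\in\{=,<,>\}$. The original sentence is therefore equivalent to the single-quantifier sentence $Q\epsilon\,\theta(\epsilon,\eulerv{\epsilon/N})$, a statement about one real unknown $\epsilon$ in which the only transcendental ingredient is the single exponential $\eulerv{\epsilon/N}$.

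Third, I would decide this one-variable exponential sentence. Substituting $s=\eulerv{\epsilon/N}$, each atom becomes $g(\epsilon)\bowtie 0$ for an exponential polynomial $g(\epsilon)=P(\epsilon,\eulerv{\epsilon/N})$ in the single variable $\epsilon$. By the o-minimality of the real exponential field $\rexp$, the zero set of each such $g$ is a finite union of isolated points, so the truth value of $\theta(\epsilon,\eulerv{\epsilon/N})$ changes only finitely often as $\epsilon$ ranges over $\Reals$; the real line decomposes into finitely many cells on each of which $\theta$ is constant, and $Q\epsilon\,\theta$ is then read off by inspecting one sample point per cell together with the behaviour at $\pm\infty$.

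The crux — and the step that must be made effective rather than merely existential — is rendering this decomposition computable: bounding the number of real zeros of a one-variable exponential polynomial $g$, isolating them in rational intervals, and deciding the sign of $g$ at rational sample points and its limits at $\pm\infty$. This is exactly where the argument can remain unconditional, in contrast to the full theory $\fthreal$, whose decidability via~\citet{mw96} rests on Schanuel's conjecture: for a single exponential variable one can bound zero counts and evaluate signs by analysing $g$ together with its derivatives (each again an exponential polynomial of the same shape, with zeros interleaved with those of $g$ by Rolle's theorem), so a root-isolation and sign-determination procedure can be assembled by induction on the derivative/degree structure. I expect this effective zero-counting and sign-evaluation for single-variable exponential polynomials to be the main technical obstacle; everything upstream — the $\eulerv{q\epsilon}\mapsto s^{Nq}$ normalization and the polynomial quantifier elimination — is routine.
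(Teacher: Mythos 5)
The paper offers no proof of this statement: it is imported as a black box from \citet{mccallum2012deciding}, so the only thing to compare your attempt against is the cited work itself. Your reconstruction does follow the same two-stage strategy as that proof: rewrite every $\eulerv{q\epsilon}$ as an integer power of a single $s=\eulerv{\epsilon/N}$, run Tarski quantifier elimination \cite{tarski-real} on the inner block $Q_1x_1\cdots Q_nx_n$ with $\epsilon,s$ as parameters, and reduce to deciding $Q\epsilon\,\theta(\epsilon,\eulerv{\epsilon/N})$ where $\theta$ is a Boolean combination of polynomial sign conditions in one variable and one exponential. That reduction is sound (your side condition $s>0$ for clearing negative powers is the right one), and you correctly locate the crux in effective root isolation for univariate exponential polynomials.

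The genuine gap is in how you propose to discharge that crux. Rolle-type induction on derivatives can bound and separate the real zeros of $g(\epsilon)=P(\epsilon,\eulerv{\epsilon/N})$, but it does not supply the primitive your cell decomposition actually consumes: deciding, for a given rational sample point $q$, whether $g(q)$ is zero, positive, or negative. Interval arithmetic certifies a nonzero sign to any precision but can never certify $g(q)=0$, so the procedure as written fails to terminate exactly when a sample point (or a candidate multiple root shared by $g$ and $g'$) is a genuine zero; ``analysing $g$ together with its derivatives'' does not resolve this. The missing idea is a transcendence input: by the Hermite--Lindemann theorem, $\eulerv{q/N}$ is transcendental for every nonzero rational $q$, so $P(q,\eulerv{q/N})=0$ holds if and only if the univariate polynomial $y\mapsto P(q,y)$ is identically zero, which is decidable by inspecting coefficients (the point $q=0$ is evaluated directly since $\eulerv{0}=1$). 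This is precisely the unconditional ingredient that lets $\threal$ escape the fate of the full theory $\fthreal$, whose decidability via \citet{mw96} still rests on Schanuel's conjecture. A secondary wrinkle: differentiating $P(x,\eulerv{x/N})$ does not reduce the exponential degree, so the classical zero-counting induction first factors out an exponential before applying Rolle; your ``induction on the derivative/degree structure'' needs that adjustment to be well founded.
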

We will denote the set of  formulas of the form $Q_jx_j\cdots Q_nx_n \psi(\epsilon,x_1,\ldots, x_n)$ by $\lngreal(\epsilon,x_1\ldots,x_{j-1}).$
Finally, our tractable restrictions (and our proofs of decidability)
shall often utilize the notion of partial functions being
\emph{parametrically definable} in $\threal$; we therefore conclude
this section with the formal definition.

\begin{definition}
	\label{def:definable-func}
A partial function $f : \Reals^n \times (0,\infty) \hookrightarrow \Reals^k$ is said to be \emph{parametrically definable} in $\threal$, if there are  formulas  $\psi_{f}(\bar{x},\epsilon)$ and  $\varphi_f(\bar{x},\epsilon,\bar{y})$ over the signature of $\rexp$ with $n+1$ and $n+k+1$ free variables respectively ($\bar{x}$ and $\bar{y}$ are vectors of $n$ and $k$ variables respectively, and $\epsilon$ is a variable) such that
	\begin{enumerate}
	\item for every $\bar{a} \in \Rats^n$, the formulas $\psi_f[\bar{x} \mapsto \bar{a}]$ and  $\varphi_f[\bar{x} \mapsto \bar{a}]$ are in $\lngreal(\epsilon,\bar y)$,
	\item for all  $\bar{a} \in \Reals^n,\ b \in (0,\infty).\ (\bar{a},b) \in \dom(f) $ iff $\rexp \models \psi_f[\bar{x} \mapsto \bar{a}, \epsilon \mapsto b]$, and
	\item for all  $\bar{a} \in \Reals^n,\ b \in (0,\infty)$ such that $(\bar{a},b) \in \dom(f)$, $f(\bar{a},b) = \bar{c}$  iff $\rexp \models \varphi_f[\bar{x} \mapsto \bar{a},\epsilon \mapsto b, \bar{y} \mapsto \bar{c}].$
	\end{enumerate}
When $n = 0$, we simply say that $f$ is \emph{definable} in $\threal$.
\end{definition}

\begin{example}
Consider the {\nsvt} algorithm from Section~\ref{sec:nsvt}. Assume that the algorithm is run on an array of size $2$ with threshold $T$ set to $0.$  Assume that the array elements take the value $x_1$ and $x_2$ with $x_1<0<x_2.$ Given $x_\gamma \geq 0$, let $p(x_1,x_2,x_\gamma,\epsilon)$ denote the probability of obtaining the output $(\bot,z)$ such that $\abs{x_2-z} < x_\gamma.$ Note that $p$ can be viewed as a partial function
   $p: R^3 \times (0,\infty) \hookrightarrow R$ with domain $\set{(x_1,x_2,x_\gamma,\epsilon)\st x_1<0<x_2, x_\gamma>0, \epsilon>0}.$ Further, $p(x_1,x_2,x_\gamma,\epsilon)$ is 
   the product $p_1 p_2$ where $$p_1=1-\frac {2} {3} (\eulerv{\frac 2 9 x_1 \epsilon }+\eulerv{-\frac{2} 9 x_2 \epsilon })+\frac 1 6 (\eulerv{\frac 4 9 x_1 \epsilon}+ \eulerv{-\frac 4 9 x_2 \epsilon}) -\frac 1 {48} (\eulerv{\frac 1 9 ( 6 x_1-2 x_2) \epsilon}+ \eulerv{-\frac 1 9 ( 6 x_2-2 x_1) \epsilon}) + \frac 1 4 \eulerv{\frac{2}{9} (x_1-x_2) \epsilon}  $$ and $p_2= 1- \eulerv{-\frac 1 9 x_\gamma \epsilon}$.
   
$p$ is definable in $\fthreal$ with $\psi_p(x_1,x_2,x_\gamma,\epsilon)$ as the formula $(x_1<0) \wedge (x_2>0) \wedge (x_\gamma>0) \wedge (\epsilon>0)$ and $\varphi_p(x_1,x_2,x_\gamma,\epsilon,y)$ as the formula $y=p_1p_2.$ Observe that for rational $q_1,q_2,c$, the formulas   $\psi_p(q_1,q_2,c,\epsilon)$ and $\varphi_p(q_1,q_2,c,\epsilon,y)$ are  in $\lngreal(\epsilon,y).$ Hence $p$ is parametrically definable.
   
Observe that $p(x_1,x_2,x_\gamma,\epsilon)$ is the probability of obtaining an output from {\nsvt} that is at most $x_\gamma$ away from the output of $\det(\nsvt)$ for inputs of the form $x_1<0<x_2.$  This probability is parametrically definable. Such an observation will be true for 
   {\newlang} programs and is a crucial ingredient in our decidability results.
\end{example}

\subsection{{\newlang} Programs}
\label{sec:dipwhile}

\begin{figure}
\begin{framed}
\raggedright

Expressions ($\bv\in \cB, \dv\in \cX, \iv\in \cZ, \rv\in \cR, d\in \sdom, i \in \integers, q\in \Rats, g\in \sF_{Bool}, f\in \sF_\sdom$):
\[
\begin{array}{lcl}
B&::=& \true \mbar \false \mbar \bv \mbar not(B) \mbar B\ and\ B \mbar B \ or \ B \mbar g(\tilde{E})\\
E&::=&   d \mbar \dv  \mbar f (\tilde{E}) \\
Z&::=& \iv \mbar  i Z  \mbar E Z \mbar Z+Z \mbar Z+i \mbar Z+ E \\
R&::=&  \rv \mbar  q R \mbar E R \mbar R+R \mbar R+q \mbar R+ E\\
\end{array}
\]

Basic Program Statements ($a\in \Rats^{>0}$, $\sim\in \set{<,>,=, \leq, \geq}$, $F$ is  a scoring function and $\chose$ is a user-defined distribution):
\[
\begin{array}{lll}
s&::= & \bv\gets B \mbar \bv\gets Z\sim Z \mbar \bv\gets Z \sim E \mbar \bv\gets  R\sim R \mbar \bv\gets  R \sim E \mbar \\
	&& \dv\gets E  \mbar \dv\gets\pexp {a\epsilon, F(\tilde{\dv}), E} \mbar \dv\gets \chose(a\epsilon, \tilde{E}) \mbar \\
    && \iv\gets Z \mbar \iv\gets\DLap{a\epsilon,E} \mbar \\
    && \rv\gets R \mbar \rv\gets\Lap{a\epsilon,E} \mbar \rv\gets\Lap{a\epsilon,\rv} \mbar\\
    && \ifstatement \bv P P \mbar   \whilestatement \bv P \mbar \ext
\end{array}
\]

Program Statements ($\ell \in \mathsf{Labels}$)
\[
\begin{array}{lcl}
P&::= &  \ell:\ s \mbar  \ell:\ s\, ;\,  P\\
\end{array}
\]
\end{framed}

\caption{\footnotesize{BNF grammar for ${\newlang}$. $\sdom$ is a finite discrete
  domain. $\sF_{Bool}$, ($\sF_{\sdom}$ resp) are set of functions that
  output Boolean values ($\sdom$ respectively).  $\cB,\cX, \cZ, \cR$
  are the sets of Boolean variables, $\sdom$ variables, integer random
  variables and real random variables. $\sLabels$ is a set of program
  labels.  For a syntactic class $S$, $\tilde{S}$ denotes a sequence
  of elements from $S$. In addition, 
  assignments to real and integer variables do not
  occur with the scope of a while statement.}}
\label{fig:BNFFull}
\end{figure}

Recently, \citet{BartheCJSVtech,BartheCJS020} identified a class of
probabilistic while programs called {\ourlang}, for which the problem
of checking if a program is differentially private is
decidable. Moreover, the language is powerful enough to be able to
describe several differential privacy algorithms in the literature that
have finite inputs and outputs. In this paper, we extend the language
slightly and prove decidability and conditional decidability results
for checking accuracy (Section~\ref{sec:decidability}). Our extension
allows for programs to have real-valued inputs and outputs ({\ourlang}
programs only have finite-valued inputs) and for these input variables
to serve as means of the Laplace mechanisms used during sampling;
{\ourlang} programs could only use $\sdom$ expressions as means of
Laplace mechanisms. The resulting class of programs, that we call
{\newlang}, is described in this section.

The formal syntax of ${\newlang}$ programs is shown in
Figure~\ref{fig:BNFFull}. Program variables can have one of four
types: $Bool$ ($\set{\true,\false}$); $\sdom$, a finite domain, which
is assumed without loss of generality to be a finite subset of
integers $\set{-N_\mx, \ldots 0, 1, \ldots N_\mx}$~\footnote{The
  distinction between Booleans and finite domain types is for
  convenience rather than technical neccessity. Moreover, $\sdom$ can
  be any finite set, including a subset of rationals.}; integers
$\integers$; and reals $\Reals$. In Figure~\ref{fig:BNFFull},
Boolean/$\sdom$/integer/real program variables are denoted by
$\cB$/$\cX$/$\cZ$/$\cR$, respectively, and
Boolean/$\sdom$/integer/real expressions are given by non-terminals
$B/E/Z/R$. Boolean expressions ($B$) can be built using Boolean
variables and constants, standard Boolean operations, and by applying
functions from $\sF_{Bool}$. $\sF_{Bool}$ is assumed to be a
collection of \emph{computable} functions returning a $Bool$. We
assume that $\sF_{Bool}$ always contains a function $\s{EQ}(x_1,x_2)$
that returns $\true$ iff $x_1$ and $x_2$ are equal. $\sdom$
expressions ($E$) are similarly built from $\sdom$ variables, values
in $\sdom$, and applying functions from the set of computable
functions $\sF_\sdom$. Next, integer expressions ($Z$) are built using
multiplication and addition with integer constants and $\sdom$
expressions, and addition with other integer expressions. Finally,
real expressions ($R$) are built using multiplication and addition
with rational constants and $\sdom$ expressions, and additions with
other real-valued expressions. One important restriction to note is
that integer-valued expressions cannot be added or multiplied in
real-valued expressions.

A {\newlang} program is a triple consisting of a set of (private)
input variables, a set of (public) output variables, and a finite
sequence of labeled statements (non-terminal $P$ in
Figure~\ref{fig:BNFFull}). Private input and public output variables
can either be of type $\sdom$ or $\Reals$; this is an important change
from {\ourlang} where these variables were restricted to be of type
$\sdom$. Thus, the set of possible inputs/outputs ($\cU$/$\cV$), is
identified with the set of valuations for input/output variables. Note
that if we represent the set of relevant variables $X'$ as a sequence
$\dv_1,\dv_2,\ldots, \dv_m$, then a valuation $val$ over $X'$ can be
viewed as a sequence $val(\dv_1),val(\dv_2),\ldots,val(\dv_m)$.

Program statements are assumed to be uniquely labeled from a set of
labels $\s{Labels}$. However, we will often omit these labels, unless
they are needed to explain something. Basic program statements
(non-terminal $s$) can either be assignments, conditionals, while
loops, or $\ext$. Statements other than assignments are
self-explanatory. The syntax of assignments is designed to follow a
strict discipline. Real and integer variables can either be assigned
the value of real/integer expressions or samples drawn using the
Laplace or discrete Laplace mechanism. An important distinction to
note between programs in {\newlang} and
{\ourlang} by~\citet{BartheCJS020}, is that when sampling using Laplace,
real variables in addition to $\sdom$ expressions can be used as the
mean. $\sdom$ variables are either assigned values of $\sdom$
expressions or sampled values. Sampled values for $\sdom$ variables
can either be drawn using an exponential mechanism
($\pexp {a\epsilon, F(\tilde{\dv}), E}$) with a rational-valued,
computable scoring function $F$, or a user-defined distribution
($\chose(a\epsilon,\tilde{E})$), where the probability of picking a
value $d$ as function of $\epsilon$ according to $\chose$ is parametrically definable
in $\threal$ as a function of $\epsilon$. Moreover, we assume that there is an algorithm that on
input $a, \tilde{d}$ returns the formula defining the probability of
sampling $d\in \sdom$ from the distribution
$\chose(a\epsilon,\tilde{d})$, where $\tilde{d}$ is a sequence of
values from $\sdom$. For assignments to Boolean variables, it is worth
directing attention to the cases where a variable is assigned the
result of comparing two expressions. Notice that the syntax does not
allow comparing real and integer expressions. This is an important
restriction to get decidability. For technical convenience, we assume
that in any execution, variables appearing on the right side of an
assignment are assigned a value earlier in the execution.

In addition to the syntactic restrictions given by the BNF grammar in
Figure~\ref{fig:BNFFull}, we require that {\newlang} programs satisfy
the following restriction; this restriction is also used in defining
{\ourlang} by~\citet{BartheCJS020}.
\begin{description}
\item[Bounded Assignments] Real and integer variables are not assigned
  within the scope of a while loop. Therefore, real and integer
  variables are assigned only a \emph{bounded} number of times in any
  execution. Thus, without loss of generality, we assume that real and
  integer variables are assigned \emph{at most once}, as a program
  with multiple assignments to a real/integer variable can always be
  rewritten to an equivalent program where each assignment is to a
  fresh variable.
\end{description}

\noindent
{\bf {\ourlang} and {\newlang}.}
{\ourlang}, introduced by~\citet{BartheCJS020}, is a  rich language that can describe differential privacy mechanisms with finitely many input variables taking values over a finite domain, and output results over a finite domain.
Programs can sample from continuous and discrete versions of Laplacian distributions, user-defined distributions over $\sdom$ and exponential mechanism distributions with finite support. Any {\ourlang} program can be rewritten as a program in which variables are initially sampled from Laplacian distributions, comparisons between linear combinations of sampled values and inputs are stored in Boolean variables, followed by steps of a simple probabilistic program with Boolean and $\sdom$ variables.     
 {\ourlang} can express several differential privacy mechanisms such as the algorithms {\athres}, {\nmax}, {\svttwo} and exponential mechanism discussed in Section~\ref{sec:accuracy-ex}. Other examples expressible in {\ourlang} include private vertex cover~\cite{GLMRT10} and randomized response. It can also, for example, express versions of {\nmax} where the noise is sampled from an exponential distribution and not from a Laplacian distribution. 
In {\newlang}, we allow inputs to take real values. Further, we allow programs to output real values formed by linear combinations of input and sampled real variables. This allows us to express mechanisms such as 
the private smart sum
algorithm~\cite{CSS10} and {\nsvt} (See Section~\ref{sec:nsvt}). In {\ourlang}, we could only approximate these examples by discretizing the output values and restricting the input variables to take values in $\sdom$.  
{\newlang} does not allow using Gaussian mechanisms to sample, primarily because our decision procedures do not extend to such algorithms.

\begin{example}
	\label{ex:numspaL}
	Algorithm~\ref{fig:numspaL} shows how {\nsvt} can be encoded in our
  language with $T=0,\delta=0, N=2, c=1$; this is a specialized
  version of the pseudocode in Figure~\ref{algo-numeric}. The
  algorithm either outputs $\bot$ or a numeric value. We don't have
  variables of such a type in our language. We therefore encode each
  output as a pair: $\sdom$ variable $\s{o}^1$ and real variable
  $\s{o}^2$. If $\s{o}^1=0$ then output is $\bot$ and if $\s{o}^1 = 1$
  then the output is $\s{o}^2$. Though for-loops are not part of our
  program syntax, they can be modeled as while loops, or if bounded (as
  they are here), they can be unrolled.
\end{example}

\begin{figure*}
    \resizebox{!}{2in}{
	\begin{minipage}[T]{0.35\textwidth}
		\IncMargin{0.5em}

		\RestyleAlgo{boxed}
		\removelatexerror
		\begin{algorithm}[H]
			\DontPrintSemicolon

			\KwIn{$q_1,q_2$}
			\KwOut{$\s{o}^1_1, \s{o}^2_1, \s{o}^1_2, \s{o}^2_2$}
			\;

			\SetKwSty{textsfsf}
			\nl $T\gets 0$;\;
			\nl  $\s{o}^1_1\gets 0$;\;
			\nl  $\s{o}^1_2\gets 0$;\;
			\nl  $\rv_T \gets \Lap{ \frac{4\epsilon}{9} , T};$\;
			\nl   $\rv_1\gets \Lap{\frac{2\epsilon}{9} , q_1}$;\;
			\nl $\bv\gets \rv_1 \geq \rv_T$;\;

			\nl \uIf{$\bv$}{
				\nl    $\s{o}^1_1 \gets 1$\;
				\nl    $\s{o}^2_1 \gets \Lap{\frac{\epsilon}{9},q_1}$\;
			}
			\Else
			{
				\nl   $\rv_2 \gets \Lap{\frac {2\epsilon} {9} , q_2}$;\;
				\nl  $\bv\gets \rv_2 \geq \rv_T$;\;
				\nl  \If{$\bv$}{
					\nl    $\s{o}^1_2 \gets 1$\;
					\nl    $\s{o}^2_2 \gets \Lap{\frac{\epsilon}{9},q_2}$\;
				}
			}
			\nl    $\ext$\;
			\caption{\footnotesize{{\nsvt} with $N=2$, $c=1$, $\delta=0$, and $T=0$. The numbers at the beginning of a line indicate the label of the statement.}}
			\label{fig:numspaL}
		\end{algorithm}
		\DecMargin{0.5em}
	\end{minipage}
	}
	\qquad
	\resizebox{!}{2in}{
	\begin{minipage}[T]{0.6\textwidth}
		\setlength{\tabcolsep}{1pt}
		\begin{center}
			\begin{tikzpicture}[
			every text node part/.style={align=center},
			state/.style={draw,rounded corners,minimum width=4.1cm},
			trans/.style={above,pos=0.65}]
			\footnotesize
			\node(d1) at (0,8.2) {$\vdots$};
			\node[state](s5) at (0,7) {%
				\begin{tabular}{ll}
				{\bf 10}: & $q_1:u$, $q_2:v$, $T:0$\\
				& $\s{o}^1_1: 0$, $\s{o}^1_2: 0$, $b: \bot$\\
				& $r_T: (\frac{4}{9},0)$ $r_1: (\frac{2}{9},u)$\\
				& $r_1 < r_T$
				\end{tabular}};
			\node[state](s6) at (0,4) {%
				\begin{tabular}{ll}
				{\bf 11}: & $q_1:u$, $q_2:v$, $T:0$\\
				& $\s{o}^1_1: 0$, $\s{o}^1_2: 0$, $b: \bot$\\
				& $r_T: (\frac{4}{9},0)$ $r_1: (\frac{2}{9},u)$
				$r_2: (\frac{2}{9},v)$\\
				& $r_1 < r_T$
				\end{tabular}};
			\node[state](s71) at (-2.5,1) {%
				\begin{tabular}{ll}
				{\bf 12}: & $q_1:u$, $q_2:v$, $T:0$\\
				& $\s{o}^1_1: 0$, $\s{o}^1_2: 0$, $b:\top$\\
				& $r_T: (\frac{4}{9},0)$ $r_1: (\frac{2}{9},u)$
				$r_2: (\frac{1}{4},v)$\\
				& $r_1 < r_T$, $r_2 \geq r_T$
				\end{tabular}};
			\node[state](s72) at (2.5,1) {%
				\begin{tabular}{ll}
				{\bf 12}: & $q_1:u$, $q_2:v$, $T:0$\\
				& $\s{o}^1_1: 0$, $\s{o}^1_2: 0$, $b:\bot$\\
				& $r_T: (\frac{4}{9},0)$ $r_1: (\frac{2}{9},u)$
				$r_2: (\frac{2}{9},v)$\\
				& $r_1 < r_T$, $r_2 < r_T$
				\end{tabular}};
			\node(d21) at (-2.5,0) {$\vdots$};
			\node(d22) at (2.5,0) {$\vdots$};
			\draw[->] (s5) -- node[left]{1} (s6);
			\draw[->] (s6) -- node[trans]{$p$} (s71);
			\draw[->] (s6) -- node[trans]{$q$} (s72);
			\end{tikzpicture}
		\end{center}
		\caption{\footnotesize{Partial DTMC semantics of
        Algorithm~\ref{fig:numspaL} showing the steps when lines 10
        and 11 are executed. $q_1$ and $q_2$ are assumed to have
        values $u$ and $v$, respectively. Only values of assigned
        program variables are shown. Third line in state shows
        parameters for the real values that were sampled. Last line
        shows the accumulated set of Boolean conditions that hold on
        the path.}}
		\label{fig:numspa-dtmc}
	\end{minipage}
	}
\end{figure*}

\subsection{Semantics}
\label{sec:semantics}

A natural semantics for {\newlang} programs can be given using Markov
kernels. Given a fixed $\epsilon > 0$, the states in such a semantics
for program $P_\epsilon$ will be of the form
$(\ell, h_{Bool}, h_{\sdom},$ $h_{\integers}, h_{\Reals})$, where
$\ell$ is the label of the statement of $P_\epsilon$ to be executed
next, the functions $h_{Bool}$, $h_{\sdom}$, $h_{\integers}$, and
$h_{\Reals}$ assign values to the Boolean, $\sdom$, real, and integer
variables of the program $P_\epsilon$. There is a natural
$\sigma$-algebra that can be defined on such states, and the semantics
defines a Markov kernel over this algebra. Such a semantics for
{\newlang} would be similar to the one for {\ourlang} given
by~\citet{BartheCJSVtech}, and is skipped here. Throughout the paper, we shall also assume that 
{\newlang} programs terminate with probability $1$ on all inputs.  

Our decidability results rely crucially on the observation that the
semantics of {\newlang} can be defined using a \emph{finite-state}
(parametrized) DTMC. This semantics, though not natural, can be shown
to be equivalent to the Markov kernel semantics. The proof of
equivalence is similar to the one given by~\citet{BartheCJSVtech}. We
spend the rest of this section highlighting the main aspects of the
DTMC semantics that help us underscore the ideas behind our decision
procedure. We begin by recalling the definition of a finite-state
parametrized DTMC.
\begin{definition}
	\label{def:pDTMC}
	A \emph{parametrized DTMC} over $(n+1)$ parameters
  $(\overline{x},\epsilon)$ is a pair $\cD = (\states,\ptransf)$,
  where $\states$ is a finite set of states, and
  $\ptransf: \states \times \states \to (\Reals^n \times (0,\infty)
  \to [0,1])$ is the \emph{probabilistic transition function}. For any
  pair of states $\stat,\stat'$, $\ptransf(\stat,\stat')$ will be
  called the \emph{probability of transitioning} from $\stat$ to
  $\stat'$, and is a function that, given $\overline{a} \in \Reals^n$
  and $b \in (0,\infty)$, returns a real number between $0$ and $1$,
  such that for any state $\stat$,
  $\sum_{\stat' \in \states}\ptransf(\overline{a},b)(\stat,\stat') =
  1$.
\end{definition}
The connection between programs in {\newlang} and parametrized DTMCs
is captured by the following result that is exploited in our
decidability results.
\begin{theorem}
\label{thm:semantics}
Let $P_\epsilon$ be an arbitrary {\newlang} program whose real-valued
input variables are $\overline{x}$. There is a finite state
parametrized DTMC $\sem{P_\epsilon}$ over parameters
$(\overline{x},\epsilon)$ (with transition function $\ptransf$) that
is equivalent to the Markov kernel semantics of $P_\epsilon$. Further,
the DTMC $\sem{P_\epsilon}$ is effectively constructible, and for any
pair of states $\stat,\stat'$, the partial function
$\ptransf(\stat,\stat')$ is parametrically definable in $\threal$.
\end{theorem}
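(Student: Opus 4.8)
The plan is to construct $\sem{P_\epsilon}$ by a \emph{symbolic} execution that never commits to concrete values for the sampled real and integer variables, but records just enough to recover every reachability and output probability. Concretely, I would take a state of the DTMC to be a tuple $(\ell, h_{Bool}, h_{\sdom}, \sigma, \Phi)$, where $\ell$ is the current label, $h_{Bool}$ and $h_{\sdom}$ are concrete valuations of the (finitely many) Boolean and $\sdom$ variables, $\sigma$ records for each real/integer random variable sampled so far the \emph{descriptor} of the distribution from which it was drawn (its scale, a rational multiple of $\epsilon$, together with its mean, which is a syntactic $R$/$E$ expression in the real and $\sdom$ inputs, or an earlier sampled variable in the $\rv\gets\Lap{a\epsilon,\rv}$ case), and $\Phi$ is the set of signed comparisons among real/integer expressions resolved along the path so far — the accumulated guard history shown in the last line of each state of Figure~\ref{fig:numspa-dtmc}.

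Next I would check finiteness of $\states$. By the \textbf{Bounded Assignments} restriction, real and integer variables are assigned at most once, so there are only finitely many random variables, and their descriptors are drawn from the finite set of sampling statements in the program text; the comparisons that may enter $\Phi$ come from the finitely many comparison subexpressions, each in signed form; and $\ell$, $h_{Bool}$, $h_{\sdom}$ range over finite sets. Hence $\states$ is finite. I would then define $\ptransf$ so that all \emph{deterministic} statements — assignments from $B/E/Z/R$ expressions, conditionals and while-tests on already-resolved Booleans, $\ext$, and the sampling statements themselves (which merely extend $\sigma$ without branching) — move to their unique successor with probability $1$. The only probabilistic branching occurs at Boolean assignments that compare real or integer expressions, e.g.\ $\bv \gets R\sim R$ or $\bv \gets Z\sim Z$: such a statement splits into a ``holds'' successor (appending the comparison $c$ to $\Phi$ and setting $\bv$) and a ``fails'' successor (appending $\neg c$), and the two transition probabilities are the conditional probabilities of each outcome given $\Phi$, so that they sum to $1$ and the path probabilities telescope to the correct joint probabilities of the guard history.

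The technical heart is showing these branch probabilities are parametrically definable in $\threal$ in the sense of Definition~\ref{def:definable-func}. I would write each as a ratio $\Prob[\Phi \wedge c]/\Prob[\Phi]$ of two \emph{region} probabilities. Each region probability is an integral (for reals) or a sum (for integers) of a product of Laplace / discrete-Laplace densities over the polyhedral (resp.\ lattice) region cut out by the linear inequalities of $\Phi$ and $c$ in the sampled variables, with the means appearing as parameters. Decomposing the region along the kinks $t_i=\mu_i$ of the absolute values and integrating/summing iteratively, each such quantity evaluates to a finite sum of rational multiples of products of $\eulerv{c\epsilon}$ and powers of $\epsilon$, where $c$ is a rational linear combination of the means. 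When the real inputs are instantiated to rationals the means become rational, $c$ becomes rational, and every term becomes a polynomial over $\set{\epsilon,\eulerv{\epsilon}} \cup \set{\eulerv{q\epsilon}\st q\in\Rats}$, i.e.\ a term of $\lngreal$. Thus each region probability is parametrically definable, and so is the transition probability, defined via $y\cdot\Prob[\Phi]=\Prob[\Phi\wedge c]$ using multiplication in $\rexp$, with the partiality of $\ptransf(\stat,\stat')$ absorbing the degenerate case $\Prob[\Phi]=0$ (e.g.\ measure-zero guard histories containing equalities between continuous variables). Since the iterated integrals and sums can be computed symbolically, the whole DTMC is effectively constructible.

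Finally, equivalence with the Markov kernel semantics would follow the template of \citet{BartheCJSVtech}: partition the kernel executions by their symbolic type — label trajectory, Boolean/$\sdom$ valuations, sampling descriptors, and the signs of all resolved comparisons — and verify that each DTMC path carries exactly the measure of its execution class, using termination with probability $1$ so that the masses sum correctly over all outputs. The main obstacle is the definability step. Unlike {\ourlang}, where means were $\sdom$ expressions, here means may be real inputs or even other sampled variables, so the exponents in the densities carry symbolic parameters; the real work is to verify that the closed form obtained by integration survives rational instantiation \emph{inside} the restricted fragment $\lngreal$, and to handle the random-mean rule $\rv\gets\Lap{a\epsilon,\rv}$, where one sampled variable appears in the density of another and the joint density is integrated as a product of terms $\eulerv{-a_i\epsilon\,\abs{\cdot}}$ with linear arguments.
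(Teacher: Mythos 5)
Your proposal matches the paper's own proof sketch essentially step for step: the same symbolic state (label, concrete Boolean/$\sdom$ valuations, sampling descriptors, and accumulated order constraints), the same finiteness argument from the bounded-assignment restriction, the same placement of all probabilistic branching at real/integer comparisons with conditional probabilities given the constraint history, and the same reduction of definability to closed-form integration of products of Laplace densities over polyhedral regions. If anything, you supply more detail than the paper, which defers both the equivalence argument and the parametric definability of the transition probabilities in $\threal$ to the earlier construction for {\ourlang}; your explicit check that the integrated closed forms land in the restricted fragment $\lngreal$ after rational instantiation, and your treatment of sampled-variable means, are exactly the points the paper leaves implicit.
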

\begin{proof}[Proof Sketch]
	The formal construction of the parametrized DTMC $\sem{P_\epsilon}$
  is very similar the one outlined in~\citet{BartheCJSVtech} for (the
  restricted) {\ourlang} programs. Here, we just sketch the main
  ideas. It is useful to observe that defining a \emph{finite-state}
  semantics for {\newlang} programs is not obvious, since these
  programs have real and integer valued variables. The key to
  obtaining such a finite state semantics is to not track the values
  of real and integer variables explicitly, but rather implicitly
  through the relationships they have amongst each other.

	Informally, a state in $\sem{P_\epsilon}$ keeps track of a program
  statement to be executed (in terms of its label), and the values
  stored in each of the Boolean and $\sdom$ variables. However, the
  values of real and integer variables will not be explicitly stored
  in the state. Recall that real and integer variables are assigned a
  value only once in a {\newlang} program. Therefore, states of
  $\sem{P_\epsilon}$ store the (symbolic) expression on the right side
  of an assignment for each real/integer variable, instead of the
  actual value; when the value is sampled, the symbolic parameters of
  the distribution are stored. In addition to symbolic values for real
  and integer variables, a state of $\sem{P_\epsilon}$ also tracks the
  relative order among the values of real and integer variables. Thus,
  $\sem{P_\epsilon}$ has only finitely many states. A state of
  $\sem{P_\epsilon}$ is an abstraction of all ``concrete states''
  whose assignments to Boolean and $\sdom$ variables match, and whose
  assignment to real and integer variables satisfy the constraints
  imposed by the symbolic expressions and the relative order
  maintained in the $\sem{P_\epsilon}$ state.

	State updates in $\sem{P_\epsilon}$ are as follows. Assignments to
  $\sdom$ variables are as expected --- a new value is calculated and
  stored in the state for a deterministic assignment, or a value is
  sampled probabilistically and stored in the state for a randomized
  assignment. Assignments to real and integers variables are
  \emph{always} deterministic --- the state is updated with the
  appropriate symbolic values that appear in the deterministic or
  probabilistic assignment. It is important to note that sampling a
  value using a Laplace mechanism is a deterministic step in the DTMC
  semantics. Assignments to Boolean variables, where the right hand
  side is a Boolean expression, is as expected; the right hand side
  expression is evaluated and the state is updated with the new
  value. Assignments to Boolean variables by comparing two real or
  integer expressions is handled in a special way. These are
  \emph{probabilistic transitions}. Consider an assignment
  $\bv\gets R_1\sim R_2$ for example. The result of executing this
  statement from state $\stat$ will move to a state where
  $R_1 \sim R_2$ is added to the set of ordering constraints, with
  probability equal to the probability that $R_1 \sim R_2$ holds
  conditioned on the ordering constraints in $\stat$ holding, subject
  to the variables being sampled according to the parameters stored in
  $\stat$. With the remaining probability, $\sem{P_\epsilon}$ will
  move to a state where $\neg(R_1 \sim R_2)$ is added to the ordering
  constraints. Finally, branching and while statements are
  deterministic steps with the next state being determined by the
  value stored for the Boolean variable in the condition.

	Notice here that since input variables and the privacy parameter
  $\epsilon$ can appear as parameters of the Laplace/discrete Laplace
  mechanism used to sample a value of a real/integer variable, the
  transition probabilities of $\sem{P_\epsilon}$ depend on these
  parameters. That these transition probabilities are parametrically 
  definable in $\threal$ can be established 
  along the same lines as the proof that the transition probabilities are 
  definable in $\threal$ for the DTMC semantics of $\ourlang.$    
\end{proof}

\begin{example}
	\label{ex:numspa-dtmc}
	The parametrized DTMC semantics of Algorithm~\ref{fig:numspaL} is
  partially shown in Figure~\ref{fig:numspa-dtmc}. We show only the
  transitions corresponding to executing lines 10 and 11 of the
  algorithm, when $q_1 = u$ and $q_2 =v$ initially; here
  $u,v \in \{\bot,\top\}$. The multiple lines in a given state give
  the different components of the state. The first two lines give the
  assignment to $Bool$ and $\sdom$ variables, the third line gives
  values to the integer/real variables, and the last line has the
  Boolean conditions that hold along a path. Since 10 and 11 are in
  the else-branch, the condition $r_1 < r_T$ holds. Notice that values
  to real variables are not explicit values, but rather the parameters
  used when they were sampled. Finally, observe that probabilistic
  branching takes place when line 11 is executed, where the value of
  $b$ is taken to be the result of comparing $r_2$ and $r_T$. The
  numbers $p$ and $q$ correspond to the probability that the
  conditions in a branch hold, given the parameters used to sample the
  real variables and \emph{conditioned} on the event that $r_1 < r_T$.
\end{example}


%
%
%
%
%
%
%
%

\newcommand{\agsimple}{simple}

\section{Deciding Accuracy for DipWhile+ Programs}
\label{sec:decidability}

We shall now show that the problem of checking the accuracy of a {\newlang} program is decidable, assuming Schanuel's conjecture. Further, we shall identify special instances under which the problem of checking the accuracy of a {\newlang} program is decidable without assuming Schanuel's conjecture. Our results are summarized in Table~\ref{tab:test}.

\begin{remark}
For the rest of the section, we shall say that inputs/outputs to the  {\newlang} program $P_\epsilon$  are rational if all the real variables in the input/output respectively take rational values.
We shall also say that $P_\epsilon$
has finite inputs if all of its input variables are $\sdom$-variables, and  that $P_\epsilon$ has finite outputs if all of its output variables are  $\sdom$-variables.
\end{remark}

\begin{table*}[!t]
 \caption{\footnotesize{Summary of our decidability results.
   The column, Schanuel, indicates whether the result is conditional on
    Schanuel's conjecture. The column, Problem, indicates if the decision problem is Accuracy-at-all-inputs or Accuracy-at-an-input. The column, Infinite Inputs, indicates if the result allows real variables as inputs. Note that this column is relevant only for the Accuracy-at-all-inputs decision problem. The column, Infinite Outputs, indicates if the result allows real variables as outputs.  The columns, $\det(P_\epsilon),\din$ and $\dout,$ indicate the definability assumptions needed for deterministic function $\det(P_\epsilon),$ input distance function $\din$ and output distance function $\dout.$ The column, Region $\reg$ indicates the assumptions needed on admissible region.}
}  \label{tab:test}
  \centering
  \footnotesize
  \begin{tabular}[t]{| c | c | c | c | c | c | c | c | c |}
    \hline
    Result & Problem & Schanuel & \shortstack{Infinite\\ Inputs} & \shortstack{Infinite \\ Outputs} & $\det(P_\epsilon)$ & $\din$ & $\dout$ & Region $\reg$ \\ \hline

    Thm~\ref{thm:condAcc} & all-inputs & \cmark & \cmark & \cmark & $\foreal$ & $\foreal$ & $\folreal$ & param. def. in $\threal$ \\ \hline

    Cor~\ref{cor:finite} & all-inputs & - & \xmark & \xmark & $\folreal$ & $\folreal$ & $\folreal$ &  $(\alpha,\gamma)$-monotonic \\ \hline

    Cor~\ref{cor:accpoint} & an-input & \cmark & - & \cmark & $\foreal$ & $\foreal$ & $\folreal$ & param. def. in $\threal$ \\ \hline

    Thm~\ref{thm:infpoint} & an-input & - & - & \cmark & $\folreal$ &
                                                                  $\foreal$ & $\folreal$ & \shortstack{simple, fixed $\alpha$, \\ fixed $\gamma$} \\ \hline

    Thm~\ref{thm:fixedgamma} & an-input & - & - & \cmark & $\folreal$ & $\folreal$ & $\folreal$ & \shortstack{limit-def., fixed $\gamma$} \\ \hline

    Thm~\ref{thm:finiteoutputs} & an-input & - & - & \xmark & $\folreal$ & $\folreal$ & $\folreal$ & $(\alpha,\gamma)$-monotonic \\ \hline
  \end{tabular}
 
\end{table*}

\subsection{Definability assumptions}

Let $P_\epsilon$ be a {\newlang} program with $\ell$ input $\sdom$-variables, $k$ input real variables, $m$ output $\sdom$-variables, and $n$ output real variables.
Observe that $\sdom$ is a subset of integers. Hence, the input set $\cU=\sdom^\ell\times \Reals^k$ can be viewed as a subset of $\Reals^{\ell+k}$, and
the output set $\cV$ as a subset of  $\Reals^{m+n}.$ Thus, $\det(P_\epsilon)$ can be viewed as a partial function from $\Reals^{\ell+k}$  to $\Reals^{m+n}$.


Also, observe that $\Reals^\infty$ can be seen as a subset of $\Reals\times \Reals$ by
identifying $r\in \Reals$ with  $(0,r)$ and $\infty$ with $(1,0).$
Thus, the input distance function, $\din$,  can be viewed as a partial function from $\Reals^{\ell+k} \times \Reals^{\ell+k}$ to $\Reals^2$ and the output distance function, $\dout$, as a partial function from $\Reals^{\ell+k}\times\Reals^{m+n} \times \Reals^{m+n}$ to $\Reals^2.$ Our results shall require that these functions be definable in sub-theories of real arithmetic.

Finally, recall that the admissible region $\reg$ is a function that given a privacy budget $\epsilon$ and input $\bar u$ gives the set of the set of all valid  $(\alpha,\beta,\gamma)$ for that $\epsilon$ and input $\bar u.$ Observe that $\reg$ can also be viewed as a function that takes $\alpha,\beta,\gamma,\bar u$ and $\epsilon$ as input and returns $1$ if the triple  $(\alpha,\beta,\gamma)$ is in the set $\reg(\epsilon,\bar u)$ and $0$ otherwise.

\begin{definition} Let $P_\epsilon$ be a {\newlang} program implementing the deterministic function $\det (P_\epsilon).$ Let $\din$ be a distance function on the set of inputs of $P_\epsilon$ and  $\dout$ be the input-indexed distance function on the set of outputs of $P_\epsilon$. Let $P_\epsilon$ have $\ell$ input $\sdom$-variables, $k$ input real variables, $m$ output $\sdom$-variables, and $n$ real output variables. Let $\reg$ denote the admissible region.
\begin{itemize}
\item $\det (P_\epsilon)$ is said to be definable in $\foreal$ ($\folreal$ respectively) if it is definable in $\foreal$ ($\folreal$ respectively) when viewed as a partial function from $\Reals^{\ell+k}$  to $\Reals^{m+n}$.
\item $\din$ is said to be definable in $\foreal$ ($\folreal$ respectively) if it is definable in   $\foreal$ ($\folreal$ respectively) when viewed as a partial function from $\Reals^{\ell+k} \times \Reals^{\ell+k}$ to $\Reals^2.$
\item $\dout$  is said to be definable in $\foreal$ ($\folreal$ respectively) if it is definable in   $\foreal$ ($\folreal$ respectively) when viewed as a partial function from $\Reals^{\ell+k}\times\Reals^{m+n} \times \Reals^{m+n}$ to $\Reals^2.$
\item The admissible region $\reg$ is said to be parametrically definable in $\threal$ if the partial function $h: \Reals^{3+\ell+k}\times (0,\infty) \rightarrow \Reals$ defined as
\[
   h(x,y,z,\bar u,\epsilon) =\begin{cases} 1& \mbox{ if } \epsilon>0 \mbox{ and } (x,y,z) \in \regf \epsilon {\bar{u}} \\ 0 & \mbox { otherwise} \end{cases}
\]
is parametrically definable in $\threal.$
\end{itemize}
\end{definition}

Now, we could have chosen to write $\det(P_\epsilon)$ in {\newlang} by considering programs that do not contain any probabilistic assignments.  Please note that a deterministic program written in {\newlang} can be defined in $\folreal.$ Intuitively, this is because we do not allow assignments to integer and real random variables inside loops of {\newlang} programs. Hence the loops can be ``unrolled''. This means that
 there are only finitely many possible executions of a deterministic {\newlang} program, and these executions have finite length. The Boolean checks in the program determine which execution occurs on an input, and these checks can be encoded as formulas in linear arithmetic.

\subsection{Decidability  assuming Schanuel's conjecture}
We start by establishing the following Lemma, which says that if a {\newlang} program has only finite outputs (i.e., only $\sdom$-outputs) then the probability of obtaining an output $\bar v$ is parametrically definable in $\threal.$ The proof of this fact essentially mirrors the proof of the fact that this probability is definable in $\threal$ (without parameters) for the (restricted) {\ourlang} programs established in~\citet{BartheCJSVtech}. It based on the observation that it suffices to compute the probability of reaching certain states (labeled exit states) of the DTMC semantics, which have $\bar v$ as the valuation over output variables. The reachability probabilities can be computed as a solution to a linear program (with transition probabilities as the coefficients).

\begin{lemma}
\label{lem:ProbDef}
Let $P_\epsilon$ be a {\newlang} program with finite outputs. Let $P_\epsilon$ have $\ell$ input $\sdom$-variables, $k$ input real variables and $m$ output variables.
Given,  $\bar v\in \sdom^m$, let $\mathsf{Pr}_{\bar v, P_\epsilon}: \Reals^{\ell+k+1} \hookrightarrow \Reals $ be the partial function whose domain is $\sdom^\ell\times \Reals^{k+1},$ and which
 maps $(\bar r,\epsilon)$ to the probability that $P_\epsilon$ outputs $\bar v$ on input $\bar r.$ For each $\bar v$, the function $\mathsf{Pr}_{\bar v,P_\epsilon}$ is parametrically definable in $\threal.$

\end{lemma}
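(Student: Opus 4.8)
The plan is to build the defining formula for $\mathsf{Pr}_{\bar v, P_\epsilon}$ directly from the finite-state DTMC semantics supplied by Theorem~\ref{thm:semantics}, reducing the output probability to a reachability probability in a finite parametrized Markov chain. First I would invoke Theorem~\ref{thm:semantics} to obtain the finite-state parametrized DTMC $\sem{P_\epsilon} = (\states, \ptransf)$ over parameters $(\bar x, \epsilon)$, where $\bar x$ are the real input variables; by that theorem each transition probability $\ptransf(\stat, \stat')$ is parametrically definable in $\threal$. Since $P_\epsilon$ has only $\sdom$-outputs, the valuation of the output variables is stored explicitly in each DTMC state, so I can isolate the finite set $E$ of terminated (exit) states and, within it, the subset $F_{\bar v} \subseteq E$ of exit states whose output-variable valuation equals $\bar v$. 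The $\sdom$-inputs $\bar d$ (the first $\ell$ coordinates of $\bar r$) fix the initial state $\stat_0(\bar d)$, while the real inputs $\bar x$ enter only as transition-probability parameters. The probability that $P_\epsilon$ outputs $\bar v$ on input $\bar r$ is then exactly the probability of reaching $F_{\bar v}$ from $\stat_0(\bar d)$.

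Next I would set up the standard reachability linear system. Writing $p_\stat$ for the probability of reaching $F_{\bar v}$ from $\stat$, these satisfy $p_\stat = 1$ for $\stat \in F_{\bar v}$, $p_\stat = 0$ for $\stat \in E \setminus F_{\bar v}$, and $p_\stat = \sum_{\stat'} \ptransf(\stat,\stat')\, p_{\stat'}$ for every non-exit state $\stat$. The crucial point --- and the step I expect to be the main obstacle --- is \emph{uniqueness} of the solution, which is what makes an existentially quantified formula \emph{define} the function rather than merely constrain it. Here I would use the standing assumption that $P_\epsilon$ terminates with probability $1$ on every input: this forces every non-exit state to be transient, so the substochastic matrix $Q$ of transitions among non-exit states is convergent (the Neumann series $\sum_n Q^n$ converges) and $I - Q$ is invertible; hence the boundary-value system above has a unique solution for each $(\bar x, \epsilon)$ in the domain.

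Finally I would encode this system as a $\threal$ formula. The domain formula $\psi$ asserts that the first $\ell$ coordinates of $\bar r$ lie in $\sdom$ and that $\epsilon > 0$. For the value formula, I would existentially quantify over one fresh real variable $p_\stat$ per state together with fresh variables $t_{\stat,\stat'}$ for the transition probabilities, conjoin the defining formulas $\varphi_{\ptransf(\stat,\stat')}$ (which witness $t_{\stat,\stat'} = \ptransf(\stat,\stat')$), assert the finitely many boundary conditions and balance equations $p_\stat = \sum_{\stat'} t_{\stat,\stat'}\, p_{\stat'}$, and then set the output $y = p_{\stat_0(\bar d)}$, where the dependence on $\bar d$ is handled by a finite disjunction over the finitely many values of $\bar d \in \sdom^\ell$. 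By the uniqueness just established, this formula defines $\mathsf{Pr}_{\bar v, P_\epsilon}$. To see that it witnesses parametric definability in the sense of Definition~\ref{def:definable-func}, note that the balance equations introduce only products $t_{\stat,\stat'}\, p_{\stat'}$ of quantified real variables --- these are ordinary polynomial terms allowed in $\rexp$ --- while every occurrence of the exponential function is confined to the formulas $\varphi_{\ptransf(\stat,\stat')}$, where, by parametric definability of $\ptransf$, substituting rational values for $\bar x$ yields formulas in $\lngreal(\epsilon, \bar y)$ with $\eulerv{(\cdot)}$ applied only to rational multiples of $\epsilon$. Thus the substituted formula lies in $\lngreal(\epsilon, \bar y)$, as required, and the construction mirrors the corresponding argument for {\ourlang} in~\citet{BartheCJSVtech}.
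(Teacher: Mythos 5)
Your proposal is correct and follows essentially the same route as the paper, which proves the lemma by reducing the output probability to reachability of exit states in the finite parametrized DTMC of Theorem~\ref{thm:semantics} and expressing the reachability probabilities as the solution of a linear system whose coefficients are the (parametrically definable) transition probabilities, mirroring the argument for {\ourlang} in~\citet{BartheCJSVtech}. Your write-up in fact supplies more detail than the paper's sketch, in particular the uniqueness argument via almost-sure termination and the syntactic check that substituting rational inputs yields formulas in $\lngreal(\epsilon,\bar y)$.
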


The following result gives sufficient conditions under which the decision  problem Accuracy-at-all-inputs is decidable for {\newlang} programs. The conditions state that the deterministic
program and the input distance function be definable using first-order theory of real arithmetic. The output distance distance function is required to be definable in the first-order theory of linear arithmetic. Intuitively, this additional constraint is needed as it implies that we only need to compute probabilities that the outputs reside in a region defined by linear equalities and linear inequalities.
\begin{theorem}
\label{thm:condAcc}
Assuming Schanuel's conjecture, the problem Accuracy-at-all-inputs is decidable for {\newlang} programs $P_\epsilon$ when (a) $\det(P_\epsilon)$ is definable in $\foreal,$ (b) $\din$ is definable in $\foreal$, (c) $\dout$ is definable in $\folreal,$ and (d) $\reg$ is parametrically definable in   $\threal.$

\end{theorem}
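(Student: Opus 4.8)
The plan is to reduce Accuracy-at-all-inputs to the validity of a single first-order sentence over the real exponential field $\rexp$, and then invoke the decidability of $\fthreal$ under Schanuel's conjecture \citep{mw96}. Under the standing assumption $I=\Reals^{>0}$, the program $P_\epsilon$ is $(\alpha,\beta,\gamma)$-accurate at $\bar u$ exactly when $\dd(P_\epsilon,\bar u)>\alpha$ implies $\Prob(P_\epsilon(\bar u)\in B(\det(P_\epsilon)(\bar u),\bar u,\gamma))\geq 1-\beta$ (Definitions~\ref{def:disagreement} and~\ref{def:accuracy}). Thus the program is accurate at all inputs iff $\rexp$ satisfies
\[
\forall \epsilon\,\forall \bar u\,\forall \alpha\,\forall \beta\,\forall \gamma\;\Big[\big(\epsilon>0 \wedge \Phi_{\reg} \wedge \Phi_{\dd}\big)\to \Phi_{\Prob}\Big],
\]
where $\Phi_{\reg}$ asserts $(\alpha,\beta,\gamma)\in\regf{\epsilon}{\bar u}$, $\Phi_{\dd}$ asserts $\dd(P_\epsilon,\bar u)>\alpha$, and $\Phi_{\Prob}$ asserts $\Prob(P_\epsilon(\bar u)\in B(\det(P_\epsilon)(\bar u),\bar u,\gamma))\geq 1-\beta$. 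It then suffices to exhibit each of $\Phi_{\reg},\Phi_{\dd},\Phi_{\Prob}$ as a formula over $\rexp$, since $\foreal$, $\folreal$, and $\threal$ are all sub-theories of $\fthreal$, and $\fthreal$ is the full theory, so arbitrary quantifier nesting in these subformulas is harmless.

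\textbf{Encoding the side conditions.} $\Phi_{\reg}$ is immediate from hypothesis (d): parametric definability of $\reg$ in $\threal$ supplies an $\rexp$-formula for the indicator $h(\alpha,\beta,\gamma,\bar u,\epsilon)=1$, which also enforces the domain constraints $\beta\in[0,1]$, etc. For $\Phi_{\dd}$, I would first write $\mathsf{Dis}(\bar u,\bar u') := \exists \bar v\,\exists \bar v'\,(\varphi_{\det}(\bar u,\bar v)\wedge \varphi_{\det}(\bar u',\bar v')\wedge \bar v\neq \bar v')$ using the $\foreal$-definition $\varphi_{\det}$ of $\det(P_\epsilon)$ from (a). Because $\dd$ is an infimum, I would express $\dd(P_\epsilon,\bar u)>\alpha$ not as a naive universal statement but as
\[
\exists \delta\,\Big(\delta>\alpha \wedge \forall \bar u'\,\forall b\,\forall r\,\big[(\mathsf{Dis}(\bar u,\bar u')\wedge \varphi_{\din}(\bar u,\bar u',b,r))\to (b=1 \vee r\geq \delta)\big]\Big),
\]
where $\varphi_{\din}$ is the $\foreal$-definition of $\din$ from (b) and $(b,r)$ is the $\Reals^2$-encoding of a value in $\Reals^\infty$ (with $b=1$ encoding $\infty$). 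This correctly captures that the infimum is bounded strictly below by some $\delta>\alpha$, and it lies in $\foreal\subseteq\fthreal$.

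\textbf{Encoding the probability (the hard part).} The difficulty is $\Phi_{\Prob}$, and in particular the presence of real-valued outputs. The ball is a linear region: using the $\folreal$-definition $\varphi_{\dout}$ of $\dout$ from (c) together with the $\foreal$-value $\bar v_0=\det(P_\epsilon)(\bar u)$, membership $\bar v\in B(\bar v_0,\bar u,\gamma)$ is expressed by $\exists b\,\exists r\,(\varphi_{\dout}(\bar u,\bar v_0,\bar v,b,r)\wedge b=0 \wedge r\leq \gamma)$, a system of linear constraints in $\bar v$ parametrized by $\bar u,\gamma$. To obtain the probability itself I would use the finite-state DTMC semantics of Theorem~\ref{thm:semantics} and extend Lemma~\ref{lem:ProbDef} from finite outputs to real outputs. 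Concretely, I would sum over the finitely many exit states of $\sem{P_\epsilon}$; along each reaching path the real output variables are affine functions of the sampled Laplace variables and the real inputs, and the accumulated ordering constraints together with the ball constraints cut out a polyhedron. The probability of reaching that exit state with the outputs in the ball is then an integral of a product of Laplace densities over this polyhedron. Splitting each $\eulerv{-c\abs{x-\mu}}$ at its kink refines the polyhedron into finitely many cells on which the integrand is a pure exponential of a linear form; integrating yields a finite sum of terms of the shape (rational) $\times\,\eulerv{(\text{linear in }\epsilon,\bar u,\gamma)}$, each a term in the language of $\rexp$. Hence $\Prob(P_\epsilon(\bar u)\in B(\bar v_0,\bar u,\gamma))$ is given by an explicit $\fthreal$-term, and $\Phi_{\Prob}$ is $\exists p\,(\,p=(\text{that term})\wedge p\geq 1-\beta)$.

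\textbf{Conclusion and main obstacle.} Substituting $\Phi_{\reg},\Phi_{\dd},\Phi_{\Prob}$ into the displayed sentence gives a sentence of $\fthreal$ whose validity in $\rexp$ is equivalent to Accuracy-at-all-inputs. By \citet{mw96}, validity in $\fthreal$ is decidable assuming Schanuel's conjecture, which yields the theorem. I expect the genuinely delicate step to be the extension of Lemma~\ref{lem:ProbDef} to real outputs: arguing that integrating the joint Laplace density over a $\folreal$-definable ball produces a closed form expressible over $\rexp$ (rather than merely a measurable quantity) is the crux, and it is exactly here that hypothesis (c)---linearity of $\dout$---is used, so that the region of integration is polyhedral and the integral stays within the exponential-polynomial terms of $\fthreal$.
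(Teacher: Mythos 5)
Your overall reduction---quantify universally over $\epsilon,\bar u,\alpha,\beta,\gamma$, encode the admissible region via (d), encode $\dd(P_\epsilon,\bar u)>\alpha$ in $\foreal$, encode the probability of landing in the ball, and hand the resulting sentence to \citet{mw96}---matches the paper's proof, and your $\exists\delta$ treatment of the infimum is a correct variant of the paper's use of the $\foreal$-definable function $\ddf{\det(P_\epsilon)}{\din}$ (Proposition~\ref{prop:diam}). Where you genuinely diverge is the probability encoding. You propose to \emph{extend} Lemma~\ref{lem:ProbDef} to real-valued outputs by summing over exit states of the DTMC and symbolically integrating products of Laplace densities over the parametric polyhedron cut out by the path constraints and the ball. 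The paper instead \emph{avoids} any new integration argument: it uses the fact that a $\folreal$ formula can be evaluated by a deterministic {\newlang} program (Lemma~\ref{lem:lraisdip}) to build an instrumented program $P^{\mathsf{new}}_\epsilon$ that takes $(\bar u,\bar v,\rv_\gamma)$ as input, runs $P_\epsilon$, and outputs $1$ iff $\dout(\bar u,\cdot,\bar v)\leq \rv_\gamma$; this program has \emph{finite} outputs, so the already-established Lemma~\ref{lem:ProbDef} applies verbatim, and the comparison with the ball is absorbed into the DTMC semantics of Theorem~\ref{thm:semantics} as ordinary Boolean/ordering transitions. This is exactly where hypothesis (c) (linearity of $\dout$) is consumed in the paper, just as you anticipated, but via program composition rather than via keeping the integration region polyhedral.

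The one place your route is under-justified is the crux you yourself flag: the claim that the integration ``yields a finite sum of terms (rational) $\times\,\eulerv{(\text{linear in }\epsilon,\bar u,\gamma)}$'' is not quite right as stated. Integrating exponentials of linear forms over parametric polyhedra produces coefficients that are \emph{rational functions} of the rates (hence of $\epsilon$), with degenerate cases requiring case splits (e.g.\ when two rates coincide), and one must also argue parametric membership in the McCallum--Weispfenning fragment is not needed here but $\rexp$-definability is. All of this can be pushed through, so your argument is repairable, but it amounts to reproving a real-output analogue of Lemma~\ref{lem:ProbDef} from scratch; the paper's instrumentation trick is precisely the device that makes that unnecessary, and it is worth internalizing since the same trick reappears in Theorems~\ref{thm:infpoint} and~\ref{thm:fixedgamma}.
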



The  problem  Accuracy-at-an-input is also decidable under the same constraints as given by Theorem~\ref{thm:condAcc}. This is established as a corollary to the proof of Theorem~\ref{thm:condAcc}. 
\ifdefined\AppendixTrue
(See Appendix~\ref{app:condAcc} for the proof).
\fi
\begin{corollary}
\label{cor:accpoint}
Assuming Schanuel's conjecture is true for reals,  the problem Accuracy-at-an-input is  decidable for {\newlang} programs $P_\epsilon$ and rational inputs $\bar u$ when (a) $\det(P_\epsilon)$ is definable in $\foreal,$ (b) $\din$ is definable in $\foreal,$ (c) $\dout$ is definable in $\folreal,$ and (d) $\reg$ is parametrically definable in   $\threal.$
\end{corollary}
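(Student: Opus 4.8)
The plan is to obtain the corollary directly from the construction used to prove Theorem~\ref{thm:condAcc}, specializing the universal quantification over inputs to the fixed rational input $\bar u$. Recall that the proof of Theorem~\ref{thm:condAcc} reduces Accuracy-at-all-inputs to the validity of a single sentence of $\fthreal$, which is decidable assuming Schanuel's conjecture by the result of \citet{mw96}. That sentence has the shape
\[
\forall \epsilon\, \forall \bar x\, \forall \alpha\, \forall \beta\, \forall \gamma\; \big[\,\epsilon > 0 \wedge (\alpha,\beta,\gamma) \in \regf{\epsilon}{\bar x} \;\rightarrow\; \chi(\bar x,\epsilon,\alpha,\beta,\gamma)\,\big],
\]
where $\bar x$ ranges over the (real and $\sdom$) input variables and $\chi$ expresses that $P_\epsilon$ is $(\alpha,\beta,\gamma)$-accurate at input $\bar x$, i.e., that $\dd(P_\epsilon,\bar x) > \alpha$ implies $\Prob(P_\epsilon(\bar x) \in B(\det(P_\epsilon)(\bar x),\bar x,\gamma)) \geq 1-\beta$. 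My first step is to isolate this matrix $\chi$ together with the $\fthreal$-encoding of the admissible region supplied by condition (d).

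With $\chi$ in hand, the single-input problem is obtained by dropping the quantifier $\forall \bar x$ and substituting the given input $\bar u$ for $\bar x$. The resulting sentence
\[
\forall \epsilon\, \forall \alpha\, \forall \beta\, \forall \gamma\; \big[\,\epsilon > 0 \wedge (\alpha,\beta,\gamma) \in \regf{\epsilon}{\bar u} \;\rightarrow\; \chi(\bar u,\epsilon,\alpha,\beta,\gamma)\,\big]
\]
is valid in $\rexp$ exactly when $P_\epsilon$ is $(\alpha,\beta,\gamma)$-accurate at $\bar u$ for all admissible $(\alpha,\beta,\gamma)$ and all $\epsilon > 0$. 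Here the hypothesis that $\bar u$ is rational is essential: it guarantees that every component of $\bar u$ is expressible by a closed term (a rational constant) in the language of $\rexp$, so that the substitution yields a genuine sentence rather than a formula with free parameters. Since $\fthreal$ is decidable assuming Schanuel's conjecture, deciding the validity of this sentence decides Accuracy-at-an-input.

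The substeps needed to justify $\chi$ are exactly those carried out for Theorem~\ref{thm:condAcc}, which I would reuse: (i) the subformula for $\dd(P_\epsilon,\bar x) > \alpha$ is built from the $\foreal$-definitions of $\det(P_\epsilon)$ and $\din$, using an inner quantifier over a candidate disagreeing input $u'$ together with an auxiliary witness $\alpha' > \alpha$ to convert the infimum defining distance to disagreement into a first-order condition; (ii) the subformula for $\Prob(P_\epsilon(\bar x) \in B(\det(P_\epsilon)(\bar x),\bar x,\gamma)) \geq 1-\beta$ is obtained by summing, over the finitely many exit states of the DTMC $\sem{P_\epsilon}$ from Theorem~\ref{thm:semantics}, the joint probability of reaching the state and of its symbolic real outputs landing in the ball. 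Because $\dout$ is definable in $\folreal$, the ball is carved out by linear constraints on the real outputs, so this joint probability is an integral of products of Laplace densities over a polyhedral region and is parametrically definable in $\threal$ (in the style of Lemma~\ref{lem:ProbDef}). Since the target theory is the \emph{full} theory $\fthreal$, the nested quantifier alternation introduced by (i) poses no difficulty.

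The hard part is entirely inherited from Theorem~\ref{thm:condAcc}, namely establishing that the ball-probability in (ii) is uniformly definable: one must show that the closed form of the Laplace integral over the linear region depends definably on the symbolic input $\bar x$, the radius $\gamma$, and $\epsilon$. The single-input setting introduces no new obstacle here; the only genuinely new observation is the bookkeeping that substituting the rational $\bar u$ for $\bar x$ preserves membership of the encoding in $\fthreal$. Note finally that because the radius $\gamma$ remains a quantified variable, the probability subformula involves exponentials such as $\eulerv{q\gamma\epsilon}$ with $\gamma\epsilon$ a product of variables; these fall outside the restricted fragment $\lngreal$ and therefore require the full theory $\fthreal$, which is why the corollary, like Theorem~\ref{thm:condAcc}, is conditional on Schanuel's conjecture.
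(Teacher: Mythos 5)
Your proposal is correct and follows essentially the same route as the paper: the paper's proof likewise takes the per-input accuracy formula $Acc(\bar x,\epsilon)$ built in the proof of Theorem~\ref{thm:condAcc}, substitutes the rational input $\bar u$ for $\bar x$, and decides the resulting $\fthreal$-sentence $\forall \epsilon.\ (((\epsilon>0)\wedge\psi_{\det}(\bar u))\rightarrow Acc(\bar u,\epsilon))$ using the Macintyre--Wilkie result under Schanuel's conjecture. Your additional remarks (rationality of $\bar u$ making the substitution yield a genuine sentence, and the quantified $\gamma$ forcing the use of the full theory rather than $\lngreal$) are accurate elaborations of points the paper leaves implicit.
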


\subsection{Unconditional Decidability Results}
We shall now give sufficient conditions under which the problems Accuracy-at-an-input and  Accuracy-at-all-inputs will be decidable \emph{unconditionally}, i.e., without assuming Schanuel's conjecture. For these results, we will have to restrict the admissible region. All examples discussed in Section~\ref{sec:accuracy-ex} have regions that satisfy these restrictions under reasonable assumptions.
We start by defining one restriction on regions that will be needed by all our unconditional decidability results. Intuitively, this restriction says that $\alpha,\gamma$  are independent of the privacy budget, while $\beta$ is a function of $\alpha,\gamma,\epsilon$  and the input. In addition, we require that $\beta$ in the region is anti-monotonic in $\alpha$ and $\gamma$ (condition 5 below).

 \begin{definition}
\label{def:simple}
The admissible region $\reg$ is {\agsimple} if there is a partial function $\Iag: \Reals^p\hookrightarrow \powerset{\Reals^{\geq 0}\times\Reals^{\geq 0}}$ and a partial function
$f_\beta:\Reals^{2+p}\times (0,\infty) \hookrightarrow [0,1]$ such that
\begin{enumerate}
\item $domain (\Iag) = \cU$ where $\cU\subseteq \Reals^p,$
\item $domain(f_\beta)=\set{(a,c,\bar u,\epsilon)\st \bar u\in \cU, (a,c)\in \Iag(\bar u), \epsilon>0) },$
\item $f_\beta$ is parametrically definable in $\threal$,
\item $(a,b,c) \in \regf \epsilon {\bar u}$ iff $(a,c)\in \Iag(\bar u), \bar u \in \cU$ and $f_\beta(a,c,\bar u,\epsilon)=b$, and
\item for all $\epsilon,\bar{u},a_1,a_2,c_1,c_2$ with $(a_i,c_i) \in \Iag(\bar u)$ for $i\in \set{1,2}$, and $a_1\leq a_2, c_1\leq c_2$, $f_\beta(a_2,c_2,\bar u,\epsilon) \leq f_\beta(a_1,c_1,\bar u,\epsilon)$.
\end{enumerate}
\end{definition}

\begin{example}
\label{example:simple}
Let $P_\epsilon$ be a program with $\cU$ as the set of inputs.
Let $\eta_1$ be the region defined as follows. For each $\epsilon>0$ and input $\bar u \in \cU$,
$
\eta_1(\bar u, \epsilon)= \set{(a,b,0) \st a\geq 0, b=\euler^{-\frac {a  \epsilon} 2}}.
$
$\eta_1$ is a simple region with 
$$\Iag(\bar u)=\begin{cases}
               \set{(a,0)\st a \geq 0} & \mbox{if } \bar u \in \cU \\
               \mbox{undefined} &\mbox{otherwise}
               \end{cases}$$ and 
$$f_\beta(a,c,\bar u,\epsilon) =\begin{cases}
                                 \euler^{-\frac {a  \epsilon} 2} & \mbox{if } a\in \Reals^{\geq 0}, c=0, \bar u \in \cU, \epsilon >0\\
                                 \mbox{undefined} &\mbox{otherwise}.
                                \end{cases}$$
Notice that, since $\euler^{-\frac{a_2 \epsilon} 2} \leq \euler^{-\frac{a_1 \epsilon} 2}$ if $a_1 \leq a_2$, the region $\eta_1$ satisfies condition 5 of Definition~\ref{def:simple}.

 On the other hand, the region $\eta_2$ defined as
 $
\eta_2(\bar u, \epsilon)= \set{(a,b,0) \st a\geq \epsilon, b=\euler^{-\frac {a  \epsilon} 2}}
$
is not a simple region because $\alpha$ depends on $\epsilon.$ 
\end{example}

\begin{remark}
For the rest of this section, we assume that $\reg$ is represented by the pair $(\Iag,f_\beta).$ For inputs to decision problems, $f_\beta$ will be represented by the formulas
$(\psi_{\beta},\phi_{\beta})$ defining it. $\Iag$ will  usually represented by a first-order formula $\Tag(x_\alpha,x_\gamma,\bar x)$ such that for all $a,c,\bar u$, $(a,c)\in\Iag(\bar u)$ iff $\Tag(a,c,\bar u)$ is true. 
\end{remark}

\subsubsection*{Program with infinite outputs}

We start by showing that the problem of checking accuracy for {\newlang} programs at a rational input $\bar u$ is decidable when we fix $\alpha,\gamma$ to be some rational numbers. For this result, we shall require that the deterministic function $\det (P_\epsilon(\bar u))$ be definable in $\folreal.$ This implies that the output  of the function at $\bar u$ must be rational. The proof essentially requires that the program $P^{\s{new}}_\epsilon$ constructed in the proof  of Theorem~\ref{thm:condAcc} be executed on $\bar u,$ $\det (P_\epsilon(\bar u))$ and $\gamma.$ The assumption that $\det (P_\epsilon(\bar u))$ is definable in $\folreal$ will ensure that the inputs to $P^{\s{new}}_\epsilon$ are rational numbers. 
 Thus, the probability of $P_\epsilon$ generating an output on input $\bar u$ that is at most $\gamma$ away, can then be defined in $\threal.$
This observation, together with the  parametric definability of $f_\beta$ allows us to show that the sentence constructed in the proof of Corollary~\ref{cor:accpoint} that checks accuracy at $\bar u$ is a sentence in $\lngreal.$  In the decision procedure, we need to provide only the fixed values of $\alpha$ and $\gamma $ as a description of
$\Iag.$ 
\ifdefined\AppendixTrue
The formal proof can be found in Appendix~\ref{app:infpoint}.
\fi
\begin{theorem}
\label{thm:infpoint}
The problem Accuracy-at-an-input is  decidable for {\newlang} programs $P_\epsilon$ and rational inputs $\bar u$ when (a) $\det(P_\epsilon)$ is definable in $\folreal,$ (b) $\din$ is definable in $\foreal,$ (c) $\dout$ is definable in $\folreal,$ (d) $\reg=(\Iag, f_\beta)$ such that $\reg$ is {\agsimple} and $\Iag ({\bar u}) = \set{(a,c)}$   for some rational numbers $a,c.$

\end{theorem}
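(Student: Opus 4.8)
The plan is to reduce the question to the validity of a single sentence in the fragment $\lngreal$ and then to invoke the \emph{unconditional} decidability of $\threal$, so that Schanuel's conjecture is not needed. Since $\reg$ is {\agsimple} and $\Iag(\bar u) = \set{(a,c)}$ is a singleton, the only admissible triples at $\bar u$ are $(a,\, f_\beta(a,c,\bar u,\epsilon),\, c)$, one per $\epsilon > 0$. Hence $P_\epsilon$ is accurate at $\bar u$ exactly when, for every $\epsilon > 0$,
\[
\dd(P_\epsilon, \bar u) > a \ \Rightarrow\ \Prob\big(P_\epsilon(\bar u) \in B(\det(P_\epsilon)(\bar u), \bar u, c)\big) \geq 1 - f_\beta(a,c,\bar u,\epsilon).
\]

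First I would compute the correct answer $\bar v_0 = \det(P_\epsilon)(\bar u)$. As $\bar u$ is rational and $\det(P_\epsilon)$ is definable in $\folreal$, the value $\bar v_0$ is a rational vector and does not depend on $\epsilon$. Therefore the premise $\dd(P_\epsilon, \bar u) > a$ is also $\epsilon$-independent: it states that $\inf\set{\din(\bar u, \bar u') \st \det(P_\epsilon)(\bar u') \neq \bar v_0} > a$, which is a closed $\foreal$ sentence because $\din$ is definable in $\foreal$ and the disagreement set is definable in $\folreal$, hence in $\foreal$. By Tarski's theorem this sentence is decidable. If it is false, the implication above is vacuously valid for every $\epsilon$, and I would report that $P_\epsilon$ is accurate at $\bar u$; so assume from now on that $\dd(P_\epsilon, \bar u) > a$.

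Next I would instantiate the construction of $P^{\s{new}}_\epsilon$ from the proof of Theorem~\ref{thm:condAcc} with the fixed rational data $\bar u$, $\bar v_0$, and $\gamma = c$. This yields a {\newlang} program with Boolean (hence finite) output that accepts exactly when the output of $P_\epsilon$ on $\bar u$ lands in $B(\bar v_0, \bar u, c)$. The membership test ``$\dout(\bar u, \bar v_0, \cdot) \leq c$'' can be written as a Boolean assignment comparing real expressions precisely because $\dout$ is definable in $\folreal$---so the ball is cut out by linear constraints---and because $\bar u$, $\bar v_0$, and $c$ are rational. Since $P^{\s{new}}_\epsilon$ has finite outputs, Lemma~\ref{lem:ProbDef} shows that its acceptance probability $p(\epsilon) = \Prob(P_\epsilon(\bar u) \in B(\bar v_0, \bar u, c))$ is parametrically definable in $\threal$; substituting the rational input $\bar u$ (condition~1 of Definition~\ref{def:definable-func}) produces a defining formula in $\lngreal(\epsilon, y)$. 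Likewise, because $f_\beta$ is parametrically definable in $\threal$ and $a, c, \bar u$ are rational, $\beta(\epsilon) = f_\beta(a,c,\bar u,\epsilon)$ has a defining formula in $\lngreal(\epsilon, y)$.

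Finally I would assemble the sentence $\forall \epsilon \forall y_p \forall y_\beta.\ (\epsilon > 0 \wedge \varphi_p(\epsilon, y_p) \wedge \varphi_\beta(a,c,\bar u,\epsilon,y_\beta)) \rightarrow y_p \geq 1 - y_\beta$, where $\varphi_p$ and $\varphi_\beta$ define $p$ and $\beta$; after prenexing (the inner quantifiers of $\varphi_p, \varphi_\beta$ range over fresh variables bound after $\epsilon$), the outermost quantifier binds $\epsilon$ and all terms are $\lngreal$ terms, so this is a sentence of $\lngreal$ that holds in $\rexp$ iff $P_\epsilon$ is accurate at $\bar u$. Its validity is decidable since $\threal$ is decidable. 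The step I expect to be the crux is guaranteeing that $p(\epsilon)$ and $\beta(\epsilon)$ lie in the restricted fragment $\lngreal$ rather than only in $\fthreal$: every exponential arising from Laplace/discrete-Laplace sampling has the form $\eulerv{q\epsilon}$ with $q \in \Rats$ only because, after substituting the rational input and the rational correct answer $\bar v_0$, all means are rational. It is exactly the $\folreal$-definability of $\det(P_\epsilon)$ (together with the rationality of $\bar u$) that delivers rational $\bar v_0$ and thereby keeps us inside $\lngreal$.
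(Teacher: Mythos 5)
Your proposal is correct and follows essentially the same route as the paper's proof: compute the rational output $\bar v$ via quantifier elimination in linear arithmetic, instantiate the $P^{\s{new}}_\epsilon$ construction from Theorem~\ref{thm:condAcc} at the rational data $\bar u,\bar v,c$, and observe that rationality of all substituted constants forces the probability formula from Lemma~\ref{lem:ProbDef} and the $f_\beta$ formula into $\lngreal$, yielding a sentence decidable by \citet{mccallum2012deciding}. The only cosmetic difference is that you discharge the premise $\dd(P_\epsilon,\bar u)>a$ as a separate Tarski-decidable $\foreal$ query, whereas the paper folds the corresponding formula $\phi_\alpha(\bar u,a)$ into the antecedent of the final $\lngreal$ sentence; both are valid.
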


One natural question is if the above result can be extended to checking accuracy for varying $\alpha,\gamma.$ We shall show that, with additional restrictions, we can  establish decidability of Accuracy-at-an-input when only $\gamma$ is fixed. Intuitively, this result will exploit the fact that for a given input $\bar u$, the interesting $\alpha$ to consider is the distance to disagreement for $\bar u$ (as $\beta$ decreases with increasing $\alpha$). We can then proceed as in the proof of Theorem~\ref{thm:infpoint}. 
This idea mostly works except that one has to ensure that the distance to disagreement is a rational number to apply Theorem~\ref{thm:infpoint}; the function $f_\beta$ does not jump at the point of disagreement; and that we can compute $f_\beta$ at $\infty$
(for the case when the distance to disagreement is $\infty$). The following definition captures the latter two restrictions.

\begin{definition}

The {\agsimple} region $\reg=(\Iag,f_\beta)$ is said to be limit-definable  if

\begin{enumerate}
\item There is a linear arithmetic formula $\Tag(x_\alpha,x_\gamma,\bar x)$ such that for all $a,c,\bar u$, $(a,c)\in\Iag(\bar u)$ iff $\Tag(a,c,\bar u)$ is true.
\item There is a partial function  $h_\beta : \Reals^\infty \times \Reals^\infty \times \Reals^{p}\times (0,\infty)\hookrightarrow [0,1]$ (called $f_\beta$'s \emph{limit extension}) such that $h_\beta$ has the following properties. 
\begin{itemize}
    \item $domain(h_\beta)$ is the set of all $(a,c,\bar u,\epsilon)\in \Reals^\infty \times \Reals^\infty \times \Reals^{p}\times (0,\infty)$ such that there is a non-decreasing sequence $\set{(a_i,c_i)}^\infty_{i=0} \in \Iag(\bar u)$ (i.e., $a_i\leq a_{i+1},c_i\leq c_{i+1}$) with    $\lim_{i\to\infty} (a_i,c_i) = (a,c).$ 
    \item For any non-decreasing sequence  $\set{(a_i,c_i)}^\infty_{i=0} \in \Iag(\bar u)$ such that     $\lim_{i\to\infty} (a_i,c_i) = (a,c)$,  $h_\beta(a,c,\bar u,\epsilon)=\lim_{i\to\infty} f_\beta (a_i,c_i,\bar u, \epsilon).$
    \item $h_\beta$  is  parametrically definable in $\threal.$
    
\end{itemize}


\end{enumerate}
Observe that $f_\beta$ and its limit extension $h_\beta$ agree on $domain(f_\beta)$. Therefore, a limit-definable region $\reg=(\Iag,f_\beta)$ shall be represented by a triple $(\Tag, \psi_h, \phi_h)$ where $(\psi_h,\phi_h)$ defines $h_\beta$ (the limit extension of $f_\beta$). 
\end{definition}
Intuitively, the first requirement  ensures  that the $\alpha$ that needs to be considered for a fixed $\gamma$ is rational. 
The second requirement ensures that $f_\beta$ is continuous ``from below" and can be extended to its boundary (including the case when $\alpha$ takes the value $\infty.$) Note that $\psi_h$ can be written in the theory of linear arithmetic thanks to the fact that $\Tag$ is a linear arithmetic formula.


\begin{example}
\label{example:limit-definable}
The region $\eta_1$ in Example~\ref{example:simple} can be seen to limit-definable with  $\Tag(x_\alpha,x_\gamma,\bar x)=((x_\alpha \geq 0)\wedge (x_\gamma=0))$ and $h_\beta$ as follows: 
$$h_\beta(a,c,\bar u,\epsilon) =\begin{cases}
                                 0 & \mbox{if } a=\infty, c=0, \bar u \in \cU,\epsilon>0\\
                                 \euler^{-\frac {a  \epsilon} 2} & \mbox{if }  a\in \Reals^{\geq 0}, c=0, \bar u \in \cU,\epsilon>0 \\
                                 \mbox{undefined} &\mbox{otherwise}.
                                \end{cases}$$
An example of a region that is simple but not limit-definable is the region $\eta_3$ defined as follows. 
Given $\bar u \in \cU$, $\epsilon>0,$
$\eta_3(\bar u, \epsilon)= \set{(a,b,0)\st \mbox{ either } (0\leq a< 1 \wedge  b= e^{-\frac {a \epsilon } 2}) \mbox{ or } (1\leq a \wedge b= e^{-\frac {3 a \epsilon } 5})}.
$ 
$\eta_3$ is not limit-definable as parameter $\beta$  has a ``discontinuity" at $\alpha=1$.
\end{example}

We have the following theorem that shows that checking Accuracy-at-an-input is decidable for fixed $\gamma.$  Please note that we require $\din$ to be definable in $\folreal$ to ensure that the distance to disagreement for a rational input is rational.  All examples considered in Section~\ref{sec:accuracy-ex} satisfy these constraints. In the decision procedure,  the fixed value of $\gamma$, $c$, is encoded in the formula $\Tag$. 
\ifdefined\AppendixTrue
The proof can be located in Appendix~\ref{app:fixedgamma}. 
\fi

\begin{theorem}  \label{thm:fixedgamma}
 The problem Accuracy-at-an-input is  decidable for {\newlang} programs $P_\epsilon$ and rational inputs $\bar u$ when
\begin{enumerate}
\item $\det(P_\epsilon), \din,\dout$ are definable in $\folreal,$ and
\item $\reg=(\Tag, \psi_h,\phi_h)$ is limit-definable and there is a rational number $c$ such that
for all  $a,c',\bar u$, if $\Tag(a,c',\bar u) $ is true then $c'=c.$


\end{enumerate}

\end{theorem}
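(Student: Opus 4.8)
The plan is to reduce the quantified accuracy check at the rational input $\bar u$ to a single first-order sentence in the decidable fragment $\lngreal$ of \citet{mccallum2012deciding}, so that no appeal to Schanuel's conjecture is needed. The guiding observation is that, with $\gamma$ fixed to $c$, the probability $p(\epsilon) := \Prob(P_\epsilon(\bar u)\in B(\det(P_\epsilon)(\bar u),\bar u,c))$ does not depend on $\alpha$, whereas by condition 5 of Definition~\ref{def:simple} the admissible error $f_\beta(\alpha,c,\bar u,\epsilon)$ is non-increasing in $\alpha$. Let $\alpha_0$ denote $\dd(P_\epsilon,\bar u)$; since $\det(P_\epsilon)$ and $\din$ are $\epsilon$-independent, so is $\alpha_0$. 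Accuracy at $\bar u$ for a fixed $\epsilon$ holds iff $f_\beta(\alpha,c,\bar u,\epsilon)\ge 1-p(\epsilon)$ for every $\alpha<\alpha_0$ with $(\alpha,c)\in\Iag(\bar u)$. By monotonicity the infimum of $f_\beta$ over such $\alpha$ is its limit from below, and limit-definability identifies this limit with $h_\beta(\alpha_0,c,\bar u,\epsilon)$. Hence the whole accuracy-at-$\bar u$ problem collapses to deciding whether $h_\beta(\alpha_0,c,\bar u,\epsilon)\ge 1-p(\epsilon)$ holds for all $\epsilon>0$.

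First I would compute $\alpha_0$ and argue that it is rational or $+\infty$. Since $\bar u$ is rational and $\det(P_\epsilon)$ is definable in $\folreal$, the value $\bar v_0 := \det(P_\epsilon)(\bar u)$ is rational; and because $\din$ is definable in $\folreal$ as well, the set $\{\,t : \exists u'\,(\det(P_\epsilon)(u')\ne \bar v_0 \wedge \din(\bar u,u')\le t)\,\}$ is definable in linear arithmetic with rational parameters. By quantifier elimination for $\folreal$ this one-variable set is a finite union of points and intervals with rational endpoints, so its infimum $\alpha_0$ is rational or $+\infty$ and is effectively computable. This is exactly the point where definability of $\din$ in $\folreal$ (rather than merely $\foreal$, as in Theorem~\ref{thm:infpoint}) is used: Theorem~\ref{thm:infpoint} needs a rational $\alpha$, and $\folreal$-definability guarantees it. The value $+\infty$ arises precisely when $\det(P_\epsilon)$ is constant on $\cU$, which is the case the $\infty$-boundary of $h_\beta$ is designed to cover.

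Next I would reuse the apparatus of Theorem~\ref{thm:infpoint}. Because $\dout$ is definable in $\folreal$ and $\bar v_0, c$ are rational, the ball $B(\bar v_0,\bar u,c)$ is cut out by finitely many linear (in)equalities; following the construction in the proof of Theorem~\ref{thm:condAcc} I would build a {\newlang} program $P^{\s{new}}_\epsilon$ with finite (Boolean) output that returns $\true$ on $\bar u$ exactly when $P_\epsilon$'s output on $\bar u$ lands in this ball. By Lemma~\ref{lem:ProbDef}, $p(\epsilon)$ is then parametrically definable in $\threal$, and since $\bar u$ and $c$ are rational the defining formula $\varphi_p(\epsilon,y_1)$ lies in $\lngreal(\epsilon,y_1)$. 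Limit-definability supplies a formula $\varphi_{h_\beta}$ for $h_\beta$ that, after substituting the rational (or $\infty$-encoded) constants $\alpha_0,c,\bar u$, also lies in $\lngreal(\epsilon,y_2)$. Combining these, the target condition becomes the sentence
\[
\forall \epsilon\,\Bigl(\epsilon>0 \;\rightarrow\; \exists y_1\,\exists y_2\,\bigl(\varphi_p(\epsilon,y_1)\wedge \varphi_{h_\beta}(\alpha_0,c,\bar u,\epsilon,y_2)\wedge y_2 \ge 1-y_1\bigr)\Bigr),
\]
which is a prenex $\lngreal$ sentence with $\epsilon$ as the outermost quantifier. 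By the decidability of $\threal$ it can be decided, and crucially no instance of Schanuel's conjecture is invoked.

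The main obstacle is the soundness of collapsing the whole quantifier over $\alpha$ to the single point $\alpha_0$, and it is exactly the two hypotheses that make this sound. If $f_\beta$ had a downward jump at $\alpha_0$ (as in the region $\eta_3$ of Example~\ref{example:limit-definable}), the infimum over $\alpha<\alpha_0$ would differ from any value naturally assigned at $\alpha_0$, and this limit from below might not even be $\threal$-expressible; limit-definability rules this out by forcing all non-decreasing approximating sequences to share one parametrically definable limit $h_\beta$, including at $\alpha_0=\infty$. Two routine checks remain: one first tests in linear arithmetic whether any admissible $\alpha<\alpha_0$ with $\Tag(\alpha,c,\bar u)$ exists --- if none does (e.g.\ the degenerate case $\alpha_0=0$), accuracy at $\bar u$ holds vacuously --- and otherwise verifies, using that $\Tag$ is a linear formula and that $\alpha_0$ is a limit from below of such admissible $\alpha$, that $(\alpha_0,c,\bar u,\epsilon)$ indeed lies in the domain of $h_\beta$, so that the substitution above is legitimate.
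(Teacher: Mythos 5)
Your overall route is the paper's: compute the distance to disagreement $\dd(\bar u)$ (rational or $\infty$ by $\folreal$-definability of $\det(P_\epsilon)$ and $\din$, via quantifier elimination), reduce the ball-membership probability to a finite-output {\newlang} program so that Lemma~\ref{lem:ProbDef} gives a formula in $\lngreal(\epsilon,y)$ for rational $\bar u, \bar v, c$, collapse the universal quantifier over $\alpha$ using anti-monotonicity and limit-definability, and decide the resulting $\lngreal$ sentence by \citet{mccallum2012deciding}. However, there is a genuine gap in the collapse step: you evaluate $h_\beta$ at $\alpha_0 = \dd(\bar u)$, whereas the point that actually matters is
$a^{\mathsf{max}}_c = \sup\,\{\,a < \dd(\bar u) : (a,c)\in \Iag(\bar u)\,\}$,
which is what the paper computes (its set $S_c$, with $\sup S_c$ rational or $\infty$ and computable since $\Tag$ is a linear-arithmetic formula). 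The accuracy condition only quantifies over \emph{admissible} $\alpha$ strictly below $\dd(\bar u)$, and nothing forces the admissible $\alpha$-section to accumulate at $\dd(\bar u)$: e.g.\ $\Iag(\bar u)=\{(a,c): 0\le a\le 1\}$ with $\dd(\bar u)=5$. In that situation $(\alpha_0,c,\bar u,\epsilon)$ need not lie in $\mathrm{domain}(h_\beta)$ at all --- your closing ``verification'' step would simply fail, and you give no fallback, so the procedure does not decide that instance; and in the variant where $(\alpha_0,c)\in\Iag(\bar u)$ but is isolated from below, $h_\beta(\alpha_0,c,\bar u,\epsilon)=f_\beta(\alpha_0,c,\bar u,\epsilon)$ can be strictly smaller than $\inf_{\alpha<\dd(\bar u)} f_\beta(\alpha,c,\bar u,\epsilon)$, so your test $p(\epsilon)\ge 1-h_\beta(\alpha_0,\ldots)$ is strictly stronger than accuracy and can return a spurious ``no''. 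The fix is exactly the paper's: check emptiness of $S_c$ (trivial accuracy if empty), otherwise compute $a^{\mathsf{max}}_c=\sup S_c$, note that limit-definability together with condition~5 of Definition~\ref{def:simple} guarantees $(a^{\mathsf{max}}_c,c,\bar u,\epsilon)\in\mathrm{domain}(h_\beta)$ and that $h_\beta$ there equals the needed infimum, and compare $p(\epsilon)$ against $1-h_\beta(a^{\mathsf{max}}_c,c,\bar u,\epsilon)$. (Your side remark that $\dd(\bar u)=\infty$ occurs exactly when $\det(P_\epsilon)$ is constant is also inaccurate --- $\din$ may itself take the value $\infty$ on disagreeing pairs --- but nothing in the argument depends on it.)
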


\subsubsection*{Program with finite outputs} We now turn our attention to programs with finite outputs. For such programs, $\gamma$ is often $0.$ If that is the case, then we can appeal to Theorem~\ref{thm:fixedgamma} directly. However, for some examples, $\gamma$ may not be $0$ (for example, {\nmax} in Section~\ref{sec:nmax}).

When a program has only finite outputs, for each input $\bar u$, $\dout(\bar u,\bar v, \bar v')$ can take only a finite number of distinct values. This suggests that we need to check accuracy at input $u$ for only a finite number of possible values of $\gamma,$ namely the distinct values of $\dout(\bar u,\bar v, \bar v')$.  Then as in Theorem~\ref{thm:fixedgamma}, we can check for accuracy at these values of $\gamma$ by setting the $\alpha$ parameter to be $\dd(P_\epsilon,u)$, the distance to disagreement for $\bar u.$
We need a monotonicity condition that ensures the soundness of this strategy. 



\begin{definition}
\label{def:agmonotonic}
Let $\reg=(\Tag,f_\beta,h_\beta)$ be a limit-definable region. Given $\bar u$ and non-negative $a\in \Reals^\infty,c\in \Reals,$ let $I_{<a,c}(\bar u)$ be the set $\set{a' \st \Tag(a',c,\bar u) \mbox{ is true},
a'< a}.$ $\reg$ is said to be $(\alpha,\gamma)$-monotonic if
for each $\bar u$ and non-negative real numbers $c_1,c_2,a$ such that $I_{<a,c_1}(\bar u)\ne \emptyset$ and $I_{<a,c_2}(\bar u)\ne \emptyset$, $$
c_1\leq c_2 \Rightarrow \sup (I_{<a,c_1}(\bar u)) \leq \sup (I_{<a,c_2}(\bar u)) .$$
\end{definition}

We have the following result.
\ifdefined\AppendixTrue
The proof can be located in Appendix~\ref{app:finiteoutputs}.
\fi
\begin{theorem}
\label{thm:finiteoutputs}
The problem Accuracy-at-an-input is  decidable for {\newlang} programs $P_\epsilon$ and rational inputs $\bar u$ when (a) $P_\epsilon$ has finite outputs, (b) $\det(P_\epsilon),\din,\dout$ are definable in $\folreal,$ and (c) $\reg$ is $(\alpha,\gamma)$-monotonic.

\end{theorem}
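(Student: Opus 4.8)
The plan is to reduce accuracy at the rational input $\bar u$ to the validity of a single sentence of $\lngreal$, which is decidable \emph{unconditionally} by \citet{mccallum2012deciding}, thereby avoiding Schanuel's conjecture. \textbf{Finitely many candidate radii.} First I would set $v_0 = \det(P_\epsilon)(\bar u)$; since $\det(P_\epsilon)$ is definable in $\folreal$ and $\bar u$ is rational, $v_0$ is a computable value in $\sdom^m$. Because $P_\epsilon$ has finite outputs, $\cV=\sdom^m$ is finite, so the map $v'\mapsto\dout(\bar u,v_0,v')$ takes only finitely many distinct values; enumerate them as $0=\delta_0<\delta_1<\cdots<\delta_s$ and set $\delta_{s+1}=\infty$. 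Each $\delta_j$ is rational because $\dout$ is $\folreal$-definable and evaluated at rational/finite-domain arguments. For $\gamma\in[\delta_j,\delta_{j+1})$ the ball $B(v_0,\bar u,\gamma)$ equals the fixed finite set $B_j=\{v'\in\cV:\dout(\bar u,v_0,v')\le\delta_j\}$, so the success probability is constant on this interval; call it $p_j$. By Lemma~\ref{lem:ProbDef} each $\Prob(P_\epsilon(\bar u)=v')$ is parametrically definable in $\threal$, hence so is $p_j=\sum_{v'\in B_j}\Prob(P_\epsilon(\bar u)=v')$. Note $B_s=\cV$, so $p_s=1$ (using termination with probability $1$).

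\textbf{The binding constraint for a fixed radius.} Let $D=\dd(P_\epsilon,\bar u)$; since $\det(P_\epsilon)$ and $\din$ are definable in $\folreal$ and $\bar u$ is rational, $D$ is rational or $\infty$. By Definition~\ref{def:accuracy}, the condition for a triple $(\alpha,\beta,\gamma)$ at $\bar u$ is vacuous when $\alpha\ge D$ and otherwise demands $f_\beta(\alpha,\gamma,\bar u,\epsilon)\ge 1-\Prob(P_\epsilon(\bar u)\in B(v_0,\bar u,\gamma))$, a bound not involving $\alpha$. Exactly as in Theorem~\ref{thm:fixedgamma}, anti-monotonicity of $f_\beta$ in $\alpha$ (condition~5 of Definition~\ref{def:simple}) shows that, for a fixed $\gamma$, the hardest admissible $\alpha$ is approached at $\sup I_{<D,\gamma}(\bar u)$, and the relevant value of $\beta$ is the limit $h_\beta(\sup I_{<D,\gamma}(\bar u),\gamma,\bar u,\epsilon)$ supplied by the limit extension (which exists because $(\alpha,\gamma)$-monotonicity presupposes limit-definability).

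\textbf{Sweeping the radius.} The remaining work is to quantify over all $\gamma\ge 0$ using only the finitely many intervals $[\delta_j,\delta_{j+1})$. On interval $j$ the required inequality is $f_\beta(\alpha,\gamma,\bar u,\epsilon)\ge 1-p_j$ for every $(\alpha,\gamma)$ with $\gamma\in[\delta_j,\delta_{j+1})$, $\Tag(\alpha,\gamma,\bar u)$ true, and $0\le\alpha<D$. Here $(\alpha,\gamma)$-monotonicity (Definition~\ref{def:agmonotonic}) is essential: it forces $\gamma\mapsto\sup I_{<D,\gamma}(\bar u)$ to be non-decreasing, so that, together with the joint anti-monotonicity of $f_\beta$ (inherited by $h_\beta$ as a limit), the infimum of $f_\beta$ over this two-dimensional index set is approached along a non-decreasing sequence converging to the coordinate-wise supremum $(A_j,\delta_{j+1})$, where $A_j=\sup\bigcup_{\gamma<\delta_{j+1}}I_{<D,\gamma}(\bar u)$. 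Since $\Tag$ is a linear-arithmetic formula and $D,\delta_{j+1}$ are rational/$\infty$, $A_j$ is definable in $\folreal$ and hence rational or $\infty$; and by the continuity-from-below built into the limit extension, this infimum equals $h_\beta(A_j,\delta_{j+1},\bar u,\epsilon)$. Consequently the interval-$j$ requirement is equivalent to the single inequality $h_\beta(A_j,\delta_{j+1},\bar u,\epsilon)\ge 1-p_j$, which is trivial for $j=s$ as $p_s=1$.

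\textbf{Assembling the decision procedure.} Putting the pieces together, $P_\epsilon$ is accurate at $\bar u$ for all $\epsilon>0$ iff the sentence
\[ \forall \epsilon>0.\ \bigwedge_{j=0}^{s} \big( h_\beta(A_j,\delta_{j+1},\bar u,\epsilon)\ge 1-p_j \big) \]
holds in $\rexp$. Because the rational constants $A_j,\delta_{j+1},\bar u$ are computed beforehand, parametric definability of $h_\beta$ and of each $p_j$ (Lemma~\ref{lem:ProbDef}) turns each conjunct into a formula of $\lngreal(\epsilon)$, so the whole sentence lies in $\lngreal$ and its validity is decidable by \citet{mccallum2012deciding}, without Schanuel's conjecture. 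I expect the main obstacle to be the \emph{soundness} of collapsing the continuum of radii to the finite set of critical points: one must show the worst case over each ball-constant interval sits at its right endpoint and is captured by $h_\beta$ at $(A_j,\delta_{j+1})$. This is precisely what $(\alpha,\gamma)$-monotonicity buys, since it makes the binding $\alpha$ move monotonically with $\gamma$ so the extremal pair is the coordinate-wise supremum rather than an interior point that finitely many checks could miss; the attendant limit manipulations are then justified by limit-definability.
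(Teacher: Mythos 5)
Your proposal follows essentially the same route as the paper's proof: compute $\det(P_\epsilon)(\bar u)$ and $\dd(P_\epsilon,\bar u)$, observe that finiteness of the output domain yields finitely many distinct radii and hence finitely many balls with parametrically definable probabilities $p_j$ (via Lemma~\ref{lem:ProbDef}), collapse each radius interval to a single check against $h_\beta$ evaluated at a coordinate-wise supremum, justify the collapse by $(\alpha,\gamma)$-monotonicity plus limit-definability, and discharge the resulting sentence in $\lngreal$ by \citet{mccallum2012deciding}. This is exactly the paper's argument, including the key role you assign to $(\alpha,\gamma)$-monotonicity (which is the content of the paper's Claim~\ref{claim:monotonicty}).

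Two technical points where the paper is more careful and your formulas as written could misbehave. First, you evaluate $h_\beta$ at $(A_j,\delta_{j+1})$, i.e.\ at the right endpoint of the radius interval; but the limit extension $h_\beta$ is only guaranteed to be defined at points that are limits of non-decreasing sequences drawn from $\Iag(\bar u)$, and if the admissible region contains no pairs with second coordinate approaching $\delta_{j+1}$, the point $(A_j,\delta_{j+1})$ need not lie in $\dom(h_\beta)$. The paper instead uses $c^{\mathsf{max}}_\ell=\sup\set{c \st \exists a.\ (a,c)\in Z_\ell}$ (the supremum of \emph{admissible} radii within the interval, which may be strictly below $c_{\ell+1}$) and proves via $(\alpha,\gamma)$-monotonicity that $(a^{\mathsf{max}}_\ell,c^{\mathsf{max}}_\ell)$ is reached by a non-decreasing sequence inside $Z_\ell$, hence is in $\dom(h_\beta)$. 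Second, you should restrict attention to intervals on which the constraint is non-vacuous (the paper's index set $O$): when no admissible $(a,c)$ with $a<\dd(\bar u)$ has $c$ in the interval, your conjunct could be spuriously strict (or undefined), whereas accuracy holds trivially there. Both issues are local and fixable without changing your overall strategy.
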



%

When the  program $P_\epsilon$ has finite inputs and finite outputs, we can invoke Theorem~\ref{thm:finiteoutputs} repeatedly to check for accuracy at all possible inputs. 
The following is an immediate corollary of Theorem~\ref{thm:finiteoutputs}.

\begin{corollary}
\label{cor:finite}
The problem Accuracy-at-all-inputs is  decidable for {\newlang} programs $P_\epsilon$ when (a) $P_\epsilon$ has finite inputs and finite outputs, (b) $\det(P_\epsilon),\din,\dout$ are definable in $\folreal,$ and (c) $\reg$ is $(\alpha,\gamma)$-monotonic.

\end{corollary}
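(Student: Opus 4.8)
The plan is to reduce Accuracy-at-all-inputs to finitely many instances of Accuracy-at-an-input and then invoke Theorem~\ref{thm:finiteoutputs}. The crucial observation is that when $P_\epsilon$ has finite inputs, every input variable is a $\sdom$-variable, so with $\ell$ input $\sdom$-variables and no real input variables the input set is exactly $\cU = \sdom^\ell$. Since $\sdom$ is a fixed finite subset of $\integers$, the set $\cU$ is finite and effectively enumerable, and moreover each $\bar u \in \cU$ is a tuple of integers, hence rational. This is precisely what is needed to meet the ``rational inputs $\bar u$'' premise of Theorem~\ref{thm:finiteoutputs}.

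Next I would unfold the definitions of the two decision problems. By definition, $P_\epsilon$ is accurate at all inputs exactly when, for every $\bar u \in \cU$, $P_\epsilon$ is $(\alpha,\beta,\gamma)$-accurate at $\bar u$ for all $\epsilon \in I$ and all $(\alpha,\beta,\gamma) \in \regf{\epsilon}{\bar u}$. In other words, Accuracy-at-all-inputs is the finite conjunction $\bigwedge_{\bar u \in \cU}$ of the Accuracy-at-an-input instances, one per $\bar u$. The remaining hypotheses of the corollary --- namely that $P_\epsilon$ has finite outputs, that $\det(P_\epsilon)$, $\din$, and $\dout$ are definable in $\folreal$, and that $\reg$ is $(\alpha,\gamma)$-monotonic --- coincide verbatim with the hypotheses of Theorem~\ref{thm:finiteoutputs}, and they are global assumptions on the program, its deterministic counterpart, the distance functions, and the region that do not depend on the particular input. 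Consequently they continue to hold when we single out any fixed rational $\bar u \in \cU$, so Theorem~\ref{thm:finiteoutputs} applies directly at each such $\bar u$ and furnishes a decision procedure for every conjunct.

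Finally I would assemble the overall procedure: enumerate the finitely many $\bar u \in \cU$, run the Accuracy-at-an-input decision procedure supplied by Theorem~\ref{thm:finiteoutputs} on each, and answer ``accurate'' iff all the individual answers are positive (otherwise a failing $\bar u$ yields a counterexample of the required form). Since a finite conjunction of decidable predicates is decidable, this settles the corollary. There is no genuinely hard step here, as the statement is advertised as an immediate corollary; the only points deserving a line of justification are that finite inputs really do make $\cU$ a finite set of rational-valued inputs, so that Theorem~\ref{thm:finiteoutputs} is applicable, and that the ``for all inputs'' quantifier collapses to a finite conjunction. All of the substantive content --- the reduction to a single rational input, the use of the distance to disagreement for $\bar u$, and the encoding in the decidable fragment of the theory of reals with exponentials --- has already been discharged inside Theorem~\ref{thm:finiteoutputs}.
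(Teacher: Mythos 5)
Your proposal is correct and matches the paper's argument: the paper likewise treats this as an immediate corollary obtained by invoking Theorem~\ref{thm:finiteoutputs} repeatedly on the finitely many (necessarily rational, since $\sdom\subseteq\integers$) inputs in $\sdom^\ell$. The only detail you add beyond the paper's one-line justification is the explicit check that the remaining hypotheses are input-independent, which is a reasonable bit of diligence.
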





	\section{Experiments}
\label{sec:experiments}

We implemented a simplified version of the algorithm for verifying
accuracy of {\newlang} programs. Our tool {\toolplus} handles
loop-free programs with finite, discrete input domains, and whose
deterministic function has discrete output. Programs with bounded
loops (with constant bounds) can be handled by unrolling. The
restriction that the deterministic function has discrete output does
not preclude programs with real outputs, as they can be modeled in
the subset of {\newlang} that the tool handles. We discuss this
further below. 

The tool takes as input a program $P_\epsilon$
parametrized by $\epsilon$ and an input-output table representing
$\det(P_\epsilon)$ for a set of inputs, and either verifies $P_\epsilon$ to be
$(\alpha,\beta,\gamma)$-accurate for each given input and for all $\epsilon>0$ or returns a
counterexample, consisting of a specific input and a value for
$\epsilon$ at which accuracy fails. We choose values of
$\alpha,\beta,\gamma$ depending on the example. 
As accuracy claims in Section~\ref{sec:accuracy-ex} show,  $\beta$ is typically given as a continuous function of $\epsilon, \alpha$ and $\gamma$. For such continuous $\beta$, we can use $\alpha \leq \dd(P,u)$ in the definition of accuracy  instead of $\alpha < \dd(P,u)$ (see Definition~\ref{def:accuracy} on Page \pageref{def:accuracy}). In our experiments, we fix  $\gamma$ to be some integer.
For a given input $u$, $\alpha$ is usually set to be the distance to disagreement for the input being checked.   $\beta$ can thus be viewed as a function of $\epsilon$ and the input $u$.  The proof of Theorem~\ref{thm:fixedgamma} implies that such  checks are necessary and sufficient to conclude 
accuracy at the given inputs for fixed $\gamma,$ and all possible values of $\epsilon$ and $\alpha$. In many examples, the only value for $\gamma$ that needs to be verified is $0$.

 {\toolplus} is implemented in C++ and
uses Wolfram Mathematica\textregistered. It works in two phases. In
the first phase, a Mathematica script is produced with commands for the input-output probability computations and the subsequent inequality checks. In the second phase, the generated script is run on Mathematica.
We only verify accuracy-at-an-input and not accuracy-at-all-inputs as our decision procedure for the latter problem is subject to Schanuel's conjecture.

We test the ability of {\toolplus} to verify accuracy-at-an-input for four
examples from Section~\ref{sec:accuracy-ex}: $\svttwo$, $\nmax$,
Laplace Mechanism (denoted $\laplace$ below), and $\nsvt$. We also
verify a variant of $\svttwo$ which we refer to as $\svtone$ (the
difference is discussed in~\ref{sec:scaling} below). The pseudocode is shown in Figures~\ref{fig:nmax} and~\ref{fig:two-algos}
on Pages~\pageref{fig:nmax} and ~\pageref{fig:two-algos} respectively; we omit the
pseudocode for {\laplace} and refer the reader to its description in
Section~\ref{sec:accuracy-ex}. Three of the examples, $\svttwo$,
$\svtone$, and $\nmax$, have discrete output and thus their
deterministic functions, $\det(\svttwo),\det(\svtone)$, and
$\det(\nmax)$, are naturally modeled with a finite input-output
table. The other two examples, $\laplace$ and $\nsvt$, can be modeled
using a finite input-output table as follows. Given $\gamma$, we can
compute the deterministic function alongside the randomized function
and instrument the resulting program to check that all continuous
outputs are within the error tolerance given by $\gamma$, outputting
$\top$ if so. The finite input-output table can reflect this scheme by
regarding the instrumented program as a computation which outputs
$\top$.

Our experiments test two claims. We first test
the performance of {\toolplus} by measuring how running time scales
with increasing input sizes and example parameters. Running times are
given in Tables~\ref{tab:perf-svt-nmax-lap}
and~\ref{tab:num-sparse-experiment}. Here the parameter $m$ is the length of input arrays and a range $[-\ell,\ell]$ is the range of all possible integer values that can be stored in each array location. Thus, we have $(2\ell+1)^m$ possible inputs. Hence, the tool behaves roughly polynomial in $\ell$ and exponential in $m$. In the second part of our
experiments, we show that {\toolplus} is able to obtain accuracy bounds that are better than those known in
the literature, and generate counterexamples when accuracy claims are not true. All experiments were run on an Intel\textregistered Core
i7-6700HQ @ 2.6GHz CPU with 16GB memory. In the tables, running times
are reported as (T1/T2), where T1 refers to the time needed by the C++
phase to generate the Mathematica scripts and T2 refers to the time
used by Mathematica to check the scripts. In some tables we omit T1
when it is negligible compared to T2.

We note the following about the experimental results.
\begin{enumerate}
\item {\toolplus} verifies accuracy in reasonable time. The time
  needed to generate Mathematica scripts is significantly smaller than
  the time taken by Mathematica to check the scripts (i.e., T1 $\ll$
  T2). Most of the time spent by Mathematica goes toward computing
  output probabilities.
\item {\toolplus} is able to verify that accuracy holds with smaller
  error probabilities than those known in the literature.
\item Verifying accuracy is faster than verifying differential
  privacy. Differential privacy involves computing, for any specific input, a
  probability for each possible output, whereas in accuracy we only
  need to compute the single probability for each input-output pair
  given by the deterministic function.
\end{enumerate}

\subsection{Performance}
\label{sec:scaling}
Table~\ref{tab:svt-experiment1} shows running times for $\svttwo$ and
$\svtone$. $\svtone$ differs from $\svttwo$ by only sampling a single
noisy threshold, whereas $\svttwo$ samples a fresh noisy threshold
each time it finds a query above the current noisy
threshold; the pseudocode for $\svtone$ is the same as $\svttwo$ (Figure~\ref{fig:nmax}), except for the re-sampling of $\rv_T$ inside the for-loop (shown in bold). Generally, running time increases in the number of inputs
that have to be verified. For both $\svttwo$ and $\svtone$, we verify
$(\alpha,\beta,0)$-accuracy for all $\alpha,\beta$, with
$\beta=2mce^{-\nicefrac{\alpha\epsilon}{8c}}$. This can be
accomplished with single accuracy checks at $\alpha=\dd(\svttwo,u)$
and $\alpha=\dd(\svtone,u)$, for each input $u$, as discussed in the
proof of Theorem~\ref{thm:fixedgamma}. Observe there is a substantial
performance difference between $\svttwo$ and $\svtone$ for $c>1.$ When analyzing
$\svtone$, {\toolplus} must keep track of many possible relationships
between random variables and the single noisy threshold. This becomes
expensive for many queries and large value of $c$. On the other hand,
whenever $\svttwo$ samples a new noisy threshold, this has the effect
of decoupling the relationship between future queries and past
queries. Table~\ref{tab:noisy-max-experiment1} shows results for
$\nmax$, in which we verify $(\alpha,\beta,0)$-accuracy for all
$\alpha,\beta$ with $\beta=me^{-\nicefrac{\alpha\epsilon}{2}}$. Here
again, a single accuracy check at $\alpha=\dd(\nmax,u)$ suffices for
each input $u$. Table~\ref{tab:laplace-experiment1} shows results for
$\laplace$, where we verify $(0,\beta,\gamma)$-accuracy for all
$\beta,\gamma$, with
$\beta=ke^{-\nicefrac{\gamma}{\epsilon}}$. Finally,
Table~\ref{tab:num-sparse-experiment} shows results for $\nsvt$, where
we verify $(\alpha,\beta,\alpha)$-accuracy for specific values of
$\alpha$, with $\beta=(2m+1)ce^{-\nicefrac{\alpha\epsilon}{9c}}$,
i.e. the improved error probability bound from
Section~\ref{sec:accuracy-ex}. In this case, the accuracy claim holds for
precisely $(\alpha,\beta,\alpha)$. This is because $\gamma=\alpha$ is
no longer constant, and we thus cannot leverage
Theorem~\ref{thm:fixedgamma} for a stronger claim.
Observe that when the input range is $[-\ell,\ell]$, then  $\dd(\nsvt,x)$ can take any integer value between $0$ and $\ell.$
Thus, even though we have set $\alpha$ to be the same value in each individual experiment (and not the distance to disagreement for the input), we vary $\alpha$ across the experiments so as to ensure that for each input $x$, accuracy is eventually verified when $\alpha$ is set to $\dd(\nsvt,u)$.

\begin{table*}[!t]
\caption{\footnotesize (a) Running times for $\svttwo$ and $\svtone$,
  verifying $(\alpha,\beta,0)$-accuracy for all $\alpha,\beta$, where
  $\beta=2mce^{-\nicefrac{\alpha\epsilon}{8c}}$. Accuracy is verified
  for all lists of $m$ integer-valued queries ranging over
  $[-1,1]$. The threshold $T$ is set to be $0$. (b) Running times for $\nmax$ with $m=3$ and varying input
  range, verifying $(\alpha,\beta,0)$-accuracy for all $\alpha,\beta$,
  with $\beta=me^{-\nicefrac{\alpha\epsilon}{2}}$. (c) Running times
  for $\laplace$, verifying $(0,\beta,\gamma)$-accuracy with $k=2$,
  $\Delta=1$, and $\beta=ke^{-\gamma\epsilon}$.}
\label{tab:perf-svt-nmax-lap}
  \begin{minipage}[t]{\textwidth}\centering
  \begin{subtable}[t]{\textwidth}\centering
  \resizebox{\textwidth}{!}{%
  \begin{tabular}[t]{| l | c | c | c | c | c | c | c | c | c | c |}
    \hline
    $m$ & 1 & 2 & 2 & 3 & 3 & 3 & 4 & 4 & 4 & 4\\ \hline
    $c$ & 1 & 1 & 2 & 1 & 2 & 3 & 1 & 2 & 3 & 4\\ \hline
    $\svttwo$ & 0s/12s & 0s/45s & 0s/45s & 0s/97s & 0s/90s & 0s/89s & 0s/195s & 1s/189s & 2s/189s & 1s/195s\\ \hline
    $\svtone$ & 0s/11s & 0s/44s & 0s/72s & 0s/99s & 0s/217s & 0s/386s & 1s/199s & 1s/462s & 1s/940s & 1s/1467\\
    \hline
  \end{tabular}
  }
  \caption{\footnotesize $\svttwo$ and $\svtone$}
  \label{tab:svt-experiment1}
  \end{subtable}
  \end{minipage}
\begin{minipage}[t]{\textwidth}
  \begin{subtable}[t]{0.55\textwidth}\centering
  \resizebox{.7\textwidth}{!}{
  \begin{tabular}[t]{| l | c | c | c |}
    \hline
    Range & $[-1,1]$ & $[-2,2]$ & $[-3,3]$ \\ \hline
    (T1/T2) & 0s/148s & 1s/823s & 1s/2583s \\ \hline
  \end{tabular}
  }
  \caption{\footnotesize $\nmax$}
  \label{tab:noisy-max-experiment1}
\end{subtable}
\hfill
\begin{subtable}[t]{0.35\textwidth}\centering
  \resizebox{0.6\textwidth}{!}{
  \begin{tabular}[t]{| l | c | c | c |}
    \hline
    $\gamma$ & 1 & 2 & 3 \\ \hline
    $[-1,1]$ & 5s & 6s & 6s \\ \hline
    $[-2,2]$ & 9s & 9s & 8s \\ \hline
  \end{tabular}
  }
  \caption{\footnotesize $\laplace$}
  \label{tab:laplace-experiment1}
\end{subtable}
\end{minipage}
\end{table*}

\begin{table*}[!t]
 \caption{\footnotesize Running times for $\nsvt$, verifying
    $(\alpha,\beta,\alpha)$-accuracy for
    $\beta=(2m+1)ce^{-\nicefrac{\alpha\epsilon}{9c}}$ at specific
    values of $\alpha,m,c$, and varying input ranges. Threshold $T = 0$ in both tables. 
    The tables highlight scaling in run time as $m$ and $\alpha$ vary.
    We use the improved expression for
    $\beta$ from Corollary~\ref{cor:NumSparse}. }
  \label{tab:num-sparse-experiment}
  \begin{subtable}[t]{\textwidth}\centering
  \resizebox{\textwidth}{!}{%
  \begin{tabular}[t]{| l | c | c | c | c | c | c | c | c | c | c | c | c |}
    \hline
    $m$ & 1 & 1 & 1 & 2 & 2 & 2 & 3 & 3 & 3 & 4 & 4 & 4 \\ \hline
    $\alpha$ & 1 & 1 & 2 & 1 & 1 & 2 & 1 & 1 & 2 & 1 & 1 & 2 \\ \hline
    Range & [-1,1] & [-2,2] & [-2,2] & [-1,1] & [-2,2] & [-2,2] & [-1,1] & [-2,2] & [-2,2] & [-1,1] & [-2,2] & [-2,2]
    \\ \hline
    (T1/T2) & 0s/18s & 0s/36s & 0s/19s & 0s/65s & 1s/287s & 1s/68s & 0s/178s & 2s/1669s & 1s/181s & 2s/332s & 
    20s/6065s & 7s/350s \\ \hline
  \end{tabular}
  }
  \caption{\footnotesize $\nsvt$ with $c = 1$}
  \end{subtable}
  \begin{subtable}[t]{\textwidth}\centering
  \resizebox{0.8\textwidth}{!}{%
  \begin{tabular}[t]{| l | c | c | c | c | c | c | c | c | c |}
    \hline
    $m$ & 2 & 2 & 2 & 3 & 3 & 3 & 4 & 4 & 4 \\ \hline
    $\alpha$ & 1 & 1 & 2 & 1 & 1 & 2 & 1 & 1 & 2 \\ \hline
    Range & [-1,1] & [-2,2] & [-2,2] & [-1,1] & [-2,2] & [-2,2] & [-1,1] & [-2,2] & [-2,2]\\ \hline
    (T1/T2) & 0s/65s & 0s/296s & 0s/66s & 0s/169s & 2s/1537s & 1s/170s & 1s/311s & 18s/5783s &3s/315s \\ \hline
  \end{tabular}
  }
  \caption{\footnotesize $\nsvt$ with $c = 2$}
  \end{subtable}
\end{table*}

\subsection{Improved Accuracy Bounds and Counterexamples}
\label{sec:better-bounds}
In
Tables~\ref{tab:better-bounds-svt},~\ref{tab:better-bounds-noisy-max},
and~\ref{tab:better-bounds-numeric}, we summarize the results of
verifying better accuracy bounds than those known in the literature
for $\svttwo$, $\svtone$, $\nmax$, and $\nsvt$. For each algorithm and
parameter choice, we display the best accuracy bound obtained by
searching for the largest integer $\kappa$ for which {\toolplus} was
able to verify $(\alpha,\nicefrac{\beta}{\kappa},\gamma)$-accuracy,
where $\alpha,\beta,\gamma$ are the corresponding values from the
experiments in Subsection~\ref{sec:scaling}. In each row, we include a
counterexample returned by the tool after a failed attempt to verify
$(\alpha,\nicefrac{\beta}{\kappa+1},\gamma)$-accuracy. Each
counterexample consists of both an $\epsilon$ and a specific input $u$
at which the accuracy check failed.

\begin{table*}[!t]
\caption{Accuracy is checked for all vectors of length $m$ with
  entries ranging over [-1,1]. Best column displays
  $\nicefrac{1}{\kappa}$, where $\kappa$ is the largest integer for which
  the tool verified accuracy. The counterexample column displays
  $\nicefrac{1}{\kappa+1}$ along with $\epsilon$ and input $u$ at
  which the accuracy check failed. We include $\beta$ for clarity
  because of its dependence on $c$. }
  \centering
  \begin{subtable}[t]{\textwidth}\centering
  \resizebox{0.8\textwidth}{!}{
  \begin{tabular}[t]{| c | c | c | c | c | c | c |}
    \hline
    Algorithm & $c$ & $\beta$ & Best & (T1/T2) & Counterexample & (T1/T2) \\ \hline
    $\svttwo$ & 1 & $6e^{-\nicefrac{\alpha\epsilon}{8}}$ & $\nicefrac{1}{6}$ & 0s/101s & $\nicefrac{1}{7}$, $u=[-1,-1,1]$, $\epsilon=\nicefrac{17}{10}$ & 0s/110s  \\ 

    $\svttwo$ & 2 & $12e^{-\nicefrac{\alpha\epsilon}{16}}$ & $\nicefrac{1}{12}$ & 0s/91s & $\nicefrac{1}{13}$, $u=[1,-1,1]$, $\epsilon=\nicefrac{50}{19}$ & 0s/95s  \\ \hline

    $\svtone$ & 1 & $6e^{-\nicefrac{\alpha\epsilon}{8}}$ & $\nicefrac{1}{6}$ & 0s/99s & $\nicefrac{1}{7}$, $u=[-1,-1,0]$, $\epsilon=\nicefrac{67}{106}$ & 0s/111s  \\ 

    $\svtone$ & 2 & $12e^{-\nicefrac{\alpha\epsilon}{16}}$ & $\nicefrac{1}{13}$ & 0s/220s & $\nicefrac{1}{14}$, $u=[-1,-1,-1]$, $\epsilon=\nicefrac{17}{13}$ & 0s/216s  \\ \hline

  \end{tabular}
  }
  \caption{\footnotesize Improved accuracy bounds for $\svttwo$ and
    $\svtone$, with
    $m=3$. 
  }
  \label{tab:better-bounds-svt}
\end{subtable}
\begin{subtable}[t]{\textwidth}\centering
  \resizebox{0.65\textwidth}{!}{
  \begin{tabular}[t]{| c | c | c | c | c | c |}
    \hline
    $m$ & $\beta$ & Best & (T1/T2) & Counterexample & (T1/T2) \\ \hline
    3 & $3e^{-\nicefrac{\alpha\epsilon}{2}}$ & $\nicefrac{1}{4}$ & 0s/154s & $\nicefrac{1}{5}$, $u=[-1,0,0]$, $\epsilon=\nicefrac{27}{82}$ & 0s/176s\\ 


    4 & $4e^{-\nicefrac{\alpha\epsilon}{2}}$ & $\nicefrac{1}{4}$ & 0s/874s & $\nicefrac{1}{5}$, $u=[0,0,1,0]$, $\epsilon=\nicefrac{52}{23}$ & 0s/910s\\ \hline

  \end{tabular}
  }
  \caption{\footnotesize Improved accuracy bounds for $\nmax$.
  }
  \label{tab:better-bounds-noisy-max}
\end{subtable}
\begin{subtable}[t]{\textwidth}\centering
  \resizebox{0.65\textwidth}{!}{
  \begin{tabular}[t]{| c | c | c | c | c | c | c |}
    \hline
    $\alpha$ & $c$ & $\beta$ & Best & (T1/T2) & Counterexample & (T1/T2) \\ \hline


    1 & 1 & $7e^{\nicefrac{-\epsilon}{9}}$ & $\nicefrac{1}{3}$ & 1s/174s & $\nicefrac{1}{4}$, $u=[-1,-1,1]$,
                                         $\epsilon=37$ & 1s/173s \\ 

    1 & 2 & $14e^{\nicefrac{-\epsilon}{18}}$ & $\nicefrac{1}{5}$ & 1s/162s & $\nicefrac{1}{6}$, $u=[-1,1,1]$,
                                         $\epsilon=59$ & 1s/167s \\ \hline
  \end{tabular}
  }
  \caption{\footnotesize Improved accuracy bounds for $\nsvt$, with
    $m=3$. 
  }
  \label{tab:better-bounds-numeric}
\end{subtable}

\end{table*}

	\section{Related work}
\paragraph{Accuracy proofs}
The Union Bound logic of \citet{BGGHS16-icalp} is a program logic for
upper bounding errors in probabilistic computation. 
It is a lightweight probabilistic program logic: assertions are
predicates on states, and probabilities are only tracked through an
index that accounts for the cumulative error. The Union bound logic
has been used to prove accuracy bounds for many of the algorithms
considered in this paper. However, the proofs must be constructed
manually, often at considerable cost. Moreover, all reasoning about
errors use union bounds, so precise bounds that use concentration
inequalities are out of scope of the logic. Finally, the proof system
is sound, but incomplete. For instance, the proof system cannot deal
with arbitrary loops. In contrast, our language allows for
arbitrary loops and we provide a decision procedure for accuracy.
Trace Abstraction Modulo Probability (TAMP) in \citet{SmithHA19}, is an
automated proof technique for accuracy of probabilistic programs. TAMP
generalizes the trace abstraction technique of \citet{HeizmannHP09} to the
probabilistic setting. TAMP follows the same lightweight strategy as
the union bound logic, and uses failure automata to separate between
logical and probabilistic reasoning. TAMP has been used for proving
accuracy of many algorithms considered in this paper. However, TAMP
suffers from similar limitations as the Union Bound logic (except of
course automation): it is sound but incomplete, and cannot deal with
arbitrary loops and concentration inequalities.

\paragraph{Privacy proofs}
There is a lot of work on verification and testing of privacy
guarantees~\cite{RP10,BKOZ13,GHHNP13,ZK17,AH18,BichselGDTV18,DingWWZK18,BartheCJS020}. We
refer to~\cite{BGHP16} for an overview.

\paragraph{Program analysis}
There is a large body of work that lifts to the probabilistic setting
classic program analysis and program verification techniques,
including deductive verification~\cite{Kozen85,Morgan96,Kaminski19},
model-checking~\cite{Katoen16,kwiatkowska2010advances}, abstract
interpretation~\cite{Monniaux00,CousotM12}, and static program
analysis~\cite{SankaranarayananCG13,WangHR18}. Some of these
approaches rely on advanced techniques from probability theory,
including concentration
inequalities~\cite{Sankaranaranyanan__2019__Concentration} and
martingales~\cite{ChakarovS13,ChatterjeeFNH16,BartheEFH16,KuraUH19,WangHR20}
for better precision.

These techniques can compute sound upper bounds of the error
probability for a general class of probabilistic programs. However,
this does not suffice to make them immediately applicable to our
setting, since our definition of accuracy involves the notion of {\distance}. Moreover, these works generally do not address the
specific challenge of reasoning in the theory of reals with
exponentials. Finally, these techniques cannot be used to prove the violation of accuracy claims; our approach can both prove and disprove accuracy claims.

\paragraph{Hyperproperties}
Hyperproperties~\cite{ClarksonS10} are a generalization of program
properties and encompass many properties of interest, particularly in
the realm of security and privacy. Our definition of accuracy falls in
the class of 3-properties, as it uses two executions of $det(P)$ for
defining distance to disagreement, and a third execution of $P$ for
quantifying the error.

There is a large body of work on verifying hyperproperties. While the
bulk of this literature is in a deterministic setting, there is a
growing number of logics and model-checking algorithms from
probabilistic hyperproperties~\cite{AbrahamB18,WangNBP19,DimitrovaFT20}.
To our best knowledge, these algorithms do not perform parametrized
verification, and cannot prove accuracy for all possible values of
$\epsilon$.

\section{Conclusions}
\label{sec:conclusions}
We have introduced a new uniform definition of accuracy, called
$(\alpha,\beta,\gamma)$-accuracy, for differential privacy
algorithms. This definition adds parameter $\alpha$, that
accounts for {\distance}, to the traditional parameters $\beta$ and
$\gamma$. This uniform, generalized definition can be used to unify
under a common scheme different accuracy definitions used in the literature that quantify the probability of getting approximately correct answers,
including \emph{ad hoc} definitions for classical algorithms such as
{\athres}, {\svttwo}, {\nsvt}, {\nmax} and others. Using the
$(\alpha,\beta,\gamma)$ framework of accuracy we were able to improve
the accuracy results for \nsvt.  We have shown that checking
$(\alpha,\beta,\gamma)$-accuracy is decidable for a non-trivial class
of algorithms with a finite number of real  inputs and outputs, that are
parametrized by privacy parameter $\epsilon$, described in
our expanded programming language \newlang, for all values of
$\epsilon$ within a given interval $\:I,$ assuming Schanuel's conjecture.
We have also shown that the problem of checking accuracy at a single input is
decidable under reasonable assumptions without assuming  Schanuel's
conjecture for programs in {\newlang}, even when the inputs and outputs
can take real values. This implies that checking accuracy is decidable for programs whose inputs and outputs take values in finite domains.
Finally, we presented experimental results
 implementing our approach by adapting {\tool} to check accuracy at specified inputs for
 {\athres}, {\svttwo}, {\nsvt} and {\nmax}.


In the future, it would be interesting to study how our decision
procedure could be used for automatically proving concentration
bounds, and how it could be integrated in existing frameworks for
accuracy of general-purpose probabilistic computations. It would also
be interesting to extend our results and the results
from \citet{BartheCJS020} to accommodate unbounded number of inputs and outputs, and other probability
distributions, e.g. Gaussian mechanism. On a more practical side, it
would be interesting to study the applicability of our techniques in
the context of the accuracy first approach from \citet{LigettNRWW17}.

\begin{acks}
The authors would like to thank anonymous reviewers for their interesting and useful comments. 
This work was partially supported by \grantsponsor{NSF}{National Science Foundation}{https://www.nsf.gov/} grants \grantnum{NSF}{NSF CNS 1553548}, \grantnum{NSF}{NSF CCF 1900924}, \grantnum{NSF}{NSF CCF 1901069} and \grantnum{NSF}{NSF CCF 2007428}. 
\end{acks}
     \ifdefined\AppendixTrue
     
     \else \newpage
     \fi
	\bibliography{header,main}
    
    \ifdefined\AppendixTrue
	\appendix
        \section{Proofs of Results from Section~\ref{sec:accuracy-ex}}
\label{sec:acc-ex-proofs}

We present the missing proofs from Section~\ref{sec:accuracy-ex}.

\subsection{Proof of Lemma~\ref{lem:nmax}}
 
Consider any input $u= (q_1,\ldots q_m)$. Now, we
consider two cases. The first case is when all elements in $u$ have
equal values. In this case, $\det(\mbox{\nmax})$ outputs $1$, the index
of the first element. It is easy to see that
$dd(\mbox{\nmax},u)=0$ and {\nmax} is trivially
$(\alpha,\beta,0)$-accurate at $u$ for all $\alpha\geq 0.$ The second
case, is when there are at least two elements in $u$ whose values are
unequal. Let $\s{max}$ be the maximum value of the elements in $x$ and
$\s{smax}$ be the maximum of all values other than the maximum
value in $u.$ It is easy to show that
$dd(\mbox{\nmax},u)\:=\frac{1}{2}(\s{max} - \s{smax}).$ Clearly
$dd(\mbox{\nmax},u)>0.$ Now it is enough to show that
{\nmax} is $(\alpha,\beta,0)$-accurate at $u$ for $\beta
\:=m\euler^{-\frac{\alpha\epsilon}{4}}$ and for all $\alpha<
\dd(\mbox{\nmax},u)\:=\frac{1}{2}(\s{max} - \s{smax}).$ Now consider
any $\alpha<\frac{1}{2}(\s{max} - \s{smax}).$ Let $j$ be the output of $\det(\mbox{\nmax})$ on
input $u$, i.e., $j$ is the smallest index of an element in $u$ whose
value is $\s{max}.$ Let $i$ be an index output by {\nmax}. 
From the
above mentioned result of \cite{BGGHS16-icalp}, we see that 
$Pr (q_j-q_i <
\frac{4}{\epsilon}\ln\frac{m}{\beta})\geq 1-\beta.$
Substituting $\beta\:=m\euler^{-\frac{\alpha\epsilon}{2}}$, after
simplification, we see
that $Pr (q_j-q_i <
\frac{4}{\epsilon}\ln\frac{m}{\beta})= Pr(q_j-q_i<2\alpha) =Pr(q_i=q_j)$, since
$2\alpha<\s{max} - \s{smax}.$ Now, $Pr(q_i=q_j)=Pr(i\in  B(j,u,0))\geq 1-\beta.$
Hence {\nmax} is $(\alpha,\beta,0)$-accurate.

\subsection{Proof of Theorem~\ref{thm:nsvt}}

	Consider any output $v=(v_1,...,v_k)\in \cV$ such  that
  $v_k\in \Reals$ and there are exactly $c$ values of $i$ such that
  $v_i\in \Reals.$ Let $I\:=\set{i\st v_i\in \Reals}.$ Let
  $J=\set{i\notin I\st 1\leq i\leq m}.$ Let $U'\:=
  (\det(NumericSparse))^{-1}(v).$ Fix $\alpha\in \Reals$ such that $\alpha>0.$
Consider any $u=(u_1,...,u_m)\in U' $ such that
$d(u,\:\cU-U')>\alpha.$ Now, consider any $i$ such that $i\leq k$ and $i\in J.$ Let
$w=(w_1,..,w_m)$ such that $w_i=T$ and for $j\neq i$, $w_j=u_j.$
Clearly $w\in \cU-U'$ and $d(u,w)>\alpha.$ Hence we see tat $u_i<T-\alpha.$
Now, consider any $i\in I$. For any $\delta>0$, let
$w_\delta=(w_1,...,w_m)$ such that $w_i = T-\delta$ and for $j\neq i$,
$w_j=u_j.$ Clearly $w_\delta \in \cU-U'$ and $d(u,w)>\alpha.$ Hence,
$u_i-w_i>\alpha$ and hence $u_i>T+\alpha-\delta.$ The last inequality
holds for every $\delta>0$ and hence $u_i\geq T+\alpha$. The above
argument shows that, $\forall i\in I,\:u_i\geq T+\alpha$ and $\forall
i\in J,i\leq k,\:u_i<T-\alpha.$ Now using the accuracy result for Sparse, we
see that {\nsvt}, on input $u$, generates an output
$o=(o_1,...,o_k)$ where $\forall i\in I,\:o_i\in \Reals$ and $\forall
j\leq k, j\in J,$ $o_i=\bot$, with probability $\geq 1-\beta'_1$ where
$\beta'_1$ is given by the equation
$$\alpha\:= \frac{9c}{\epsilon} (\ln k + \ln (\frac{2c}{\beta'_1}))$$
 Now, the above equation gives us,
 $\beta'_1\:=2kc\euler^{-\frac{\alpha\epsilon}{9c}}.$
Now, taking $\beta_1= 2mc\euler^{-\frac{\alpha\epsilon}{9c}}$, we see
that $\beta_1\geq \beta1'$ and {\nsvt} outputs an output
sequence of the form $o$ with probability $\geq 1-\beta_1.$ Now, using
accuracy result for the Laplacian mechanism, we see that with
probability $\geq 1-\beta_2$,  it is the case that
$\forall i\in I,\:|o_i-u_i|\leq \gamma$, where $\beta_2$ is given by
the equation
$$\gamma\:= \ln(\frac{c}{\beta_2})\frac{9c}{\epsilon}$$
Now, the above equation gives us $\beta_2=c\euler^{-\frac{\gamma
    \epsilon}{9c}}.$
Thus, we see that, on input $u$, {\nsvt} outputs an sequence in
$O=\set{o=(o_1,...,o_k)\st \forall i\in I,|o_i-u_i|\leq \gamma,\: \forall
  i\in J\:o_i=\bot}$ with probability $\geq (1-\beta_1)(1-\beta_2)\geq
1-(\beta_1+\beta_2).$ Taking $\beta=\beta_1+\beta_2$, we get that with
$$\beta = 2mc\euler^{-\frac{\alpha\epsilon}{9c}} + c\euler^{-\frac{\gamma
    \epsilon}{9c}}.$$  {\nsvt} outputs a sequence in $O$ on
input $u$ with probability $\geq 1-\beta.$

        \section{Proof of Theorem~\ref{thm:undec}}
\label{sec:undec-proof}

  We show the undecidability result for the Accuracy-at-an-input
  problem. Essentially, we reduce the problem of checking whether a
  2-counter machine terminates when started with both counter values
  being initially zero. Given a description of a 2-counter machine
  $M$, our reduction constructs a randomized program $P$ (which is
  independent of $\epsilon$) such that $P$ satisfies the accuracy
  condition iff $M$ does not halt when started with zero counter
  values. The reduction is similar to the one used in the proof for
  undecidability of checking differential privacy for randomized
  programs given in \citet{BartheCJS020}.

  Let $M$ be the given 2-counter machine. The program $P$ has five
  variables: $in,out,counter1,counter2$, and $state$. Here, $in,out$
  are the input and output variables, respectively. The variables
  $counter1,counter2$, and $state$ are used to capture the two counter
  values and the state of $M$, respectively. Initially
  $out,counter1,counter2$ are set to $0$ and $state$ is set to the
  initial state of $M$. If $in=1$ then $P$ sets $out$ to $1$ and
  terminates. If $in=0$ then P repeats the following step. If the
  variable $state$ denotes the final control state of $M$ then $P$
  sets $out$ to $1$ and terminates; otherwise, with probability
  $\frac{1}{2}$, $P$ sets $state$ to the final control state of $M$,
  and with probability $\frac{1}{2}$ it simulates one step of $M$
  using the values of $state, counter1$, and $counter2$. Since the
  value of $out$ at termination is the output value, it is easy to see
  that on input $0$ (i.e., initial value of $in$ is zero), the output
  value is $0$ with probability $1$ iff $M$ does not terminate when
  started with zero values in both counters. The deterministic program
  $\det(P)$ outputs $0,1$ on inputs $0$ and $1$, respectively.  Using
  our notation, $\cU\:=\cV\:=\set{0,1}.$ The distance function $d$ on
  $\cU$ is given by $d(0,1)=1$. The distance function $d'_u$ (for all
  input values $u$) on $\cV$ is given by $d'_u(0,1)=1.$ Now, it is
  easy to see that $P$ is $(0,0,0)$-accurate at input $0$ iff $M$ does
  not halt when started with zero counter values.
        
\section{Proof of Theorem~\ref{thm:condAcc}}
\label{app:condAcc}
The following observation shall be used in our result.
\begin{lemma}
\label{lem:lraisdip}
Let $\phi(\bar x)$ be a formula in the first-order theory of linear real arithmetic with $n$ free variables $\bar x.$ Then there is a deterministic {\newlang} program $P_\phi$ that has $n$ input  real variables and one output $\sdom$ variable such that   $P_\phi$ outputs $1$ on input $\bar a$ if $\phi(\bar a)$ is true over the set of reals and outputs $0$ otherwise. Furthermore, 
$P_\phi$ can be taken to be a program independent of $\epsilon.$
\end{lemma}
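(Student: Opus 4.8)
The plan is to eliminate quantifiers and then transcribe the resulting quantifier-free formula directly into straight-line {\newlang} code. First I would apply quantifier elimination for linear real arithmetic (the result of~\citet{tarski-real} recalled above, that $\folreal$ admits quantifier elimination) to obtain a quantifier-free formula $\psi(\bar x)$ equivalent to $\phi(\bar x)$ over $\rlin$ and with the same free variables $\bar x$. Thus $\psi$ is a Boolean combination of atomic formulas $\psi_1,\ldots,\psi_k$, each of the form $t_1 \sim t_2$ with $\sim \in \set{<,>,=,\leq,\geq}$, where $t_1,t_2$ are linear terms with rational coefficients over the free variables. This step is what lets the finite-variable, loop-free program avoid any unbounded search over the reals; without it the quantifiers of $\phi$ could not be simulated.

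Next I would observe that each such linear term is expressible as a {\newlang} real expression $R$: the grammar for $R$ permits rational scalar multiples $qR$, sums $R+R$, and the addition of rational constants $R+q$, so every rational linear combination of the $n$ real input variables is a legal real expression (a literal constant such as $0$ can be written as $0\cdot\dv$ for an available real input variable $\dv$). Hence each atomic formula $\psi_i$ is realized by a single Boolean assignment $\bv_i \gets R_1 \sim R_2$, which the syntax allows for real expressions. Crucially, because the compared expressions mention only input variables and rational constants --- and never a sampled variable --- this comparison is a \emph{deterministic} step in the DTMC semantics: its transition probability is the indicator of whether $R_1 \sim R_2$ holds at the given input, i.e.\ it is exactly $0$ or $1$.

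I would then assemble the $\bv_i$ according to the Boolean structure of $\psi$ using the $not$/$and$/$or$ fragment of Boolean expressions in a sequence of assignments $\bv_\psi \gets B$, ending with a variable $\bv_\psi$ that holds the truth value of $\psi(\bar a)$. A final conditional $\ifstatement{\bv_\psi}{\mathsf{out}\gets 1}{\mathsf{out}\gets 0}$ writes the answer into the output $\sdom$ variable $\mathsf{out}$ (when $n=0$ the formula is closed, $\psi$ reduces to a fixed truth value, and $P_\phi$ simply outputs the corresponding constant). The resulting program $P_\phi$ is loop-free, performs no sampling, and never mentions $\epsilon$, so it is deterministic and $\epsilon$-independent; it assigns each variable once, in order, respecting the ordering constraint and the \textbf{Bounded Assignments} restriction, which constrain only real and integer variables --- none of which are ever assigned here. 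By construction $P_\phi$ outputs $1$ precisely when $\psi(\bar a)$, hence $\phi(\bar a)$, holds.

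The argument is essentially bookkeeping once quantifier elimination is in hand; the only points needing care are (i) checking that the normalized atomic comparisons fall within the deliberately restricted real-expression and comparison syntax of {\newlang} --- in particular that integer and real expressions are never mixed, which holds since $\psi$ speaks only about the real inputs --- and (ii) confirming that a comparison between sample-free real expressions is interpreted as a deterministic transition by the semantics. I expect (ii) to be the only genuinely semantic observation; everything else is a direct syntactic transcription of $\psi$.
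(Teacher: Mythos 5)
Your proposal is correct and follows essentially the same route as the paper's (very terse) proof: apply quantifier elimination for $\folreal$ and then transcribe the resulting quantifier-free Boolean combination of linear comparisons directly into straight-line, sample-free {\newlang} code. You merely fill in the syntactic and semantic bookkeeping (expressibility of rational linear terms as real expressions, determinism of comparisons between unsampled expressions) that the paper dismisses with ``it is easy to see.''
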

\begin{proof}
The first order theory of reals enjoys quantifier-elimination. Hence, $\phi(\bar x)$ can be written as a quantifier-free formula with $n$ variables in conjunctive normal form. From this observation, it is easy to see that there is a  deterministic {\newlang} program $P_\phi$ that evaluates the truth of $\phi.$ 
\end{proof}

Recall that the distance to disagreement for an input $u$, $\dd(u)$ for the deterministic function $\det(P_\epsilon)$ is  the distance of the input $u$ to the set  of inputs $u'$ which output a different value that $u$ when $\det(P_\epsilon)$ is applied. We generalize this definition as follows.     

\begin{definition}
Let $f:\cU \rightarrow \cV$ be a function and let $d$ be a distance function on $\cU.$ Given $u \in \cU,$ let $\ddf{f}{d}:\cU \rightarrow \Reals^\infty$ be the function such that 
$\ddf{f}{d}(u)=\inf \set{d(u,u') \st f(u)\ne f(u')}.$
\end{definition}

We have the following observation.  
\begin{proposition}
\label{prop:diam}
Let $P_\epsilon$ be a {\newlang} program implementing the deterministic function $\det(P).$ Let $\din$ be an input distance function. If
$\det(P)$ and $\din$ are definable  in $\foreal$ then $\ddf{\det(P_\epsilon)}{\din}$ is definable in $\foreal.$ Further, if $\det(P)$ and $\din$ are definable  in $\folreal$
then so is $\ddf{\det(P_\epsilon)}{\din}.$ 
\end{proposition}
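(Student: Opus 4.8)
The plan is to prove Proposition~\ref{prop:diam} by a direct syntactic construction: from the formulas witnessing definability of $\det(P_\epsilon)$ and $\din$, I would build first-order formulas $\psi_{\dd}$ and $\varphi_{\dd}$ witnessing definability of $\ddf{\det(P_\epsilon)}{\din}$, and then observe that the only logical machinery the construction adds is Boolean combination, quantification, and comparison of reals, none of which introduces multiplication. Consequently the construction stays inside whichever of $\foreal$ or $\folreal$ we started in, yielding both claims at once.

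First I would fix notation. Write $f=\det(P_\epsilon)$, let $\psi_f,\varphi_f$ be the formulas defining $f$ (Definition~\ref{def:definable-func-simple}), and let $\psi_{\din},\varphi_{\din}$ be the formulas defining $\din$, recalling that, since $\din$ ranges over $\Reals^\infty$, its value is encoded as a pair $(z_1,z_2)\in\Reals^2$ with $z_1=0,z_2=r$ meaning $r\in\Reals$ and $z_1=1,z_2=0$ meaning $\infty$. Using these I would express the predicate
\[
\mathsf{Diff}(\bar x,\bar x')\ :=\ \exists \bar y\,\exists \bar y'\ \bigl(\varphi_f(\bar x,\bar y)\wedge \varphi_f(\bar x',\bar y')\wedge \bar y\neq\bar y'\bigr),
\]
capturing $f(u)\neq f(u')$, where $\bar y\neq\bar y'$ abbreviates the disjunction of coordinatewise disequalities. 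Writing $\mathsf{Dist}(\bar x,\bar x',z_1,z_2):=\varphi_{\din}(\bar x,\bar x',z_1,z_2)$ for the encoded value of $\din(u,u')$, I would take the graph formula $\varphi_{\dd}(\bar x,y_1,y_2)$ to be a disjunction of two cases. The finite case asserts $y_1=0$ together with a lower-bound clause
\[
\forall \bar x'\,\forall z_1\,\forall z_2\ \bigl((\mathsf{Diff}(\bar x,\bar x')\wedge \mathsf{Dist}(\bar x,\bar x',z_1,z_2))\to (z_1=1\vee z_2\geq y_2)\bigr)
\]
and a greatest-lower-bound clause
\[
\forall s\ \bigl(s>y_2\to \exists \bar x'\,\exists z_2\,(\mathsf{Diff}(\bar x,\bar x')\wedge \mathsf{Dist}(\bar x,\bar x',0,z_2)\wedge z_2<s)\bigr).
\]
The infinite case asserts $y_1=1\wedge y_2=0$ together with $\forall \bar x'\,(\mathsf{Diff}(\bar x,\bar x')\to \mathsf{Dist}(\bar x,\bar x',1,0))$, which uniformly covers both the empty-set case (no disagreeing $u'$) and the case where every disagreeing $u'$ sits at distance $\infty$. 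The domain formula is simply $\psi_{\dd}:=\psi_f$, since $\ddf{f}{d}$ is total on $\dom(f)=\cU$; note that $\mathsf{Diff}(\bar x,\bar x')$ already forces $\bar x'\in\dom(f)=\cU$ through $\varphi_f(\bar x',\bar y')$, so the quantified $\bar x'$ automatically lies in $\dom(\din)$.

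I would then verify correctness against Definition~\ref{def:definable-func-simple}. The lower-bound clause says $y_2$ is a lower bound of $S_u=\{\din(u,u'):f(u)\neq f(u')\}$ (with $\infty\geq y_2$ handled by the $z_1=1$ disjunct), while the greatest-lower-bound clause forces $y_2$ to be the infimum and, incidentally, forces $S_u$ to contain finite elements (instantiate $s=y_2+1$); a short argument then rules out any $y_2'\neq y_2$ satisfying both clauses, giving single-valuedness. The two disjuncts are mutually exclusive, since the infinite disjunct requires every disagreeing distance to be $\infty$, which falsifies the greatest-lower-bound clause, so $\varphi_{\dd}$ defines a genuine graph agreeing with $\ddf{f}{d}$. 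Finally, inspecting $\varphi_{\dd}$, the symbols it introduces beyond $\psi_f,\varphi_f,\psi_{\din},\varphi_{\din}$ are only $\wedge,\vee,\neg,\forall,\exists$ and the atomic comparisons $z_2\geq y_2$, $s>y_2$, $z_2<s$ together with the equalities $y_1=0$, $y_1=1$, $y_2=0$. None of these use multiplication, so if the input formulas lie in $\foreal$ (resp.\ $\folreal$) then so does $\varphi_{\dd}$, establishing both parts of the proposition.

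The main obstacle I anticipate is bookkeeping around the $\Reals^\infty\hookrightarrow\Reals^2$ encoding rather than anything conceptually deep: I must ensure that an $\infty$-valued distance counts as vacuously $\geq y_2$ in the lower-bound clause, that the greatest-lower-bound clause only ever witnesses with \emph{finite} distances (hence the explicit $z_1=0$ there), and that the $\infty$ case and the empty-$S_u$ case are both subsumed by the single universally quantified clause. A secondary point worth stating explicitly is that no appeal to quantifier elimination is needed for the construction itself: definability is closed under the Boolean and quantifier operations used, and quantifier elimination is relevant only as the background reason these theories are decidable.
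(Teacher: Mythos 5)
Your construction is correct, and it supplies a proof that the paper itself omits: Proposition~\ref{prop:diam} is stated in the appendix without any accompanying argument, so there is no paper proof to diverge from. The route you take --- expressing $f(u)\neq f(u')$ through the graph formula of $\det(P_\epsilon)$, then pinning down the infimum of the disagreement distances with a lower-bound clause plus a greatest-lower-bound clause, and handling the $\infty$/empty-set situation in a separate disjunct under the $(0,r)$ versus $(1,0)$ encoding of $\Reals^\infty$ --- is the standard first-order definition of an infimum and is evidently what the authors intend; your single-valuedness and mutual-exclusivity checks are the right ones, and your observation that the construction adds only Boolean connectives, quantifiers, and order comparisons (no multiplication) correctly delivers the $\folreal$ half of the statement for free. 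One small patch is needed: Definition~\ref{def:definable-func-simple} constrains $\varphi_f$ only on $\dom(f)$, so $\varphi_f(\bar x',\bar y')$ by itself does not force $\bar x'\in\cU$ as your parenthetical remark claims; you should conjoin $\psi_f(\bar x')$ (and, to be fully careful, the domain formula of $\din$ at $(\bar x,\bar x')$) into $\mathsf{Diff}$ so that spurious witnesses outside the domain cannot perturb the infimum. With that guard added, your formula defines $\ddf{\det(P_\epsilon)}{\din}$ exactly, in whichever of $\foreal$ or $\folreal$ the input formulas live.
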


We describe the decision procedure. Let $(\psi_\det,\phi_\det)$ be the formulas defining $\det(P_\epsilon),$ $(\psin,\phin)$ be the formulas defining $\din$, $(\psiout,\phiout)$ be the formulas defining $\dout$, and $(\psi_R,\phi_R)$ be the formulas defining $\reg.$

Let $P_\epsilon$ have $p$ input variables and $q$ output variables.
Let $\cU$ be the set of all valuations over the input variables and $\cV$ be the set of all valuations over the output variables. From the fact that $\dout$ is definable in $\folreal,$
it can be shown that there is a formula $\phi_{\leq\gamma}(\bar x, \bar y, \bar z, x_\gamma)$ in linear arithmetic with $p+2\,q+1$ variables such that $\phi_{\leq\gamma}(\bar u, \bar v, \bar{v}', c)$ holds iff $\bar u \in \cU, \bar v, \bar{v}' \in \cV$ and $\dout(\bar u, \bar v, \bar{v}') \leq c.$  From Lemma~\ref{lem:lraisdip}, it follows that there is a {\newlang} program $P_{\leq \gamma}$ with $p+2 q+1$ input variables that evaluates $\phi_{\leq\gamma}.$ 

Now, we construct a new {\newlang} program $P^{new}_\epsilon$ from $P_\epsilon.$  Let $\vrin$ and $\vrout$ represent the input and output variables of the program $P_\epsilon$.   $P^{new}_\epsilon$ has $p+q+1$ input variables, $\vrin,\ovr,$ and $\rv_\gamma$.  Here $\vrin$ are the input variables of $P_\epsilon,$  $\ovr$ are $q$ fresh variables and
are of the same type as the output variables of $P_\epsilon,$ and $\rv_\gamma$ is a fresh real variable.

Intuitively, $P^{new}_\epsilon$ will simulate $P_\epsilon$ and output $1$ if the output of the program $P_\epsilon$
is at most a distance $\rv_\gamma$ away from $\ovr$ and $0$ otherwise. $P^{new}_\epsilon$ proceeds as follows:
\begin{enumerate}
\item stores $\vrin$ and $\ovr$ into fresh real variables $\rvin$ and $\orv$ respectively,
\item runs $P_\epsilon,$
\item stores the output $\vrout$ in $\rvout,$
\item runs $P_{\leq\gamma}$ on $\rvin,\rvout,\ovr,\rv_\gamma,$ and
\item the output of $P_{\leq\gamma}$ is taken to be the output of  $P^{new}_\epsilon.$
\end{enumerate}
Now, consider the partial function $\Pr_{1,P^{new}_\epsilon}$ that maps an input of  $P^{new}_\epsilon$ to the probability of  $P^{new}_\epsilon$ outputting $1$ as in Lemma~\ref{lem:ProbDef}. Let $(\psipr,\phipr)$ be the formulas defining $\Pr_{1,P^{new}_\epsilon}$ as given by Lemma~\ref{lem:ProbDef}.

Let $\bar x$ be variables representing inputs to $\det(P_\epsilon)$ and let $\bar y$ represent the output of $\det(P_\epsilon)$ on input $\bar x.$
Let $(\psi_\dd,\phi_\dd)$ be the formulas defining the function $\ddf {\det(P_\epsilon)} {\din}$ as given by Proposition~\ref{prop:diam}.
We makes the following observations.
\begin{itemize}
\item the formula $\phi_\det(\bar x, \bar y)$ asserts that $\bar y$ is the output of $\det(P_\epsilon)$ on input $\bar x,$
\item the formula $\phi_\dd(\bar x, 1,0)$ asserts that the distance to disagreement for $\bar x$   is $\infty,$
\item the formula $\phi_\alpha (\bar x, x_\alpha) \equiv \phi_\dd(\bar x, 1,0) \vee \exists x_d.\ (\phi_\dd(\bar x, 0, x_d) \wedge (x_d>x_\alpha)),$
asserts that the distance to disagreement for $\bar x$ is $>x_\alpha.$

\item $\phipr(\bar x,\bar y, x_\gamma,\epsilon,x_{pr})$ asserts that the probability of  $P_\epsilon$ outputting a value on input $\bar x$ that is at most $x_\gamma$ away from $\bar y$ away is $x_{pr},$ and
\item $(x_{pr} \geq 1-x_\beta)$ asserts that $x_{pr}$ is greater than equal to $x_\beta.$
\end{itemize}
It is easy to see that the formula
\begin{align}
Acc(\bar x,\epsilon)&\equiv  \forall x_\alpha.\ \forall  x_\beta.\  \forall  x_\gamma.\ \forall  \bar y. \ \nonumber \\
                 & \qquad (\psi_R(x_\alpha,x_\beta,x_\gamma,\bar x,\epsilon) \wedge \phi_\alpha (\bar x, x_\alpha) \wedge \phi_\det(\bar x, \bar y)) \nonumber \\
                  & \qquad \qquad \rightarrow \forall x_{pr}.\  (\phipr(\bar x,\bar y, x_\gamma,\epsilon,x_{pr}) \rightarrow (x_{pr} \geq 1-x_\beta)) \label{eq:accx}
\end{align}
asserts the accuracy of $P_\epsilon$ at input $\bar x$ (for $\epsilon$.)
Thus, $P_\epsilon$ is  accurate if and only if the sentence
   $$ \forall \epsilon.\ \forall \bar x.\ (((\epsilon>0) \wedge \psi_\det(\bar x))  \rightarrow  Acc(\bar x,\epsilon))$$
   is true over the theory of reals with exponentiation.
Thus, the result follows.

\section{Proof of Corollary~\ref{cor:accpoint}}
Let  $Acc(\bar x,\epsilon)$ be the sentence constructed as in the proof of Theorem~\ref{thm:condAcc} as given in Section~\ref{app:condAcc}.  $P_\epsilon$ is  accurate if and only if the formula
   $$ \forall \epsilon. \ (((\epsilon>0) \wedge \psi_\det(\bar u))  \rightarrow  Acc(\bar u,\epsilon))$$ is true. The result follows.

\section{Proof of Theorem~\ref{thm:infpoint}}
\label{app:infpoint}
Let $f_\beta$ be defined using the formulas $(\psi_\beta, \phi_\beta).$
Let $\phi_\det(\bar x, \bar y), \phi_\alpha (\bar x, x_\alpha), \phipr(\bar x,\bar y, x_\gamma,\epsilon,x_{pr}),$ be as in the Proof of Theorem~\ref{thm:condAcc}. Observe that 
as $\phi_\det(\bar x, \bar y)$ is a linear arithmetic formula,  the formula $\phi_\det(\bar u, \bar y)$ is also a linear arithmetic formula. From the fact that $\phi_\det(\bar u, \bar y)$ is true for only one value of $\bar y$ and the fact that linear arithmetic enjoys quantifier-elimination, we can compute the unique vector $\bar v$ of rational numbers such that $\phi_\det(\bar u, \bar v)$ is true; note that, the fact that $\phi_\det$ is a formula in linear arithmetic ensures that $\bar{v}$ is rational. By definition,
$\bar v$ is the output of function $\det(P_\epsilon)$ on $\bar u$.

Now, accuracy of $P_\epsilon$ at $\bar u$  for $\epsilon$ is given by the formula:
\begin{align}
Acc_{\bar u}(\epsilon)&\equiv  \forall  x_\beta.\ 
                  (\phi_\beta (a,c,\bar u, \epsilon, x_\beta) \wedge \phi_\alpha (\bar u, a)) \nonumber \\
                  & \qquad  \rightarrow \forall x_{pr}.\  (\phipr(\bar u,\bar v, c,\epsilon,x_{pr}) \rightarrow (x_{pr} \geq 1-x_\beta)) \label{eq:accu}
\end{align}
Thus, $P_\epsilon$ is  accurate at $\bar u$ if and only if the formula 
   $ \forall \epsilon.\ \forall \bar x.\ (\epsilon>0) \wedge  Acc_{\bar u}(\epsilon)$ is true. Using the fact that $\bar u, \bar v, a, c,$ are rational, we see that  $\phipr(\bar u,\bar v, c,\epsilon,x_{pr})$,  $\phi_\beta (a,c,\bar u,\epsilon, x_\beta)$ are in $\lngreal.$ Hence, we can check that $P_\epsilon$ is  accurate at $\bar u$ or not. 

\section{Proof of Theorem~\ref{thm:fixedgamma}}
\label{app:fixedgamma}
We first make the following observation. 

\begin{proposition}
\label{prop:supcomp}
Let $\emptyset \ne S\subseteq \Reals^\infty$ be a  definable set in $\folreal$ when viewed as a subset of $\Reals^2$. Then either $\sup(S)$ is $\infty$ or a rational number.  Further, $\sup(S)$ can be computed.
\end{proposition}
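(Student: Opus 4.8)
The plan is to reduce the claim to the well-understood structure of one-variable definable sets in linear real arithmetic, after first disposing of the point $\infty$.

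First I would recall the encoding of $\Reals^\infty$ into $\Reals^2$ fixed earlier in the paper: each $r \in \Reals$ is identified with $(0,r)$ and $\infty$ with $(1,0)$. Let $\phi_S(x_1,x_2)$ be the $\folreal$-formula defining $S$ under this encoding. The element $\infty$ belongs to $S$ exactly when $\phi_S(1,0)$ holds in $\rlin$; since $\folreal$ is decidable by Tarski's theorem, this membership test is effective. If $\infty \in S$, then, because $\infty$ is the top element of $\Reals^\infty$, we immediately have $\sup(S) = \infty$ and are done. So assume $\infty \notin S$, whence $S = S_\Reals := \{r \in \Reals : \phi_S(0,r)\} \subseteq \Reals$, a nonempty set that is $\folreal$-definable as a subset of $\Reals$ via $\psi(x_2) := \phi_S(0,x_2)$.

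The heart of the argument is the structure theory of $\folreal$-definable subsets of $\Reals$. By \emph{effective} quantifier elimination for $\langle \Reals, 0, 1, +, < \rangle$, $\psi$ is equivalent to a computable quantifier-free formula, i.e.\ a Boolean combination of linear (in)equalities with integer coefficients in the single variable $x_2$. Each such atom defines a half-line or a point with a rational boundary, so $S_\Reals$ is a finite union of points and intervals whose endpoints lie in a finite, computable set of rationals $c_1 < \dots < c_k$; on each of the finitely many cells cut out by these breakpoints the truth value of $\psi$ is constant and can be evaluated at a sample point. It is crucial here that the signature contains only $+$ and not $\times$: were $S$ merely $\foreal$-definable, the breakpoints could be irrational algebraic numbers and the rationality conclusion would fail.

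From this cell decomposition I would read off $\sup(S)$ directly. If the rightmost unbounded cell $(c_k,\infty)$ satisfies $\psi$, then $S_\Reals$ is unbounded above and $\sup(S) = \infty$. Otherwise $S_\Reals \subseteq (-\infty, c_k]$, and its supremum is the rightmost rational breakpoint $c_i$ bounding a nonempty portion of $S_\Reals$ from the left; in every case (whether or not the sup is attained, open versus closed interval) this value is one of the $c_i$, hence rational, and it is computed by inspecting the already-computed quantifier-free formula. The only care needed, and the main obstacle to a fully rigorous write-up, is the entirely mechanical bookkeeping that selects the correct breakpoint and rules out the unbounded case; once the quantifier-free normal form is in hand this is routine.
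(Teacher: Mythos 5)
Your proof is correct, and it is the natural argument: the paper actually states Proposition~\ref{prop:supcomp} without proof, treating it as an observation, so there is no official proof to compare against. Your route --- test membership of $\infty$ via the fixed encoding of $\Reals^\infty$ into $\Reals^2$, then apply \emph{effective} quantifier elimination for $\rlin$ to reduce to a one-variable Boolean combination of linear atoms with rational breakpoints, and read $\sup(S)$ off the resulting cell decomposition --- is exactly what is needed, and your remark that the conclusion would fail for $\foreal$-definable sets (where breakpoints can be irrational algebraic) correctly identifies why the hypothesis is $\folreal$ rather than $\foreal$.
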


We proceed with the proof.  
Thanks to the fact that $\det(P_\epsilon)$ and $\din$ are definable in $\folreal,$ we have  that the function $\dd_{\det(P_\epsilon),\din}$ is definable in $\folreal$ (See Proposition~\ref{prop:diam}).  Thanks to Proposition~\ref{prop:supcomp}, we can conclude that the distance to disagreement of $\bar u$
is either $\infty$ or a rational number, and can be computed. Let us denote this distance by $\dd(\bar u).$ Observe that the definition of accuracy implies that we need to check accuracy for all $a< \dd(\bar u).$

Now consider the set,  $S_c= \set{a\st a\in \Reals^{>0}, a < \dd(\bar u), (a,c) \in \Iag(u)}.$ Clearly  $S_c$ is definable in $\folreal.$ If $S_c$ is empty, $P_\epsilon$ is trivially accurate at $\bar u.$ Note that emptiness of $S_c$ can be checked as $S_c$ is definable in $\folreal.$

Now assume $S_c$ is non-empty. $a^\mx_c=\sup(S_c)$ is either $\infty$ or a rational number and can be computed as a consequence of Proposition~\ref{prop:supcomp}. 

Let $\bar v$ be the output of $\det (P_\epsilon).$ As in Theorem~\ref{thm:infpoint}, $\bar v$ can  be computed. Let $P^{new}_\epsilon$ be the program as constructed in the proof of Theorem~\ref{thm:condAcc}.
It is easy to see that the limit-definability of  $\reg$ and the anti-monotonicity of $f_\beta$  implies that $(a^\mx_c,c,\bar u,\epsilon)\in domain(h_\beta)$ for each $\epsilon>0$ and that checking accuracy at input $\bar u$ is equivalent to checking if $P^{new}_\epsilon$ outputs $1$ on input $\bar u,\bar v, c$ with probability $\geq 1-h_\beta(a^\mx_c,c,\bar u,\epsilon).$ Thanks to Lemma~\ref{lem:ProbDef} and the parametric definability of $h_\beta$, this can be expressed as a formula in $\lngreal.$





\section{Proof of Theorem~\ref{thm:finiteoutputs}}
\label{app:finiteoutputs}

We describe the decision procedure. Let  $P_\epsilon$ have $p$ inputs and $q$ outputs. 
Let $\bar v$ be the output of $\det(P_\epsilon)$ on $\bar u.$ As in the Proof of Theorem~\ref{thm:infpoint}, $\bar v$ can be computed. 
As in the proof of Theorem~\ref{thm:fixedgamma}, the distance to disagreement for $\bar u$, can be computed.  For the rest of the proof we denote this by $\dd(\bar u).$
Observe also that since the first-order theory of linear arithmetic enjoys quantifier-elimination,  $\dout(\bar u,\bar v,\bar{v}')$ is either $\infty$ or a rational number for each $\bar{v}'$,  which can be computed (See Proposition~\ref{prop:supcomp}).

Now, let $0=c_0<c_1\cdots < c_j$ be the distinct numbers in $\Reals^\infty$ such that $c_i = \dout(\bar u,\bar v,\bar{v}'_i)$ for some $\bar{v}'_i.$ Let $S_0\subsetneq S_1\cdots \subsetneq S_j=\cV$ be sets such that 
for each $i$, $S_i$ is the set of outputs which are at most distance $c_i$ away from $\bar v.$ We observe that for any $c\geq 0$, the ball $B(\bar v, \bar u, c)$ must be one of the sets $S_1,\ldots, S_j.$ For each $i,$ let $p_i(\epsilon)$ denote the probability of  $P_\epsilon$ outputting an element in the set $S_i$ (as a function of $\epsilon$). $p_i$ is parametrically definable in $\threal$ as a consequence of Lemma~\ref{lem:ProbDef}. Note that $p_j(\epsilon)$ is $1$ since we assume all {\newlang} programs terminate with probability $1.$

Let $O$ be the set of all indices $\ell$ such that there   exists  $(a,c)\in \Iag(\bar u)$ with $c_{\ell}\leq c<c_{\ell+1}$ and  $a<\dd(\bar u).$   
Observe that this set can be computed thanks to the fact that  $\Iag$ is $\folreal$ definable. 

 Assume that $O$ is non-empty. For $\ell \in O,$
let $Z_\ell=\set{(a,c)\st a<\dd(\bar u), c_\ell\leq c < c_{\ell+1}, (a,c) \in \Iag}.$ Let $c^\mx_\ell$ be $\sup \set{c \st \exists a.  (a,c) \in Z_\ell}.$  Let $a^\mx_\ell = \sup \set{a \st \exists c.  (a,c) \in Z_\ell}.$ Note that $c^\mx_\ell$ and $a^\mx_\ell$ can be computed to Proposition~\ref{prop:supcomp}.
\begin{claim}
\label{claim:monotonicty}
We have the following for each $\ell\in O$.
\begin{enumerate}
    \item There is a  non-decreasing sequence $\set{(a_k,c_k)}^\infty_{i=0}$ such that $(a_k,c_k)\in Z_\ell$ and 
          $\lim_{k\to\infty} (a_k,c_k) = (a^\mx_\ell,c^\mx_\ell).$
    \item 
    $(a^\mx_\ell,c^\mx_\ell,\bar u,\epsilon) \in domain (h_\beta)$ for each $\epsilon>0,$
    \item For each $(a,c)\in Z_\ell$ and $\epsilon>0,$ $f(a,c,\bar u,\epsilon)\geq h(a^\mx_\ell,c^\mx_\ell,\bar u,\epsilon).$
\end{enumerate}
\end{claim}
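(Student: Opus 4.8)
The plan is to prove the three items in order, with the first doing all the real work and the other two following by reusing its construction together with the limit-definability and anti-monotonicity hypotheses on $\reg$. First I would record the structural facts that drive the argument. Since $\reg$ is limit-definable, $\Tag(x_\alpha,x_\gamma,\bar x)$ is a linear-arithmetic formula, so for the fixed rational input $\bar u$ the set $\Iag(\bar u)=\set{(a,c)\st \Tag(a,c,\bar u)}$ is a semilinear subset of the plane (a Boolean combination of rational half-planes), and hence so is $Z_\ell=\Iag(\bar u)\cap \set{(a,c)\st a<\dd(\bar u),\ c_\ell\leq c<c_{\ell+1}}$. For $c\in[c_\ell,c_{\ell+1})$ set $g(c)=\sup\set{a\st (a,c)\in Z_\ell}$ and observe that this is exactly $\sup(I_{<\dd(\bar u),c}(\bar u))$, the quantity appearing in Definition~\ref{def:agmonotonic} with the cut-off taken to be $\dd(\bar u)$ (the case $\dd(\bar u)=\infty$ is handled by taking the monotone limit over finite cut-offs). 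Thus $(\alpha,\gamma)$-monotonicity states precisely that $g$ is non-decreasing on the feasible set $C=\set{c\st (a,c)\in Z_\ell \text{ for some } a}$, while $a^\mx_\ell=\sup_{c\in C} g(c)$ and $c^\mx_\ell=\sup C$.

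For the first item I would build the witnessing sequence from the graph of $g$. Pick a non-decreasing sequence $c_k\in C$ with $c_k\to c^\mx_\ell$ (eventually constant if the supremum is attained, strictly increasing otherwise, and allowing $c^\mx_\ell=\infty$). Monotonicity of $g$ makes $g(c_k)$ non-decreasing, and since the $c_k$ are cofinal in $C$ we get $g(c_k)\to \sup_{c\in C}g(c)=a^\mx_\ell$. Hence $(g(c_k),c_k)$ is a non-decreasing sequence converging to $(a^\mx_\ell,c^\mx_\ell)$ in the closure of $Z_\ell$. Because $Z_\ell$ is semilinear, each slice $\set{a\st (a,c_k)\in Z_\ell}$ is a finite union of intervals whose supremum $g(c_k)$ is approached from inside $Z_\ell$; perturbing $g(c_k)$ downward by a null sequence $\delta_k\to 0$ chosen small enough to preserve monotonicity yields an honest non-decreasing sequence $\set{(a_k,c_k)}$ in $Z_\ell\subseteq \Iag(\bar u)$ still converging to $(a^\mx_\ell,c^\mx_\ell)$. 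The main obstacle is exactly this last step — keeping the chosen $a$-coordinates non-decreasing while remaining inside $Z_\ell$ — and it is precisely here that $(\alpha,\gamma)$-monotonicity is indispensable: without it the corner $(a^\mx_\ell,c^\mx_\ell)$ need not even lie in the closure of $Z_\ell$ (as for a region shaped like $\set{(a,c)\st a,c\geq 0,\ a+c<1}$), whereas with it the piecewise-linear structure of $Z_\ell$ lets me settle the attainment edge cases cleanly.

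The second item is then immediate: the sequence produced above is a non-decreasing sequence in $\Iag(\bar u)$ converging to $(a^\mx_\ell,c^\mx_\ell)$, which is exactly the membership condition placing $(a^\mx_\ell,c^\mx_\ell,\bar u,\epsilon)$ in the domain of the limit extension $h_\beta$, for every $\epsilon>0$.

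For the third item I would reuse the construction but additionally force the sequence above the given point. Fix $(a,c)\in Z_\ell$ and $\epsilon>0$. Using that $g$ is non-decreasing and $g(c)\geq a$, choose a non-decreasing sequence $(a'_k,c'_k)\in Z_\ell$ converging to $(a^\mx_\ell,c^\mx_\ell)$ with $(a'_k,c'_k)\geq (a,c)$ coordinatewise for all $k$ (when $c=c^\mx_\ell$ the supremum in the $c$-coordinate is attained, so I keep $c'_k=c$ fixed and only raise the $a$-coordinate; otherwise I take $c'_k\in C\cap [c,c^\mx_\ell)$ increasing to $c^\mx_\ell$). Since both $(a,c)$ and $(a'_k,c'_k)$ lie in $\Iag(\bar u)$, condition~5 of Definition~\ref{def:simple} (anti-monotonicity of $f_\beta$ in $(\alpha,\gamma)$) gives $f_\beta(a'_k,c'_k,\bar u,\epsilon)\leq f_\beta(a,c,\bar u,\epsilon)$ for every $k$. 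By limit-definability the left-hand side converges to $h_\beta(a^\mx_\ell,c^\mx_\ell,\bar u,\epsilon)$, the value being independent of the approximating sequence. Passing to the limit yields $h_\beta(a^\mx_\ell,c^\mx_\ell,\bar u,\epsilon)\leq f_\beta(a,c,\bar u,\epsilon)$, which is the inequality asserted by the third item (the $f$ and $h$ of the claim being $f_\beta$ and its limit extension $h_\beta$).
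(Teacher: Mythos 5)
Your items (1) and (2) are fine and amount to a cleaner repackaging of the paper's construction: where the paper builds the approximating sequence by an explicit induction that pins $(a_{i+1},c_{i+1})$ within $\frac{1}{i+2}$ of $(a^\mx_\ell,c^\mx_\ell)$, you encode the same information in the slice\mbox{-}supremum function $g(c)=\sup\set{a'\st (a',c)\in Z_\ell}$ and perturb below the (possibly unattained) suprema. This works, provided you keep each $a_k$ strictly below $g(c_k)$ so that the next slice always has room for an element above $a_k$; with that care, (1) and (2) are correct.

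The gap is in item (3), at the step ``choose a non-decreasing sequence $(a'_k,c'_k)\in Z_\ell$ converging to $(a^\mx_\ell,c^\mx_\ell)$ with $(a'_k,c'_k)\geq(a,c)$ coordinatewise,'' justified by ``$g$ is non-decreasing and $g(c)\geq a$.'' The inequality $g(c')\geq a$ only says the \emph{supremum} of the slice at $c'$ is at least $a$; it does not produce an element of that slice that is $\geq a$. Concretely, if the slice at $c$ is the singleton $\set{a}$ while the slices at every $c'>c$ are $[0,a)$, then $g\equiv a$ is non-decreasing and $(\alpha,\gamma)$-monotonicity holds for every cutoff, yet no point of $Z_\ell$ other than $(a,c)$ itself dominates $(a,c)$, so the dominating sequence you need does not exist and condition~5 of Definition~\ref{def:simple} cannot be invoked. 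This is exactly where the paper does its real work: it derives all three items from a single sequence that \emph{starts} at $(a_0,c_0)=(a,c)$, and the existence of the next dominating point $(a_1,c_1)$ is obtained not from monotonicity of $g$ alone but by contradiction --- if every point of $Z_\ell$ with $c$-coordinate in $\Gamma$ had $a$-coordinate below $\max(a_0,a^\mx_\ell-\frac{1}{2})$, then $(\alpha,\gamma)$-monotonicity would push that bound onto all of $Z_\ell$, contradicting the definition of $a^\mx_\ell$. Note that this argument requires $a<a^\mx_\ell$; the boundary case $a=a^\mx_\ell$ (which is precisely the configuration above) is treated separately in the paper by asserting that $(a^\mx_\ell,c')\in Z_\ell$ for $c'\in\Gamma$, i.e.\ that the supremum is actually attained in the later slices. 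Your write-up needs an argument of this kind --- exhibiting dominating points of $Z_\ell$, or ruling out the non-attainment configuration --- rather than reading domination off the suprema.
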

\begin{proof} (Proof of claim)
Fix $(a,c)\in Z_\ell.$ Let $a_0=a, c_0=c.$ Consider the set $\Gamma=\set{c'\st c'\geq c_0, c^\mx_\ell -\frac 1 2 \leq c' \leq c^\mx_\ell \mbox{ and } \exists a'. (a',c')\in Z_\ell}.$

Assume first $a<a^\mx_\ell.$ Note that $\Gamma$ is non-empty thanks to the definition of $c^\mx_\ell.$ Now, we claim that there must be a $(a_1,c_1) \in Z_\ell$ such that 
$c_1\in \Gamma, a_0 \leq a_{1}$ and $a^\mx_\ell - \frac 1 2 \leq a_{1} \leq a^\mx_\ell.$ If no such $a_1,c_1$ exists then by definition, we have that if $(a',c') \in Z_\ell$  for $c'\in \Gamma$ then $a'< \max(a_0,a^\mx_\ell - \frac 1 2).$ Now, it is easy to see that $(\alpha,\gamma)$-monotonicity of $\reg$ implies that whenever $(a',c')\in Z_\ell$ then $a'< \max(a_0,a^\mx_\ell - \frac 1 2).$ This, in turn,  implies that 
$a^\mx_\ell \leq \max(a,a^\mx_\ell - \frac 1 2)$ which is a contradiction.

Hence, there is $(a_1,c_1)\in Z_\ell$ such that 
$a_0 \leq a_{1}, a^\mx_\ell - \frac 1 2 \leq a_{1} \leq a^\mx_\ell, c_0 \leq c_{1}$ and $c^\mx_\ell - \frac 1 2 \leq c_{1} \leq c^\mx_\ell. $  
Proceeding in a similar fashion, we can show by induction, that for each $i,$ there exist $(a_{i+1},c_{i+1})\in Z_\ell$ such that $a_i \leq a_{i+1}, a^\mx_\ell - \frac 1 {i+2}\leq a_{i+1} \leq a^\mx_\ell, c_i \leq c_{i+1}$ and $c^\mx_\ell - \frac 1 {i+2} \leq c_{i+1} \leq c^\mx_\ell. $  Note that 
$\set{(a_i,c_i)}^\infty_{i=0}$ is the desired sequence in Part 1 of the claim. The other two parts are an easy consequence of limit-definability and anti-monotonicty of $f_\beta$ respectively. 

Now, assume $a=a^\mx_\ell.$ Thanks to $(\alpha,\gamma)$-monotonicity and the definition of $a^\mx_\ell$, it is easy to see that $(a^\mx_\ell,c') \in Z_\ell$ whenever $c'\in \Gamma.$ Further, we can construct a sequence $c_0<c_1<c_2,\ldots$ such that $c_i\in \Gamma$ and $\lim_{i\to\infty} c_i = c^\mx_\ell.$ The first part of the claim will now follow by taking $a_i=a^\mx_\ell$ for each $i.$  The other two parts an easy consequence of limit-definability and anti-monotonicty of $f_\beta$ respectively.
\end{proof}
From limit-definability of $\reg$ and Claim~\ref{claim:monotonicty}, it follows that if $P_\epsilon$ is accurate at $\bar u$ then for each  $\ell\in O$  and $\epsilon>0,$  
 \begin{align}
  p_\ell(\epsilon)\geq 1- h_\beta(a^\mx_\ell,c^\mx_\ell,\bar u, \epsilon). \label{eqn:max}
 \end{align} 
Observe that this check can be performed for each $\ell\in O$ since these checks can be encoded as a sentence in $\lngreal.$ Hence, the Theorem will follow if we can show that it suffices to check Equation~\ref{eqn:max} for all $\ell\in O.$ 

Assume that Equation~\ref{eqn:max} holds for all $\ell\in O.$ Fix $a,c \in \Reals^{\geq 0}$ such that  $(a,c) \in \Iag(\bar u).$ 
If $c\geq c_j$, $P_\epsilon$ is  $(a,f_\beta(a,c,\epsilon),c)$-accurate trivially since the probability of obtaining an output in the set $S_j$ is $1.$ 

Otherwise let $k$ be the unique $ \ell $ such $c_\ell \leq c < c_{\ell+1}.$ If $k $ is not in the set $O$ then  $P_\epsilon$ is $(a,f_\beta(a,c,\epsilon),c)$-accurate trivially as $a\geq \dd(\bar u)$. Otherwise note 
that by Claim~\ref{claim:monotonicty}, 
 $f_\beta(a,c,\epsilon) \geq h_\beta(a^\mx_k,c^\mx_k,\bar u, \epsilon).$ $(a,f_\beta(a,c,\epsilon),c)$-accuracy now follows from the fact that the check given by Equation~\ref{eqn:max} succeeds for $\ell=k$. The concludes the proof of the Theorem.

        \fi
        
\end{document}